\newtheorem{claim}{Claim}
\newtheorem{theorem}{Theorem}
\newtheorem{lemma}{Lemma}
\newtheorem{property}{Property}
\newcommand{\p}{\pi}
\newcommand{\ph}{\frac{\pi}{2}}
\newcommand{\pt}{\frac{3\pi}{2}}
\newcommand{\C}{\mathcal C}
\newcommand{\nh}{N_{\frac{\pi}{2}}}
\newcommand{\nt}{N_{\frac{3\pi}{2}}}
\newcommand{\nf}{N_{2\pi}}
\newcommand{\na}{N_{\alpha}}
\newcommand{\opvr}{OPVR\xspace}
\newcommand{\bG}{\overline{G}}
\begin{document}

\title{Ortho-polygon Visibility Representations\\of Embedded Graphs
\thanks{Research of EDG, WD, GL and FM  supported in part by the MIUR project AMANDA, prot. 2012C4E3KT\_001. NSERC funding is gratefully acknowledged for WE and SW.}}


\author{E. Di Giacomo\thanks{Universit{\`a} degli Studi di Perugia, Italy, \texttt{\{name.surname\}@unipg.it}} \and W. Didimo\footnotemark[1] \and W. S. Evans\thanks{University of British Columbia, Canada, \texttt{will@cs.ubc.ca}} \and G. Liotta\footnotemark[1] \and H. Meijer\thanks{University College Roosevelt, The Netherlands, \texttt{h.meijer@ucr.nl}} \and F. Montecchiani\footnotemark[1] \and S. K. Wismath\thanks{University of Lethbridge, Canada, \texttt{wismath@uleth.ca}}}

\date{}

\maketitle

\begin{abstract}
An ortho-polygon visibility representation of an $n$-vertex embedded graph $G$ (\opvr of $G$) is an embedding-preserving drawing of $G$ that maps every vertex  to a distinct orthogonal polygon and each edge to a vertical or horizontal visibility between its end-vertices. The vertex complexity of an \opvr of $G$ is the minimum $k$ such that every polygon has at most $k$ reflex corners. We present polynomial time algorithms that test whether $G$ has an \opvr and, if so, compute one of minimum vertex complexity. We argue that the existence and the vertex complexity of an \opvr of $G$ are related to its number of crossings per edge and to its connectivity. More precisely, we prove that if $G$ has at most one crossing per edge (i.e., $G$ is a 1-plane graph), an \opvr of $G$ always exists while this may not be the case if two crossings per edge are allowed. Also, if $G$ is a 3-connected 1-plane graph,  we can compute  an  \opvr of $G$ whose vertex complexity is bounded by a constant in $O(n)$ time. However, if $G$ is a 2-connected 1-plane graph, the vertex complexity of any \opvr of $G$  may be $\Omega(n)$. In contrast, we describe a family of 2-connected 1-plane graphs for which an embedding that guarantees constant vertex complexity can be computed in $O(n)$ time. Finally, we present the results of an experimental study on the vertex complexity of ortho-polygon visibility representations of 1-plane graphs.

\end{abstract}

\section{Introduction}

\emph{Visibility representations} are among the oldest and most studied methods to display graphs. The first papers appeared between the late 70s and the mid 80s, mostly motivated by VLSI applications (see, e.g.,~\cite{Duchet1983319,ov-grild-78,DBLP:journals/dcg/RosenstiehlT86,TamassiaTollis86,t-prg-84,DBLP:conf/compgeom/Wismath85})). These papers were devoted to \emph{bar visibility representations (BVR)} of planar graphs, where the vertices are modeled as non-overlapping horizontal segments, called \emph{bars}, and the edges correspond to vertical visibilities, i.e., vertical segments that do not intersect any bar other than at their end points.
The study of  visibility  representations of non-planar graphs started about ten years later when \emph{rectangle visibility representations} were introduced in the computational geometry and graph drawing communities (see, e.g.,~\cite{DBLP:journals/dam/DeanH97,DBLP:journals/comgeo/HutchinsonSV99,DBLP:journals/ipl/KantLTT97,DBLP:conf/cccg/Shermer96}). In a rectangle visibility representation every vertex is represented as an axis-aligned rectangle and two vertices are connected by an edge using either horizontal or vertical visibilities. Figure~\ref{fi:rvr} is an example of a rectangle visibility representation of the complete graph $K_5$. Rectangle visibility representations are an attractive way to draw a non-planar graph: Edges are easy to follow because they do not bend and can have only one of two possible slopes, edge crossings are perpendicular, textual labels associated with the vertices can be inserted in the rectangles.

Motivated by the NP-hardness of recognizing whether a graph admits a rectangle visibility representation~\cite{DBLP:conf/cccg/Shermer96}, Streinu and Whitesides~\cite{DBLP:conf/stacs/StreinuW03} initiated the study of rectangle visibility representations that must respect a set of topological constraints. They proved that if a graph $G$ is given together with the cyclic order of the edges around each vertex, the outer face, and a horizontal/vertical direction for each edge, then there exists a polynomial-time algorithm to test whether $G$ admits a rectangle visibility representation that respects these constraints. Biedl {\em et al.}~\cite{SoCG} have recently shown that testing the representability of $G$ is polynomial also with a different set of topological constraints, namely when $G$ is given with an embedding  that must be preserved in the rectangle visibility representation (an embedding specifies the cyclic order of the edges around each vertex and around each crossing, and the outer face). In these constrained settings, however, even structurally simple ``almost planar'' graphs may not admit a  rectangle visibility representation. For example, although the embedded graph of Figure~\ref{fi:flow} is 1-plane (i.e., it has at most one crossing per edge), it does not admit an embedding-preserving rectangle visibility representation~\cite{SoCG}.

\begin{figure}[tb]
    \centering
    \begin{minipage}[b]{.3\textwidth}
    	\centering
    	\includegraphics[scale=1,page=1]{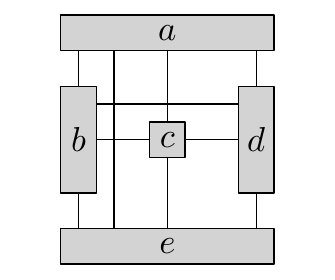}
    	\subcaption{}\label{fi:rvr}
    \end{minipage}
    \hfil
    \begin{minipage}[b]{.3\textwidth}
    	\centering
    	\includegraphics[scale=1,page=2]{figures/example2}
    	\subcaption{}\label{fi:flow}
    \end{minipage}
    \hfil
    \begin{minipage}[b]{.3\textwidth}
    	\centering
    	\includegraphics[scale=1,page=3]{figures/example2}
    	\subcaption{}\label{fi:opvr}
    \end{minipage}
    \caption{(a) A rectangle visibility representation of $K_5$. (b) An embedded graph $G$ that does not admit an embedding-preserving rectangle visibility representation. (c) An embedding-preserving \opvr of $G$ with vertex complexity one.}
\end{figure}

In this paper we introduce a generalization of rectangle visibility representations, we study to what extent such a generalization enlarges the family of graphs that are representable, and we describe testing and drawing algorithms. Let $G$ be an embedded graph. An \emph{ortho-polygon visibility representation} of  $G$ (\emph{\opvr} of $G$) is an embedding-preserving drawing of $G$ that maps each vertex to an orthogonal polygon, disjoint from the others, and each edge to a vertical or horizontal visibility between its end-vertices. For example, Figure~\ref{fi:opvr} is an embedding-preserving \opvr of the graph in Figure~\ref{fi:flow}.
In Figure~\ref{fi:opvr} all vertices except two are rectangles: The non-rectangular vertices have a reflex corner each; intuitively, each of them is ``away from a rectangle'' by one reflex corner. We say that the \opvr of Figure~\ref{fi:opvr} has \emph{vertex complexity} one. More  generally, we say that an \opvr has vertex complexity $k$, if $k$ is the maximum number of reflex corners over all vertex polygons in the representation.
We are not only interested in characterizing and testing what graphs admit an \opvr, but we also aim at computing representations of minimum vertex complexity (rectangle visibility representations if possible).
The main results in this paper can be listed as follows.

\begin{itemize}

\item In Section~\ref{se:characterization} we present a combinatorial characterization of the graphs that admit an embedding-preserving \opvr. The characterization leads to an $O(n^2)$-time algorithm that tests whether an embedded graph $G$ with $n$ vertices admits an embedding-preserving \opvr. If the test is affirmative, we also show that an embedding-preserving \opvr of $G$ with minimum vertex complexity can be computed in $O(n^\frac{5}{2} \log^\frac{3}{2}n)$ time. An implication of this characterization is that any 1-plane graph admits an embedding-preserving \opvr.

\item In Sections~\ref{se:background-edgepartitions} and~\ref{se:3conn-bounds} we prove that every 3-connected 1-plane graph admits an \opvr whose vertex complexity is bounded by a constant and that this representation can be computed in $O(n)$ time. This implies an $O(n^{\frac{7}{4}}\sqrt{\log n})$-time algorithm to compute {\opvr}s of minimum vertex complexity for these graphs. Biedl {\em et al.}~\cite{SoCG} proved that not every 3-connected 1-plane graph has a representation with zero vertex complexity, and in fact we also show a lower bound of two for infinitely many graphs of this family.

\item In Section~\ref{se:2conn-bounds} we study 2-connected 1-plane graphs. Note that not every 2-connected 1-plane graph can be augmented to become a 3-connected 1-plane graph, which has a strong impact on the vertex complexity of the corresponding {\opvr}s. Indeed, we prove that an embedding-preserving {\opvr} of a 2-connected 1-plane graph may require $\Omega(n)$ vertex complexity. On the positive side, for a special family of 2-connected 1-plane graphs we show that an embedding that guarantees constant vertex complexity can be computed in $O(n)$ time.

\item In Section~\ref{se:experiments} we present the results of an extensive experimental study on {\opvr}s of 1-plane graphs. This study aims at estimating both the vertex complexity of these drawings in practice and the percentage of vertices that are not represented as rectangles.
\end{itemize}

Section~\ref{se:preliminaries} contains preliminary definitions. In Section~\ref{se:tsm} we recall the basic ideas behind the Topology-Shape-Metrics framework, a key ingredient for the results presented throughout the paper. Conclusions and open problems are in Section~\ref{se:conclusions}.

We conclude this introduction by recalling that 1-planar graphs have been the subject of a rich literature in recent years. Particular attention has been given to   recognition and complexity problems (see, e.g.,~\cite{DBLP:conf/wads/BannisterCE13,DBLP:journals/siamcomp/CabelloM13,DBLP:journals/tcs/EadesHKLSS13,DBLP:journals/jgt/KorzhikM13}), straight-line drawings (see, e.g.,~\cite{DBLP:conf/gd/AlamBK13,t-rdg-JGT88}), right-angle crossing drawings (see, e.g.,~\cite{DBLP:conf/gd/DidimoL0M16,DBLP:journals/dam/EadesL13}), and visibility representations (see, e.g,~\cite{SoCG,DBLP:journals/jgaa/Brandenburg14,DBLP:journals/jgaa/Evans0LMW14}); see also~\cite{DBLP:journals/corr/KobourovLM17} for additional references and topics. In addition, two recent papers~\cite{DBLP:journals/tcs/EvansLM16,DBLP:journals/ipl/LiottaM16} study visibility representations of non-planar graphs where the edges are horizontal and vertical lines of sight and each vertex consists of two segments sharing an end-point. These representations can be turned into {\opvr}s of vertex complexity one by replacing the two segments of each vertex with an arbitrarily thin orthogonal polygon with one reflex corner.

\section{Preliminaries}\label{se:preliminaries}

A \emph{drawing} $\Gamma$ of a graph $G=(V,E)$ is a mapping of the vertices of $V$ to points of the plane, and of the edges in $E$ to Jordan arcs connecting their corresponding endpoints but not passing through any other vertex. We only consider \emph{simple} drawings, i.e., drawings such that two arcs representing two edges have at most one point in common, and this point is either a common endpoint or a common interior point where the two arcs properly cross each other. $\Gamma$ is \emph{planar} if no edge is crossed. A \emph{planar graph} is a graph that admits a planar drawing.

A planar drawing of a graph subdivides the plane into topologically connected regions, called \emph{faces}. The infinite region is the \emph{outer face}. A \emph{planar embedding} of a planar graph is an equivalence class of planar drawings that define the same set of faces. A \emph{plane graph} is a planar graph with a given planar embedding. Let $f$ be a face of a plane graph $G$. The number of vertices encountered in the closed walk along the boundary of $f$ is the \emph{degree} of $f$ and is denoted as $\deg(f)$. If $G$ is not 2-connected, a vertex may be encountered more than once, thus contributing  more than one unit to the degree of the face (see Figure~\ref{fi:face-degree}).  
The concept of a planar embedding is extended to non-planar drawings as follows. Given a non-planar drawing $\Gamma$, replace each crossing with a dummy vertex. The resulting planarized drawing has a planar embedding. An \emph{embedding} of a (non-planar) graph $G$ is an equivalence class of  drawings whose planarized versions have the same planar embedding.
An \emph{embedded graph} $G$ is a graph with a given embedding: An \emph{embedding-preserving} drawing $\Gamma$ of $G$ is a drawing of $G$ whose embedding coincides with that of $G$.

\begin{figure}[tb]
    \centering
    \begin{minipage}[b]{.3\textwidth}
    	\centering
    	\includegraphics[scale=1, page=1]{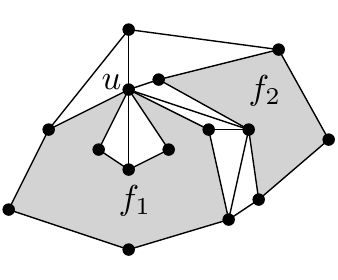}
    	\subcaption{}\label{fi:face-degree}
    \end{minipage}
    \hfil
    \begin{minipage}[b]{.3\textwidth}
    	\centering
    	\includegraphics[scale=1]{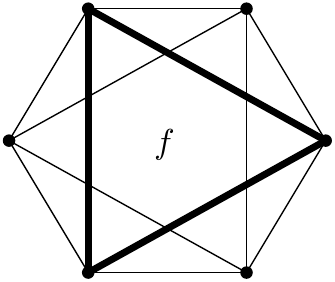}
    	\subcaption{}\label{fi:thickness2}
    \end{minipage}
    \caption{ (a) Face $f_1$ has degree 10 (since $u$ is counted twice), while face $f_2$ has degree 5. (b) An embedded graph that does not admit an embedding-preserving \opvr.}
\end{figure} 

A \emph{bar visibility representation (BVR)} of a plane graph $G$ maps the vertices of $G$ to non-overlapping horizontal segments, called \emph{bars}, and the edges of $G$ to vertical \emph{visibilities}.  A visibility is a vertical segment that does not intersect any bar other than those at its end-points. A BVR is \emph{strong} if each visibility between two bars corresponds to an edge of the graph, while it is \emph{weak} if visibilities between bars of non-adjacent vertices may occur. 

An \emph{orthogonal polygon} is a simple polygon whose sides are axis-aligned. A \emph{corner} of an orthogonal polygon is a point of the polygon where a horizontal and a vertical side meet. A corner is a \emph{reflex corner} if it forms a $\pt$ angle inside the polygon. An \emph{ortho-polygon visibility representation} (\emph{\opvr}) of a graph $G$ maps each vertex $v$ of $G$ to a distinct orthogonal polygon $P(v)$ and each edge $(u,v)$ of $G$ to a vertical or horizontal visibility connecting $P(u)$ and $P(v)$ and not intersecting any other polygon $P(w)$, for $w \not \in \{u,v\}$. The intersection points between visibilities and polygons are the \emph{attachment points}. As in many papers on visibility representations~\cite{DBLP:journals/ipl/KantLTT97,DBLP:conf/stacs/StreinuW03,TamassiaTollis86,DBLP:conf/compgeom/Wismath85}, we assume the $\varepsilon$-visibility model, where the segments representing the edges can be replaced by strips of non-zero width; this implies that an attaching point never coincides with a corner of a polygon. An \opvr is on an \emph{integer grid} if all its corners and attachment points have integer coordinates.
Given an \opvr, we can extract a drawing from it as follows. For each vertex $v$, place a point inside polygon $P(v)$ and connect it to all the attachment points on the boundary of $P(v)$; this can be done without creating any crossings and preserving the circular order of the edges around the vertices. Thus, we refer to an \opvr as a drawing and we extend all the definitions given for drawings to {\opvr}s. An \opvr $\gamma$ of an embedded graph is \emph{embedding-preserving} if the drawing extracted from $\gamma$ is embedding-preserving. 
When computing an \opvr we would like to use polygons that are not ``too complex'', ideally only rectangles. 
The \emph{vertex complexity} of an \opvr is the maximum number of reflex corners over all vertex polygons in the representation. An \emph{optimal \opvr} is an \opvr with minimum vertex complexity. In what follows, if this leads to no confusion, we shall use the term \emph{edge} to indicate both an edge and the corresponding visibility, and the term \emph{vertex} for both a vertex and the corresponding polygon.      

\section{The Topology-Shape-Metrics Framework}\label{se:tsm}

The \emph{topology-shape-metrics (TSM)} framework was introduced by Tamassia~\cite{t-eggmnb-87} to compute \emph{orthogonal drawings} of graphs (see also Chapter 5 in~\cite{dett-gdavg-99}). In an orthogonal drawing of a degree-4 graph each edge is a polyline of horizontal and vertical segments. A \emph{bend} is a point shared by two consecutive segments of an edge. An angle formed by two consecutive segments incident to the same vertex is a \emph{vertex-angle}; an angle at a bend is a \emph{bend-angle}. The following basic property holds~\cite{dett-gdavg-99}.

\begin{property}\label{pr:face}
Let $f$ be a face of an orthogonal drawing and let $\na(f)$ be the number of angles (vertex-angles and bend-angles) of value $\alpha$ inside $f$, with $\alpha \in \{\frac{\pi}{2},\frac{3\pi}{2}, 2\pi\}$. Then:
$\nh(f)-\nt(f)-2\nf(f) = 4$ if $f$ is an internal face and $\nh(f)-\nt(f)-2\nf(f) = - 4$ if $f$ is the outer face.
\end{property}

Given a degree-4 graph $G$, the TSM computes an orthogonal drawing $\Gamma$ of $G$ with a minimum number of bends. It works in three steps. The first step, called \emph{planarization}, computes an embedding of $G$ and replaces crossing points with dummy vertices. The resulting plane graph $G'$ has $n+c$ vertices, where $n$ and $c$ are the number of vertices and crossings of $G$, respectively. The second step, called \emph{orthogonalization}, computes an \emph{orthogonal representation} $H$ of $G'$, which specifies the values of all vertex-angles and the sequence of bend-angles along each edge. It defines an ``orthogonal shape'' of the final drawing, without specifying the length of the edge segments (a more precise definition of orthogonal representations can be found in Appendix~\ref{ap:orthorep-flow}). $H$ is computed by means of a flow network $N$, where each unit of flow corresponds to a $\ph$ angle. Each \emph{vertex-node} in $N$ corresponds to a vertex of $G'$ and  supplies 4 units of flow; each \emph{face-node} in $N$ corresponds to a face of $G'$ and demands an amount of flow proportional to its degree. Bends along edges correspond to units of flow transferred across adjacent faces of $G'$ through the corresponding arcs of $N$, and each bend has a unit cost in $N$ (more details can be found in Appendix~\ref{ap:orthorep-flow}). Network $N$ is constructed in $O(n+c)$ time since it has $O(n+c)$ nodes and arcs. Also, it always admits a feasible flow. A feasible flow $\Phi$ of cost $b$ of $N$ defines an orthogonal representation $H$ of $G'$ with $b$ bends, and \emph{vice versa}. The third step, called \emph{compaction}, computes an orthogonal drawing that preserves the shape defined by $H$, by assigning node and bend coordinates. It takes $O(n+c+b)$ time and the resulting drawing lies on an integer grid of size $O(n+c+b) \times O(n+c+b)$.

\section{Test and Optimization for Embedded Graphs}\label{se:characterization}

Any embedded graph $G$ that admits an \opvr is biplanar, i.e., its edge set can be bicolored so that each color class induces a planar subgraph (for example, color the horizontal edges of an \opvr of $G$ red and the vertical edges blue). However, a biplanar embedded graph $G$ may not have an embedding-preserving \opvr. An example is given in Figure~\ref{fi:thickness2} (thin and bold edges define the two color classes). The boundary of face $f$ in the figure contains six edge crossings and no vertices. In any \opvr of $G$, each crossing forms a $\ph$ angle inside $f$, thus the orthogonal polygon representing $f$ would have six $\ph$ corners and no $\pt$ corners in its interior, which is impossible.

In the following we first describe an algorithm that, given an embedded graph $G$ that admits an embedding-preserving \opvr, computes an optimal \opvr of $G$ (Lemma~\ref{le:opt}). Then, we describe a topological characterization of the embedded graphs that admit an embedding-preserving \opvr (Lemma~\ref{le:characterization}). This leads to an efficient testing algorithm and it implies that the embedded graphs with at most one crossing per edge, i.e., the \emph{1-plane graphs}, always admit an embedding-preserving \opvr. Both our results extend the topology-shape-metrics framework to handle {\opvr}s.

\begin{figure}[tb]
\centering
    \begin{minipage}[b]{.48\textwidth}
    	\centering
    	\includegraphics[width=1\textwidth, page=1]{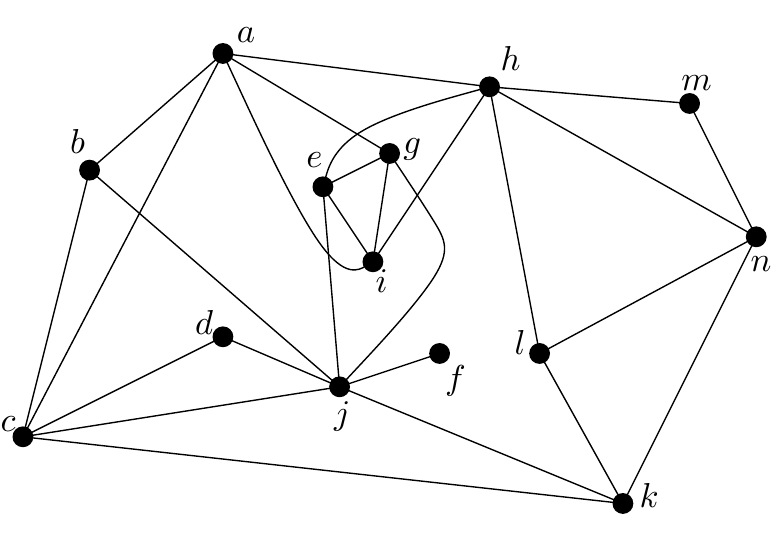}
    	\subcaption{$G$}\label{fi:characterization-1}
    \end{minipage}
    \begin{minipage}[b]{.48\textwidth}
    	\centering
    	\includegraphics[width=1\textwidth, page=3]{characterization}
    	\subcaption{$\bG$}\label{fi:characterization-3}
    \end{minipage}
    \begin{minipage}[b]{.48\textwidth}
    	\centering
    	\includegraphics[width=1\textwidth, page=5]{characterization}
    	\subcaption{$\gamma$}\label{fi:characterization-5}
    \end{minipage}
    \begin{minipage}[b]{.48\textwidth}
    	\centering
    	\includegraphics[width=1\textwidth, page=6]{characterization}
    	\subcaption{$\Gamma$}\label{fi:characterization-6}
    \end{minipage}
    \caption{(a) An embedded graph $G$ and (b) its planarized expansion $\bG$.  (c) An \opvr $\gamma$ of $G$ and (d) the orthogonal drawing $\Gamma$ obtained from $\gamma$.}
\end{figure}

\smallskip \noindent{\bf Our Approach.} To exploit the TSM framework, we define a new plane graph $\bG$ obtained from the input embedded graph $G$ as follows (refer to Figures~\ref{fi:characterization-1} and~\ref{fi:characterization-3}). Replace each vertex $v$ with a cycle $C(v)$ of $d=\deg(v)$ vertices, so that each of these vertices is incident to one of the edges formerly incident to $v$, preserving the circular order of the edges around $v$. If $d=1$ or $d=2$, $C(v)$ is a self-loop or a pair of parallel edges, respectively. $C(v)$ is the \emph{expansion cycle} of $v$; the vertices and the edges of $C(v)$ are the \emph{expansion vertices} and the \emph{expansion edges}, respectively. Also, replace crossings with \emph{dummy vertices}. $\bG$ is called the \emph{planarized expansion of $G$}. The edges of $\bG$ that are not expansion edges are the \emph{real edges}. Note that a real edge of $\bG$ corresponds either to an  uncrossed edge of $G$ or to a portion of a crossed edge of $G$. Clearly, each expansion vertex has degree 3 and each dummy vertex has degree 4.  The next lemma and properties immediately follow (see also Figs~\ref{fi:characterization-5} and~\ref{fi:characterization-6}).

\begin{lemma}\label{le:opvr-od}
An embedded graph $G$ admits an embedding-preserving \opvr if and only if $\bG$ admits an orthogonal representation with the following properties: {\bf P1.} Each vertex-angle inside an expansion cycle has value $\p$. {\bf P2.} Each real edge has no bend.
\end{lemma}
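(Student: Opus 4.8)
The plan is to establish the equivalence by exhibiting an explicit, structure-preserving correspondence between the geometric objects of an \opvr of $G$ and the combinatorial objects of an orthogonal representation of $\bG$. The dictionary I have in mind is: the orthogonal polygon $P(v)$ corresponds to the closed curve bounding the expansion cycle $C(v)$; the attachment points on $P(v)$ correspond to the expansion vertices of $C(v)$; the corners of $P(v)$ correspond to the bends along the expansion edges, with convex corners matching $\ph$ bend-angles and reflex corners matching $\pt$ bend-angles (so that the correspondence is in fact vertex-complexity preserving); and each visibility corresponds to a real edge. Under this dictionary, P1 and P2 are precisely the two defining features of an \opvr expressed in the language of orthogonal representations: a visibility attaches to the flat interior of a polygon side (P1), and a visibility is a single axis-aligned segment (P2). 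I would then prove the two implications separately.

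\emph{Necessity.} Given an embedding-preserving \opvr $\gamma$ of $G$, I read off an orthogonal representation of $\bG$ directly from the geometry. Place each expansion vertex of $C(v)$ at the corresponding attachment point on the boundary of $P(v)$, following the circular order of the edges around $v$; let each expansion edge be the portion of $\partial P(v)$ between two consecutive attachment points, and let each real edge be the corresponding visibility. Since every polygon is orthogonal and every visibility is horizontal or vertical, all segments are axis-aligned and all angles are multiples of $\ph$, so this is a genuine orthogonal representation. Property P2 is immediate because visibilities are straight segments. For P1, the $\varepsilon$-visibility model guarantees that an attachment point never coincides with a corner of $P(v)$; hence the two expansion edges meeting at an expansion vertex are collinear on the polygon-interior side, giving a vertex-angle of value $\p$ there. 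Finally, since $\gamma$ is embedding-preserving and $\bG$ is built from the embedding of $G$, the planarization of $\gamma$ realizes the embedding of $\bG$, so the representation is consistent with it.

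\emph{Sufficiency.} Conversely, given an orthogonal representation $H$ of $\bG$ satisfying P1 and P2, I would first apply the compaction step of the TSM framework to obtain an actual orthogonal drawing $\Gamma$ of $\bG$ realizing $H$. Because $\Gamma$ is a planar drawing, each expansion cycle $C(v)$ is drawn as a simple closed axis-aligned curve; by P1 every expansion vertex lies on a straight part of this curve, so the only corners are the $\ph$- and $\pt$-bends of the expansion edges, and $C(v)$ therefore bounds a simple orthogonal polygon $P(v)$ whose interior is the face of $\bG$ enclosed by $C(v)$. By P2 every real edge is a single horizontal or vertical segment; interpreting it as a visibility and reinstating the crossings hidden at the dummy vertices yields the edges of $G$. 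The flatness at expansion vertices (P1) makes each such segment attach perpendicularly to the interior of a polygon side, consistent with the $\varepsilon$-visibility model. It then remains to check the defining properties of an \opvr: the polygons $P(v)$ are pairwise disjoint and no visibility meets a polygon other than at its two endpoints, both of which follow from the planarity of $\Gamma$ (distinct vertex faces have disjoint boundaries, and real edges cross nothing except at dummy vertices); and the extracted drawing preserves the embedding of $G$ because $\Gamma$ realizes the embedding of $\bG$.

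I expect the main obstacle to lie in the sufficiency direction, specifically in verifying that the combinatorial data really yield a legal \opvr: namely that each $C(v)$ closes up into a \emph{simple} orthogonal polygon with a well-defined interior serving as the vertex region, and that the real edges, once the dummy vertices are removed, become visibilities that do not pierce any polygon. These facts rest on the planarity of $\Gamma$ together with the angle-consistency of the orthogonal representation (Property~\ref{pr:face}), which is exactly what makes the expansion cycles closeable and the whole configuration geometrically realizable; the degenerate low-degree cases, where $C(v)$ is a self-loop ($d=1$) or a pair of parallel edges ($d=2$), would be handled as minor special cases.
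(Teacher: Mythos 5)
Your proposal is correct and follows essentially the same route as the paper's proof: both directions rest on the same dictionary (attachment points $\leftrightarrow$ expansion vertices, polygon corners $\leftrightarrow$ bends on expansion edges, visibilities $\leftrightarrow$ real edges), with necessity obtained by replacing attachment and crossing points by vertices, and sufficiency by taking an orthogonal drawing $\Gamma$ realizing $H$ and replacing degree-4 vertices by crossings and the remaining vertices by attachment points. Your version merely spells out details the paper leaves implicit (the $\varepsilon$-visibility argument for \textbf{P1}, and planarity of $\Gamma$ forcing disjoint polygons and non-piercing visibilities).
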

\begin{proof}
Let $\gamma$ be an embedding-preserving \opvr of $G$ (see, e.g., Figure~\ref{fi:characterization-5}). Replace each attachment point and each crossing point with a vertex (see Figure~\ref{fi:characterization-6}). The resulting drawing is a planar orthogonal drawing, whose orthogonal representation satisfies properties {\bf P1} and {\bf P2}.
In the other direction, assume that $\bG$ admits an orthogonal representation $H$ that satisfies {\bf P1} and {\bf P2}, and let $\Gamma$ be an orthogonal drawing with orthogonal representation $H$. Then to obtain an embedding-preserving \opvr of $G$, we replace each degree-4 vertex of $\Gamma$ by a crossing point, and every other vertex by an attachment point. In other words, each expansion cycle is replaced by the polygon representing it in $\Gamma$, and each edge of $G$ is represented by a visibility segment.
\end{proof}

\begin{property}\label{pr:bg-1}
If $G$ is biplanar, for each face $f$ of $\bG$ that is not an expansion cycle, $\deg(f) \geq 4$.
\end{property}
\begin{proof}
The faces of $\bG$ that are not expansion cycles arise from the faces of $G$.
Since $G$ is simple, every face $f$ of $G$ has degree at least three. If $\deg(f) \geq 4$, then $f$ clearly gives rise to a face of degree at least four in $\bG$. If $\deg(f)=3$ then $f$ cannot consist of crossing points only, otherwise there would be three mutually crossing edges and $G$ would not be biplanar. Hence, $f$ has at least one vertex on its boundary, and this vertex will correspond to two expansion vertices in $\bG$; then the face arising from $f$ in $\bG$ has degree at least four.
\end{proof}

\begin{property}\label{pr:bg-2}
If $G$ admits an embedding-preserving \opvr, then for every internal face $f$ of $\bG$ consisting only of dummy vertices, $\deg(f)=4$.
\end{property}
\begin{proof}
Suppose that $G$ has an embedding-preserving \opvr and that $f$ is an internal face of $\bG$ formed by dummy vertices only. By Lemma~\ref{le:opvr-od}, $\bG$ has an orthogonal representation with no bend on the edges of $f$, and all the vertex-angles inside $f$ have value $\ph$. Due to Property~\ref{pr:face} this implies that $\deg(f)=4$.
\end{proof}

\begin{lemma}\label{le:opt}
Let $G$ be an $n$-vertex embedded graph that admits an embedding-preserving \opvr. There exists an $O(n^{\frac{5}{2}}\log^{\frac{3}{2}} n)$-time algorithm that computes an embedding-preserving optimal \opvr $\gamma$ of $G$. Also, $\gamma$ has the minimum number of total reflex corners among all embedding-preserving optimal {\opvr}s of $G$.
\end{lemma}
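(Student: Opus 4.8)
The plan is to recast the optimization as a min-cost flow problem on the planarized expansion $\bG$ and to exploit Lemma~\ref{le:opvr-od}. The first step is to identify, in any orthogonal representation of $\bG$ satisfying {\bf P1} and {\bf P2}, what the reflex corners of a vertex polygon $P(v)$ actually are. By {\bf P1} every vertex-angle inside the expansion-cycle face $f_v$ bounded by $C(v)$ equals $\p$, hence it is flat and contributes no corner; by {\bf P2} only the expansion edges of $C(v)$ may bend. Therefore each corner of $P(v)$ is a bend lying on $\partial f_v$, and the reflex corners of $P(v)$ are exactly the bend-angles of value $\pt$ inside $f_v$. Writing $r(v)$ for this number, the vertex complexity of the corresponding \opvr is $\max_v r(v)$ and its total number of reflex corners is $\sum_v r(v)$, so both quantities become functions of an orthogonal representation of $\bG$.

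Next I would build the TSM flow network of $\bG$ described in Section~\ref{se:tsm} and restrict it to the representations permitted by Lemma~\ref{le:opvr-od}. Property {\bf P1} is imposed by fixing to $2$ the flow on every arc from an expansion vertex to its expansion-cycle face (forcing a $\p$ angle there), and {\bf P2} by setting the capacity of the bend arcs of the real edges to $0$. Feasible integral flows of the resulting network $N'$ are then in bijection with the embedding-preserving {\opvr}s of $G$. In $N'$ the value $r(v)$ equals the total flow entering the face-node $f_v$ across the arcs joining $f_v$ to its neighboring faces, because a $\pt$ angle inside $f_v$ is a $\ph$ angle on the opposite side of the shared expansion edge, i.e.\ one unit of flow directed into $f_v$.

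To minimize the max-objective $\max_v r(v)$ I would binary search on a target value $k$. For fixed $k$ I bound $r(v)\le k$ with a gadget: reroute all arcs entering $f_v$ from adjacent faces through a new node $a_v$, followed by a single arc $a_v\to f_v$ of capacity $k$. Flow conservation at $a_v$ forces the incoming bend-flow of $f_v$ to be at most $k$, while the fixed vertex-angle arcs and the outgoing arcs of $f_v$ are left untouched, so the correspondence with bounded-complexity {\opvr}s is preserved. A feasible flow exists in the modified network if and only if $G$ admits an \opvr of vertex complexity at most $k$, which I test by a single feasibility (max-flow) computation; the smallest feasible $k^*$ is the optimal vertex complexity. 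Since a bend-minimal orthogonal representation of $\bG$ already has $O(n)$ reflex corners in total, $k^*=O(n)$ and the search needs only $O(\log n)$ rounds. Finally, fixing $k=k^*$ and placing unit cost on the arcs $a_v\to f_v$ (so that the cost of a flow equals $\sum_v r(v)$), a min-cost feasible flow yields an orthogonal representation of $\bG$ that is optimal and, among optimal ones, uses the fewest reflex corners; the TSM compaction step then converts it into the required \opvr $\gamma$ on an integer grid, and Lemma~\ref{le:opvr-od} guarantees it is embedding-preserving.

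The running time follows from performing $O(\log n)$ flow computations on $N'$, whose size is linear in that of $\bG$, plus one min-cost flow at the threshold $k^*$; instantiating the known min-cost flow bound for these networks gives the claimed $O(n^{\frac52}\log^{\frac32}n)$, with the precise exponents being routine bookkeeping once the flow subroutine is fixed. I expect the main obstacle to be the min-max nature of the objective: the classical TSM framework only minimizes the \emph{total} number of bends, so the genuinely new ingredients are (i) the gadget that turns a per-face sum of bend-flows into a single capacitated arc without introducing spurious feasible flows, and (ii) the argument that a single min-cost flow computed at the optimal threshold $k^*$ simultaneously attains minimum vertex complexity and minimum total number of reflex corners.
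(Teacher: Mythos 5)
Your construction is essentially the paper's own: fix the vertex-angle flows to enforce {\bf P1}, forbid bends on real edges to enforce {\bf P2}, cap the per-polygon reflex corners by intercepting the incoming bend arcs of each expansion-cycle face-node with a capacitated gadget arc, binary search on the cap, and take a min-cost feasible flow at the optimal cap (your choice of charging only the gadget arcs is equivalent to the paper's charging of all bend arcs, since every orthogonal polygon has exactly four more convex than reflex corners, so minimizing total bends and minimizing total reflex corners are the same objective). The genuine gap is in the running-time analysis, which is part of the statement being proved. You run the flow computations on a network whose size is ``linear in that of $\bG$'', i.e.\ $O(n+c)$ where $c$ is the number of crossings; but the lemma concerns general embedded graphs, not 1-plane ones, and a biplanar embedded graph admitting an \opvr can have $c=\Theta(n^2)$ crossings (e.g.\ $\Theta(n)$ horizontal visibilities each crossing $\Theta(n)$ vertical ones). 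With $m''=\Theta(n^2)$ arcs, the min-cost flow bound $O(\chi^{\frac{3}{4}}m''\sqrt{\log n''})$ with $\chi=O(n^2)$ yields $O(n^{\frac{7}{2}}\sqrt{\log n})$, not the claimed $O(n^{\frac{5}{2}}\log^{\frac{3}{2}}n)$; this is not ``routine bookkeeping''. The paper closes exactly this hole before invoking the flow algorithm: it eliminates all vertex-nodes (their flows are fixed, so their supplies can be transferred to the incident face-nodes), and then deletes every face-node whose face consists of dummy vertices only --- legitimate because such a face must have degree exactly four whenever an \opvr exists (Property~\ref{pr:bg-2}), so it is forced to be a rectangle and its node is isolated with zero demand. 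What remains are face-nodes whose faces contain at least one expansion vertex; since there are $O(n)$ expansion vertices, each on at most three faces, the reduced network has $O(n)$ nodes and arcs, and only then do the quoted flow bounds give the claimed complexity.

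A second, minor flaw: to bound the binary-search range you assert that a bend-minimal orthogonal representation of $\bG$ satisfying {\bf P1}--{\bf P2} has $O(n)$ reflex corners in total. This is unjustified (the generic argument that each unit of flow traverses each face-node at most once only bounds the total cost by $O(n^2)$). Fortunately you do not need it: in the reduced network the total supply is $4n$, so the flow across any single gadget arc --- and hence the optimal vertex complexity $k^*$ --- is trivially at most $4n$, which already gives the $O(\log n)$ rounds. Apart from these two points, your variant of the search (max-flow feasibility tests during the binary search, with a single min-cost computation at the threshold $k^*$) is sound and, once run on the $O(n)$-size network, would in fact fit comfortably within the stated time bound.
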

\begin{proof}
Since $G$ admits an embedding-preserving \opvr, it is biplanar. Hence it has $m \leq 6n - 12$ edges.
By Lemma~\ref{le:opvr-od}, an \opvr of $G$ can be found by computing an orthogonal representation of $\bG$ that satisfies {\bf P1} and {\bf P2}. This can be done by computing a feasible flow in the Tamassia flow network $N$ associated with $\bG$, subject to the following constraints: $(i)$ Every arc of $N$ from a vertex-node to a face-node has fixed flow 2 if the face-node corresponds to an expansion cycle (which implies a $\p$ angle inside the cycle), and fixed flow 1 otherwise (which implies a $\ph$ angle inside the face); $(ii)$ Arcs between two face-nodes such that neither corresponds to an expansion cycle of $\bG$ are removed (to avoid bends on the real edges). A feasible flow for $N$ may not correspond to an optimal \opvr. To minimize the vertex complexity we construct a different flow network as follows. 

The amount of flow moved from a vertex-node to an adjacent face-node is fixed \emph{a priori}, and thus we can construct from $N$ an equivalent flow network $N'$, such that all vertex-nodes are removed and their supplies are transferred to the supply of the adjacent face-nodes. Specifically, each face-node $v_f$ corresponding to an expansion cycle $f$ receives $2\deg(f)$ units of flow, while its demand is $2\deg(f)-4$ by definition. This is equivalent to saying that $v_f$ will supply 4 units of flow in $N'$. Similarly, each face-node $v_f$ corresponding to a face $f$ that is not an expansion cycle receives $\deg(f)$ units of flow, while its demand is $2\deg(f)-4$ (or $2\deg(f)+4$ if $f$ is the outer face). This is equivalent to saying that $v_f$ will demand flow $\deg(f)-4$ ($\deg(f)+4$ if $f$ is the outer face) in $N'$. By Property~\ref{pr:bg-1}, $\deg(f) \geq 4$ and therefore $\deg(f)-4 \geq 0$. We now consider every face $f$ of $\bG$ having dummy vertices only (if any), and the corresponding face-node $v_f$ in $N'$. Note that $v_f$ is an isolated node of $N'$. Since $G$ admits an embedding-preserving \opvr, by Property~\ref{pr:bg-2},  $\deg(f) = 4$; hence, we can remove $v_f$ from $N'$ and conclude that $f$ must be drawn as a rectangle. Thus, every face-node in $N'$ corresponds to a face of $\bG$ with at least one expansion vertex on its boundary. Since every expansion vertex belongs to at most three faces of $\bG$ and there are $O(n)$ expansion vertices, $N'$ has $O(n)$ nodes and arcs.  

We also add gadgets to the network $N'$ in order to impose an upper bound $h$ on the number of reflex corners inside the polygons representing the expansion cycles. Let $v_f$ be a node of $N'$ corresponding to an expansion cycle $f$. We replace $v_f$ with two face-nodes: a node $v_f^{in}$, with zero supply and demand; and a node $v_f^{out}$, with the same supply as $v_f$ (which is 4). The incoming edges of $v_f$ become incoming edges of $v_f^{in}$, while the outgoing edges of $v_f$ become outgoing edges of $v_f^{out}$. Finally, we add an edge $(v_f^{in},v_f^{out})$ with capacity $h$. Let $N''$ be the flow network resulting by applying this transformation to all nodes of $N'$ corresponding to expansion cycles. Since each unit of flow entering in $v_f$ (now in $v_f^{in}$) corresponds to a $\pt$ angle inside $f$, a feasible flow of $N''$ defines an orthogonal representation where each expansion cycle is a polygon with at most $h$ reflex corners, i.e., such a feasible flow defines an \opvr having vertex complexity at most $h$. $N''$ is computed in $O(n)$ time and has $O(n)$ nodes and arcs, as $N'$. In order to guarantee that the \opvr has the minimum number of reflex corners among those with vertex complexity at most $h$, we compute a feasible flow of minimum cost. In particular, we apply the min-cost flow algorithm of Garg and Tamassia~\cite{Garg1997}, whose complexity is $O(\chi^{\frac{3}{4}}{m''}\sqrt{\log {n''}})$, where $n''$ and $m''$ are the number of nodes and arcs of $N''$, respectively, and $\chi$ is the cost of the flow\footnote{Note that we cannot use the faster min-cost flow algorithm in~\cite{DBLP:journals/jgaa/CornelsenK12} because $N''$ may not be planar (due to the gadgets introduced in order to transform $N'$ into $N''$).}. As already observed, both $n''$ and $m''$ are $O(n)$. Also, since the value of the flow is $O(n)$ and since in a min-cost flow each unit of flow moved along an augmenting path can traverse each face-node at most once, we have $\chi = O(n^2)$. Hence, a min-cost flow of $N''$ (if it exists) is computed in $O(n^{\frac{5}{2}}\sqrt{\log n})$ time. 

The supplied flow in $N''$ is $4n$ (four units for each expansion cycle) and each unit of a min-cost flow can traverse a face-node at most once. Thus, the vertex complexity of an embedding-preserving optimal \opvr of $G$ is $k \leq 4n$. We can find the value of $k$ by performing a binary search in the range $[0,4n]$, testing, for each considered value $h$, if an \opvr with vertex complexity at most $h$ exists. The number of tests is $O(\log n)$ and each test takes $O(n^{\frac{5}{2}}\sqrt{\log n})$ time, with the algorithm described above. Thus, computing an orthogonal representation $H$ corresponding to an \opvr with vertex complexity $k$ takes $O(n^{\frac{5}{2}}\log^{\frac{3}{2}}n)$ time. A drawing of $H$ is computed with the compaction step of the TSM. Since $H$ has at most $k \cdot n$ bends, this step can be executed in $O((k+1)n+c)=O(n^2)$ time.
\end{proof}

\begin{figure}[tb]
    \centering
    	\centering
    	\includegraphics[scale=0.5, page=4]{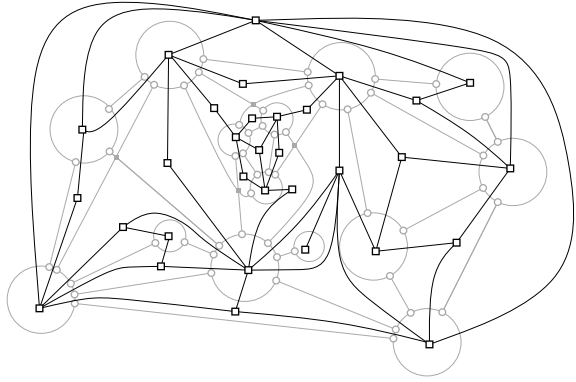}
    \caption{The simplified dual $\bG^*$ of the planarized expansion $\bG$ in Figure~\ref{fi:characterization-3}.\label{fi:characterization-4}}
\end{figure}

To describe our characterization, we introduce a new plane graph associated with the planarized expansion $\bG$ of $G$. Let $\bG^*$ be the dual graph of $\bG$ where the dual edges associated with the real edges are removed. $\bG^*$  has a vertex for each face of $\bG$ and an edge between two vertices for every edge of an expansion cycle shared by the two corresponding faces. We call $\bG^*$ the \emph{simplified dual} of $\bG$ (see also Figure~\ref{fi:characterization-4}). Given a connected component $\C$ of $\bG^*$, denote by $F_{\C}$ the set of faces of $\bG$ corresponding to the vertices of $\C$, by $F_{\C}^{ex}$ the subset of $F_{\C}$ corresponding to the expansion cycles, and by $F_{\C}^{nex}$ the set $F_{\C} \setminus F_{\C}^{ex}$. Finally, let $f_{out}$ be the outer face of $\bG$. We give the following characterization.

\begin{lemma}\label{le:characterization} 
An embedded graph $G$ admits an embedding-preserving \opvr if and only if for each connected component $\C$ of $\bG^*$ we have:

\begin{equation}\label{eq:char}
\sum_{f \in F_{\C}^{nex}}\deg(f)=
    \begin{cases}
    4|F_\C| & \text{if $f_{out} \not \in F_\C$} \\
    4|F_\C|-8 & \text{if $f_{out} \in F_\C$}
    \end{cases}
\end{equation}
\end{lemma}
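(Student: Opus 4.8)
The plan is to reduce the statement to a flow-feasibility condition and read off Equation~\eqref{eq:char} as a supply/demand balance. By Lemma~\ref{le:opvr-od}, $G$ has an embedding-preserving \opvr if and only if $\bG$ admits an orthogonal representation satisfying {\bf P1} and {\bf P2}, and by the construction in the proof of Lemma~\ref{le:opt} such a representation exists if and only if the flow network $N'$ admits a feasible flow. I would therefore characterize feasibility of $N'$ component by component and show this balance is exactly Equation~\eqref{eq:char}. Since this equivalence runs in both directions, it proves necessity and sufficiency simultaneously.

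First I would recall the structure of $N'$ from the proof of Lemma~\ref{le:opt}. Every face-node is either an expansion-cycle node, which supplies $4$ units, or a non-expansion-face node, which demands $\deg(f)-4$ units if $f$ is internal and $\deg(f)+4$ units if $f=f_{out}$ (the sign flip coming from the $-4$ on the right-hand side of Property~\ref{pr:face} for the outer face). Since constraint $(ii)$ deletes every arc dual to a real edge, the only arcs of $N'$ join two face-nodes sharing an \emph{expansion} edge; hence the connected components of $N'$ are precisely the components $\C$ of the simplified dual $\bG^*$, with suppliers $F_\C^{ex}$ and demanders $F_\C^{nex}$. Between two adjacent faces Tamassia's network carries arcs in both directions with no finite capacities, so each component is an uncapacitated, bidirectional transportation network; inside it any nonnegative integral flow meeting prescribed supplies and demands can be routed as soon as they balance. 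Consequently $N'$ admits a feasible flow if and only if, in every component $\C$, total supply equals total demand.

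It remains to turn this balance into Equation~\eqref{eq:char}. Writing ``supply $=$ demand'' in $\C$ gives
\begin{equation*}
4\,|F_\C^{ex}| = \sum_{f\in F_\C^{nex}}\bigl(\deg(f)-4\bigr) + \begin{cases} 0 & f_{out}\notin F_\C,\\ 8 & f_{out}\in F_\C,\end{cases}
\end{equation*}
the extra $+8$ accounting for the outer face contributing $\deg(f_{out})+4$ instead of $\deg(f_{out})-4$. Using $|F_\C^{ex}|+|F_\C^{nex}|=|F_\C|$ and rearranging yields $\sum_{f\in F_\C^{nex}}\deg(f)=4|F_\C|$ when $f_{out}\notin F_\C$ and $4|F_\C|-8$ when $f_{out}\in F_\C$, which is exactly~\eqref{eq:char}. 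For a component with $F_\C^{ex}=\emptyset$ (a single face bounded only by dummy vertices) the balance degenerates to $\deg(f)=4$, recovering Property~\ref{pr:bg-2}; and since the outer face of $\bG$ arises from a face of $G$ it always lies in $F_\C^{nex}$, so the case split is exhaustive.

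The delicate point, where I would spend the most care, is the claim that feasibility of $N'$ is equivalent to a \emph{per-component} balance rather than something stronger: I must argue (a) that flow can cross only expansion edges, so components never interact — immediate once the real-edge arcs are removed; (b) that inside a component there is no capacity obstruction, which holds because the bend arcs are uncapacitated and bidirectional, making balance both necessary and sufficient; and (c) that all demands are genuinely nonnegative, guaranteed by Property~\ref{pr:bg-1}. As an independent check on the supply/demand values one can sum Property~\ref{pr:face} directly over the faces of $\C$: every vertex-angle in a non-expansion face equals $\ph$ (a degree-$3$ expansion vertex splits its non-cycle angle as $\ph+\ph$, and each dummy vertex is $\ph$ on all four sides), no $2\p$ angles occur, and each bend on an expansion edge appears once in its expansion cycle and once, with flipped value, in the adjacent non-expansion face, so these bend contributions cancel across $\C$ and leave precisely Equation~\eqref{eq:char}.
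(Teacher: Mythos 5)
Your proposal is correct, and it takes a partly different route from the paper's own proof. The paper's proof is hybrid: for \emph{sufficiency} it argues exactly as you do---Equation~\ref{eq:char} is the supply/demand balance of $N'$ on each component, the components of $N'$ are strongly connected and uncapacitated, so balance gives a feasible flow and hence, via Lemma~\ref{le:opvr-od}, an \opvr. For \emph{necessity}, however, the paper never returns to the network: it takes an orthogonal drawing of $\bG$, sums Property~\ref{pr:face} over the faces in $F_\C$, and shows by direct angle bookkeeping that every vertex-angle in a non-expansion face is $\ph$ and that bend-angles cancel in pairs across expansion edges, collapsing the sum to $\sum_{f\in F^{nex}_\C}\deg(f)$. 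You instead run the TSM correspondence backwards---an orthogonal representation satisfying {\bf P1}--{\bf P2} induces a feasible flow of $N'$, since the angle $\p$ inside an expansion cycle forces the two remaining angles at a degree-3 expansion vertex to be $\ph$ each, and the four angles at a dummy vertex are likewise forced to $\ph$, so the fixed flows of constraint $(i)$ are automatic and {\bf P2} makes the deleted arcs carry zero flow---and then read off the balance from flow conservation. This is legitimate (the flow/representation correspondence is bidirectional, as the paper's appendix states) and it buys a more uniform proof: both implications come from one equivalence chain. It also treats the dummy-only faces more cleanly: you keep them as isolated nodes whose balance condition degenerates to $\deg(f)=4$, whereas the paper defines $N'$ in Lemma~\ref{le:opt} by deleting them using Property~\ref{pr:bg-2}, whose hypothesis (that $G$ admits an \opvr) is not available in the sufficiency direction. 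What the paper's angle-counting buys in exchange is a self-contained geometric argument for necessity that does not lean on the reverse direction of the flow model. One small caveat in your write-up: point (c) invokes Property~\ref{pr:bg-1}, whose hypothesis is biplanarity, which you do not know in the sufficiency direction; but as you essentially observe, the point is moot, because the balance criterion for uncapacitated strongly connected networks does not require individual demands to be nonnegative---a negative demand is simply a supply.
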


\begin{proof}
Suppose first that $G$ admits an embedding-preserving \opvr $\gamma$. By Lemma~\ref{le:opvr-od}, $\bG$ admits an orthogonal representation $H$ such that properties {\bf P1} and {\bf P2} hold. Consider any orthogonal drawing $\Gamma$ of $\bG$ that can be obtained from $H$, and let $\C$ be a connected component of $\bG^*$. For each face $f \in F_\C$ Property~\ref{pr:face} holds in $\Gamma$, and since there is no angle of $2\pi$ it follows that $\nh(f)-\nt(f)=4$. Summing over all faces we obtain

\begin{equation}\label{eq:sun-faces}
\sum_{f \in F_\C}(\nh(f)-\nt(f))=
\begin{cases}
    4|F_\C| & \text{if $f_{out} \not \in F_\C$} \\
    4|F_\C|-8 & \text{if $f_{out} \in F_\C$}
\end{cases}
\end{equation}

We can rewrite the left-hand side of Equation~\ref{eq:sun-faces} as follows: 
$\sum_{f \in F_\C}(\nh(f)$ - $\nt(f))$ = $\sum_{f \in F^{ex}_\C}(\nh^v(f)$ + $\nh^b(f)$ - $\nt^v(f)$ - $\nt^b(f))$ + $\sum_{f \in F^{nex}_\C}(\nh^v(f) + \nh^b(f) - \nt^v(f) - \nt^b(f))$, where the superscripts $b$ and $v$ indicate whether the angle is a bend-angle or a vertex-angle, respectively. Since we are using the $\varepsilon$-visibility model, the attachment points of the edges incident to the polygons $P(v)$ are not corners of $P(v)$. Thus, all corners of $P(v)$ are bends along the edges of $C(v)$, which means that $\nh^v(f)=\nt^v(f)=0$ for all faces $f \in F^{ex}_\C$. For the faces $f \in F^{nex}_\C$, we have instead that each vertex-angle of $f$ is a $\ph$ angle (i.e., $\nt^v(f)=0$). More precisely, if the vertex is a dummy vertex then it has degree four and therefore the four angles around it all measure $\ph$. If the vertex is an expansion vertex, then it forms a $\pi$ angle inside its expansion cycle (because the attaching points are not corners). Since the vertex has degree three, the other two angles around it measure $\ph$. The only edges that can have bends are the expansion edges because the real edges are (parts of) the edges of $G$ that are drawn as straight-line segments in $\gamma$. Since the expansion edges are shared by a face of $F^{ex}_\C$ and a face of $F^{nex}_\C$, each bend forming a $\ph$ angle inside a face of $F^{ex}_\C$ forms an angle of $\pt$ inside a face of $F^{nex}_\C$, and vice versa. This means that
$\sum_{f \in F^{ex}_\C}(\nh^b(f)-\nt^b(f))+\sum_{f \in F^{nex}_\C}(\nh^b(f)-\nt^b(f))=0$ and therefore $\sum_{f \in F_\C}(\nh(f)-\nt(f))=\sum_{f \in F^{nex}_\C}\nh^v(f)=\sum_{f \in F^{nex}_\C}\deg(f)$. Thus, Equation~\ref{eq:sun-faces} becomes Equation~\ref{eq:char}.

We now prove that if Equation~\ref{eq:char} holds for every connected component $\C$ of $\bG^*$, then $G$ admits an embedding-preserving \opvr. Consider the flow network $N'$ defined in the proof of Lemma~\ref{le:opt}. To prove the claim we show that if  Equation~\ref{eq:char} holds for every connected component $\C$ of $\bG^*$, then $N'$ admits a feasible flow. To this aim we observe that $N'$ and $\bG^*$ share the same set of vertices. That is, for each node $v_f$ of $\bG^*$ corresponding to a face $f$ in $F_{\C}$, there is a corresponding node $v_f$ in $N'$. Also, for each edge $(u,v)$ of $\bG^*$ there are two arcs, $(u,v)$ and $(v,u)$, in $N'$. It follows, that for every connected component $\C$ of $\bG^*$, there is a corresponding strongly connected component $\C'$ in $N'$.

The flow network $N'$ admits a feasible flow if and only if every connected component of $N'$ admits a feasible flow. Consider a connected component $\C'$ of $N'$. Since the capacities of the arcs of $N'$ are unbounded and $\C'$ is in fact strongly connected, a feasible flow of $\C'$ exists if and only if the total supply is equal to the total demand (see, e.g.,~\cite{ht-fffscn-08}). We now show that this is the case. Suppose first that $f_{out} \not \in F_\C$. The total supply of the nodes of $\C'$ is $4 |F_\C^{ex}|$, while the total demand is $\sum_{f \in F_{\C}^{nex}}(\deg(f)-4)$. By Equation~\ref{eq:char} the total demand can be written as $4|F_{\C}|-4|F_{\C}^{nex}|$, which is clearly equal to the total supply $4 |F_\C^{ex}|$. If $f_{out} \in F_\C$, the total supply is again $4 |F_\C^{ex}|$, while the total demand is $\sum_{f \in F_{\C}^{nex}}(\deg(f)-4)+8$. Again, by Equation~\ref{eq:char} the total demand can be written as $4|F_\C^{nex}|-8-4|F_{\C}|+8=4|F_{\C}|-4|F_{\C}^{nex}|$, which is equal to the total supply $4 |F_\C^{ex}|$. This concludes the proof that $N'$ admits a feasible flow, which implies the existence of an embedding-preserving \opvr of $G$.
\end{proof}

The characterization of Lemma~\ref{le:characterization} immediately leads to an $O(n+c)$-time algorithm that tests whether an embedded graph $G$ with $n$ vertices and $c$ crossings admits an embedding-preserving \opvr. Indeed, the size of $\bG^*$ is $O(n+c)$ and the condition of Lemma~\ref{le:characterization} can be checked in linear time in the size of $\bG^*$. If $G$ is biplanar, it has at most $6n-12$ edges and $O(n+c) = O(n^2)$.
The next theorem summarizes the contribution of this section.

\begin{theorem}\label{th:test-opt}
Let $G$ be an $n$-vertex embedded graph. There exists an $O(n^2)$-time algorithm that tests whether $G$ admits an embedding-preserving \opvr and, if so, it computes an embedding-preserving optimal \opvr $\gamma$ of $G$ in $O(n^{\frac{5}{2}}\log^{\frac{3}{2}}n)$ time.  Also, $\gamma$ has the minimum number of total reflex corners among all embedding-preserving optimal {\opvr}s of $G$.
\end{theorem}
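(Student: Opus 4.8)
The plan is to assemble the theorem directly from the two lemmas already established in this section, so the work is essentially bookkeeping of running times together with a bound on the number of crossings. First I would address the testing claim. Given the embedded graph $G$, I would construct the planarized expansion $\bG$ and then its simplified dual $\bG^*$, and apply the characterization of Lemma~\ref{le:characterization}: $G$ admits an embedding-preserving \opvr if and only if Equation~\ref{eq:char} holds for every connected component $\C$ of $\bG^*$. Both graphs, their connected components, and the per-component degree sums can be built and evaluated in time linear in the size of $\bG^*$, which is $O(n+c)$, where $c$ is the number of crossings of $G$.

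The key point for stating the bound in terms of $n$ alone is to control $c$. Any graph admitting an \opvr is biplanar and hence has at most $6n-12$ edges, so I would first reject $G$ outright if $m > 6n-12$ (an $O(n+m)$-time check). Otherwise $m = O(n)$, and since we only consider simple drawings, in which each pair of edges crosses at most once, we have $c \le \binom{m}{2} = O(n^2)$. Therefore $\bG^*$ has size $O(n+c) = O(n^2)$, and the entire test runs in $O(n^2)$ time.

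For the optimization claim, I would simply invoke Lemma~\ref{le:opt}: whenever the test is affirmative, its algorithm computes an embedding-preserving optimal \opvr $\gamma$ of $G$ in $O(n^{\frac{5}{2}}\log^{\frac{3}{2}} n)$ time. The final sentence, namely that $\gamma$ also minimizes the total number of reflex corners among all embedding-preserving optimal {\opvr}s, is inherited verbatim from Lemma~\ref{le:opt}, where the use of a min-cost (rather than merely feasible) flow in the network $N''$ enforces exactly this secondary optimality.

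I do not expect a genuine obstacle here, since all the mathematical content resides in Lemmas~\ref{le:characterization} and~\ref{le:opt}. The one place that needs care is the crossing-count argument: the $O(n^2)$ testing bound is not immediate for an arbitrary embedded graph and relies on first discarding non-biplanar inputs through the edge bound, so that $c$ can then be bounded by $O(n^2)$ via the simplicity of the drawing. I would make sure this reduction is stated cleanly, so that the running time is honestly $O(n^2)$ rather than $O(n+c)$ with an a priori unbounded $c$.
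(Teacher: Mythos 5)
Your proposal is correct and takes essentially the same approach as the paper: the test is the linear-time check of Lemma~\ref{le:characterization} on $\bG^*$, whose size $O(n+c)$ becomes $O(n^2)$ once graphs with more than $6n-12$ edges are discarded (so that $m=O(n)$ and $c\le\binom{m}{2}$ by simplicity of the drawing), and the optimization and reflex-corner claims are inherited directly from Lemma~\ref{le:opt}. Your explicit preliminary rejection of inputs violating the biplanar edge bound is just a cleaner phrasing of what the paper compresses into ``If $G$ is biplanar, it has at most $6n-12$ edges and $O(n+c)=O(n^2)$.''
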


It may be worth remarking that an alternative algorithm to test whether $G$ admits an embedding-preserving \opvr can be derived from the result in~\cite{DBLP:conf/sofsem/AlamKM16}. Alam {\em et al.}~\cite{DBLP:conf/sofsem/AlamKM16} showed an  algorithm to test whether an $n$-vertex biconnected plane graph $G$ admits an orthogonal drawing such that edges have no bends, and each face $f$ has most $k_f$ reflex corners. The time complexity of this algorithm is $O((nk)^{\frac{3}{2}})$-time, where $k = \max_{f \in G}{k_f}$. Thus, one can compute $\bG$ and split each expansion edge of $\bG$ with $4n$ subdivision vertices (the maximum number of reflex corners that a face can have). The resulting graph $\bG'$ has $O(n^2)$ vertices. Then one can apply the algorithm by Alam {\em et al.} on $\bG'$ with $k_f = 4n$ for every face $f$ of $G$.  However, this would lead to a time complexity  $O(n^\frac{9}{2})$.

We conclude this section by observing that the number of crossings per edge is a critical parameter for the ortho-polygon representability of an embedded graph, namely even two crossings per edge may give rise to a graph that cannot be represented -- see Figure~\ref{fi:thickness2}. On the positive side, the following theorem can be proved by applying Lemma~\ref{le:characterization}.

\begin{theorem}\label{th:1-planar}
Every 1-plane graph admits an embedding-preserving \opvr.
\end{theorem}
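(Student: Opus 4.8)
The plan is to invoke the characterization of Lemma~\ref{le:characterization}: it suffices to verify that Equation~\ref{eq:char} holds for every connected component $\C$ of the simplified dual $\bG^*$. I would argue that 1-planarity forces $\bG^*$ to be connected, so that there is a single component, it contains $f_{out}$, and Equation~\ref{eq:char} collapses to a single global identity that I can check directly by a counting argument plus Euler's formula. We may assume $G$ is connected; otherwise one handles each connected component of $G$ separately and juxtaposes the resulting {\opvr}s.

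First I would record the structural consequence of 1-planarity that drives the whole argument. Since every edge of $G$ is crossed at most once, the two real edges into which a crossed edge is split each join the crossing to an \emph{endpoint} of that edge. Hence every real edge incident to a dummy vertex of $\bG$ has an expansion vertex as its other endpoint, i.e.\ no two dummy vertices are adjacent in $\bG$. In particular, no face of $\bG$ can be bounded by dummy vertices only, since such a face would require a dummy--dummy real edge on its boundary; this is exactly the configuration that appears in the two-crossings-per-edge example of Figure~\ref{fi:thickness2}.

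Next I would prove that $\bG^*$ is connected, which is the heart of the matter. The observation is that $\bG^*$ is essentially the vertex--face incidence graph of the planarization of $G$ restricted to the real vertices: each edge of $\bG^*$ is an expansion edge, recording one corner of a vertex of $G$ inside a face. The full vertex--face incidence graph of a connected plane graph is connected, so it is enough to show that any dummy vertex $p$ traversed by a path can be bypassed without leaving the real vertices. Here the structural fact is used: two faces consecutive around $p$ are separated by a real edge $(p,w)$ whose far endpoint $w$ is an expansion vertex, and both faces are incident to $w$, hence both share an expansion edge with the expansion cycle $C(w)$ and are already joined in $\bG^*$ through the corresponding node. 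Chaining this around $p$ shows that all four faces around $p$ lie in one component, so every transit through $p$ can be rerouted, proving connectivity of $\bG^*$.

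It then remains to verify Equation~\ref{eq:char} for the unique component, where $F_\C$ is the set of all faces and $f_{out}\in F_\C$, so the target identity is $\sum_{f\in F_\C^{nex}}\deg(f)=4|F_\C|-8$. I would establish this by an angle (sector) count combined with Euler's formula on $\bG$: each expansion vertex has degree $3$ and contributes one $\p$ angle to its expansion cycle and two $\ph$ angles to non-expansion faces, while each dummy vertex has degree $4$ and contributes all four $\ph$ angles to non-expansion faces, giving $\sum_{f\in F_\C^{nex}}\deg(f)=2r+4c$, where $r$ and $c$ count expansion and dummy vertices. Expressing the vertex, edge, and face counts of $\bG$ in terms of $|V(G)|$, $|E(G)|$, and the number of crossings and applying Euler's formula shows this equals $4|F_\C|-8$. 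The main obstacle, and the only place where 1-planarity is essential, is precisely the connectivity of $\bG^*$: the angle/Euler bookkeeping is routine, but without connectivity the balance can fail on an individual component (an isolated all-dummy face, as in Figure~\ref{fi:thickness2}, violates Equation~\ref{eq:char}), which is exactly why the statement breaks once two crossings per edge are permitted.
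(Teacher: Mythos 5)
Your proposal is correct and takes essentially the same route as the paper's proof: both reduce the statement to Lemma~\ref{le:characterization}, establish that $\bG^*$ is connected using the fact that 1-planarity forbids two adjacent dummy vertices in $\bG$, and then verify Equation~\ref{eq:char} via the corner count $\sum_{f\in F^{nex}}\deg(f)=2n_e+4n_d$ combined with Euler's formula on the planarization. The only cosmetic difference is in the connectivity step, where the paper argues by contradiction (a component-separating cycle in $\bG$ would consist of dummy vertices only, forcing an edge with two dummy endpoints), while you reroute walks in the vertex--face incidence graph around each dummy vertex; both arguments rest on the identical structural fact.
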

\begin{proof}
Let $G$ be a 1-plane graph with $n$ vertices, $m$ edges, and $c$ crossings.
Let $\bG$ be the planarized expansion of $G$, and let $\bG^*$ be the simplified dual of $\bG$. We first observe that $\bG^*$ is connected. If not  there would be two sets of faces of $G$ such that for a face $f_1$ from one set and a face $f_2$ from the other set, $f_1$ and $f_2$ do not share an edge of an expansion cycle. In other words, there exists  a cycle of $\bG$ that contains only dummy vertices. Such a cycle however can exist only if each of its edges has two dummy end-vertices, which is impossible because $G$ is 1-plane. We now show that $G$ satisfies the condition of Lemma~\ref{le:characterization}.

Since $\bG^*$ consists of one connected component, it contains the outer face and by Lemma~\ref{le:characterization} we have $\sum_{f \in F^{nex}}\deg(f)=4|F|-8$, where $|F|$ is the number of faces of $\bG$.  Each expansion vertex forms two angles inside faces that are not expansion cycles, while each dummy vertex forms four angles inside faces that are not expansion cycles. Hence, $\sum_{f \in F^{nex}}\deg(f)=2n_e+4n_d$, where $n_e$ and $n_d$ are the total number of expansion vertices and dummy vertices, respectively. We have that $n_e=2m$, $n_d=c$, and $|F|=n+f'$,  where $f'$ is the number of faces in the planarization of $G$.
 It follows that $2n_e+4n_d = 4m +4c$ and the condition of Lemma~\ref{le:characterization} becomes $4m+4c=4f'+4n-8$, that is $n+f'=m+c+2$. Let $n'$ and $m'$ be the number of vertices and edges, respectively, in the planarization of $G$. By  Euler's formula, we have $n'+f'=m'+2$. Since $n'=n+c$ and $m'=m+2c$, it follows that $n+c+f'=m+2c+2$ and therefore $n+f'=m+c+2$, which proves that $G$ satisfies Equation~\ref{eq:char}.
\end{proof}

Motivated by Theorem~\ref{th:1-planar}, we devote the next sections to the study of upper and lower bounds on the vertex complexity of 1-plane graphs.

\section{Types of Crossings and Edge Partitions of 1-plane Graphs}\label{se:background-edgepartitions}

In this section, we first classify  different types of crossings that arise in 1-plane graphs (Section~\ref{sse:background}).  
We then present a result about partitioning the edges of a 3-connected 1-plane graph so that each partition set induces a plane graph and one of these plane graphs has maximum vertex degree six, which is a tight bound (Section~\ref{sse:partition}). This result may be of independent interest since it contributes to recent combinatorial studies about partitioning the edge set of 1-plane graphs into two plane subgraphs having special properties (see, e.g.,~\cite{DBLP:journals/dam/Ackerman14,DBLP:journals/combinatorics/CzapH13,Lenhart201759}).
Moreover, the results in this section are then used to prove an upper bound of 12 and a lower bound of 2 on the vertex complexity of 3-connected 1-plane graphs (Section~\ref{se:3conn-bounds}). 

\subsection{Types of crossings in 1-plane graphs}\label{sse:background}

\begin{figure}[tb]
    \centering
    \begin{minipage}[b]{.18\textwidth}
    	\centering
    	\includegraphics[scale=0.9,page=1]{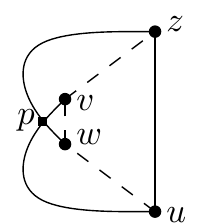}
    	\subcaption{B-conf.}\label{fi:bconf}
    \end{minipage}
    \begin{minipage}[b]{.18\textwidth}
    	\centering
    	\includegraphics[scale=0.9,page=2]{bconf-withcrossings}
    	\subcaption{Kite}\label{fi:kite}
    \end{minipage}
    \begin{minipage}[b]{.18\textwidth}
    	\centering
    	\includegraphics[scale=0.9,page=4]{bconf-withcrossings}
    	\subcaption{W-conf.}\label{fi:wconf}
    \end{minipage}
    \begin{minipage}[b]{.21\textwidth}
    	\centering
    	\includegraphics[scale=0.9,page=5]{bconf-withcrossings}
    	\subcaption{T-conf.}\label{fi:tconf}
    \end{minipage}
    \begin{minipage}[b]{.21\textwidth}
    	\centering
    	\includegraphics[scale=0.9,page=6]{bconf-withcrossings}
    	\subcaption{Aug. T-conf.}\label{fi:augtconf}
    \end{minipage}
    \caption{Crossing configurations of 1-plane graphs.\label{fi:crossings} }
\end{figure}

Let $G$ be a 1-plane graph and let $(u,v)$ and $(w,z)$ be two edges of $G$ that cross at a point $p$.  Edges $(u,v)$ and $(w,z)$ form a {\em B-configuration} if there exists an edge between their endpoints, say edge $(u,z)$, such that vertices $v$ and $w$ are inside the triangle $\{u,z,p\}$ (see  Figure~\ref{fi:bconf}, without dashed edges). If $(u,v)$ and $(w,z)$ form a B-configuration and all the edges of the 4-cycle $\{(u,w), (w,v), (v,z), (z,u)\}$ exist, then $(u,v)$ and $(w,z)$ form an \emph{augmented B-configuration} (see Figure~\ref{fi:bconf}, including the dashed edges). Two crossing edges $(u,v)$ and $(w,z)$ form a \emph{kite} in $G$, if the 4-cycle $\{(u,w), (w,v), (v,z), (z,u)\}$ exists, and the crossing point $p$ between $(u,v)$ and $(w,z)$ lies inside such a 4-cycle (see Figure~\ref{fi:kite}). Let $(u,v)$ and $(w,z)$ be two edges of $G$ that cross at a point $p$, and let $(u,x)$ and $(z,y)$ be two further edges of $G$ that cross at a point $q$. The four edges form a \emph{W-configuration} if vertices $v,w,x,y$ lie inside the cycle formed by the edge parts $(u,p)$, $(p,z)$, $(z,q)$, and $(q,u)$ (see Figure~\ref{fi:wconf}). B- and W-configurations were introduced by Thomassen~\cite{t-rdg-JGT88} to characterize the 1-plane graphs that admit an embedding-preserving straight-line drawing. In~\cite{SoCG}, Biedl {\em et al.} introduced an additional configuration called \emph{T-configuration}. They proved that a 1-plane graph admits an embedding-preserving rectangle visibility representation if and only if it does not contain B-, W-, or T-configurations~\cite{SoCG}. Let $(u,v)$ and $(w,z)$ be two edges crossing at a point $p$, let  $(u,y)$ and $(x,z')$ be two edges crossing at a point $q$, and let $(x,v')$ and $(w,y')$ be two edges crossing at a point $t$. If vertices $v,v',y,y',z,z'$ are inside the cycle formed by the edge parts $(u,q)$, $(q,x)$, $(x,t)$, $(t,w)$, $(w,p)$, and $(p,u)$, then the above six edges form a T-configuration (see Figure~\ref{fi:tconf}). Moreover, if $v=v'$, $y=y'$, $z=z'$ and $v,y,z$ form a triangle in $G$, then the above six edges form an \emph{augmented T-configuration} (see Figure~\ref{fi:augtconf}).

In the following sections we shall often refer to \emph{crossing-augmented 1-plane graphs}. A 1-plane graph $G$ is crossing-augmented, when for each pair of crossing edges $(u,v)$ and $(w,z)$, the subgraph of $G$ induced by $\{u,v,w,z\}$ is a $K_4$. We call  the four edges of the $K_4$ different from $(u,v)$ and $(w,z)$ \emph{cycle edges} of $(u,v)$ and $(w,z)$  -- they form a 4-cycle. Note that a 1-plane graph can always be made crossing-augmented in $O(n)$ time, by adding the missing cycle edges without introducing any new edge crossings (see, e.g.,~\cite{DBLP:conf/gd/AlamBK13,Suzuki2010}).

\subsection{Edge partition of 3-connected 1-plane graphs}\label{sse:partition}

An \emph{edge partition}  of a 1-plane graph $G$ is a coloring of each of its edges with one of two colors, \emph{red} and \emph{blue}, such that both the red graph $G_R$ induced by the red edges and the blue graph $G_B$ induced by the blue edges are plane graphs. 
The planar embedding of $G_R$ ($G_B$) is induced by the 1-planar embedding of $G$ when considering only the red (blue) edges. 
It is known that $G$ admits an edge partition such that $G_R$ is a forest~\cite{DBLP:journals/dam/Ackerman14,DBLP:journals/combinatorics/CzapH13}, and that if $G$  has $4n-8$ edges, then an edge partition such that $G_R$ has maximum vertex degree four always exists~\cite{Lenhart201759}. 
The following result, besides being of independent interest for the theory of 1-planarity, will be used to establish an upper bound on the vertex complexity of {\opvr}s of 3-connected 1-plane graphs.

\begin{theorem}\label{th:edge-decomposition}
Let $G$ be a 3-connected 1-plane graph with $n$ vertices.
There is an edge partition of $G$ such that the red graph has maximum vertex degree six and this bound is worst case optimal. Also, such an edge partition can be computed in $O(n)$ time.
\end{theorem}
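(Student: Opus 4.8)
The plan is to prove the upper bound constructively and match it with an explicit lower-bound construction, after first reducing to a clean selection problem.

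\smallskip\noindent\textbf{Reduction and setup.} First I would make $G$ \emph{crossing-augmented}: by the remark at the end of Section~\ref{sse:background} this costs $O(n)$ time and adds only uncrossed \emph{cycle edges}, so it does not change the set of crossing edges (the \emph{diagonals}). Since any edge partition of the augmented graph restricts to an edge partition of $G$ whose red graph is a subgraph, it suffices to bound the red degree for crossing-augmented $G$. In this case every crossing is a kite, and deleting both diagonals of every kite yields a plane \emph{skeleton} $S$ in which each kite bounds a quadrilateral face with its two crossing diagonals drawn inside. I would color every edge of $S$ blue and, for each kite, color exactly one of its two diagonals red and the other blue. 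Because $G$ is $1$-plane, a diagonal is crossed only by its kite-partner, so diagonals of distinct kites neither cross one another nor cross any edge of $S$; hence \emph{both} color classes are plane for every such choice. Thus the red graph consists precisely of the chosen diagonals, and the whole theorem reduces to the statement: \textbf{one diagonal can be selected from each kite so that no vertex is incident to more than six selected diagonals.}

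\smallskip\noindent\textbf{Bounding the red degree by six.} The core step is to show that this degree-constrained selection always exists. I would phrase it on the abstract multigraph $D$ formed by all diagonals (two per kite, on four distinct vertices), where exactly one edge of each kite-pair must be chosen. A natural feasibility certificate is a Hall/flow-type condition: a selection with maximum red degree at most $d$ exists unless some vertex set $U$ \emph{traps} more than $d\,|U|$ kites, where a kite is trapped by $U$ when \emph{both} of its diagonals have an endpoint in $U$ (so it is forced to place load on $U$). I would then bound the number of trapped kites via the sparsity of $1$-planar graphs: the crossing edges incident to $U$ induce a $1$-plane subgraph on $U$ and its neighbours, so Euler's formula (the $4k-8$ edge bound) limits how many mutually crossing diagonal pairs can be confined to $U$, and I expect this to yield exactly $d\le 6$. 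To obtain the claimed $O(n)$ time I would make the argument constructive rather than flow-based: orient the $3$-connected plane skeleton $S$ with bounded out-degree (e.g.\ from a canonical ordering), and assign to each kite the diagonal charging its two ``later'' endpoints, then verify locally that each vertex accumulates at most six charges.

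\smallskip\noindent\textbf{Worst-case optimality.} For the lower bound I would exhibit a $3$-connected $1$-plane graph on which the selection optimum equals $6$, i.e.\ a gadget built around a small vertex set whose incident kites are mutually trapped, violating the feasibility certificate above for $d=5$. The natural ingredients are assemblies of augmented configurations (Figures~\ref{fi:kite}--\ref{fi:augtconf}) arranged so that in every choice of one diagonal per kite some vertex retains at least six diagonals; $3$-connectivity would then be checked directly on the gadget.

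\smallskip\noindent\textbf{Main obstacle.} The delicate point is the \emph{tied} nature of the selection: choosing a diagonal simultaneously loads its two endpoints, so this is not a plain transversal or max-flow instance, and the naive density bound only gives $4$. Pinning down the exact constant $6$ — proving both that the trapped-kite count never forces more and that some $3$-connected gadget forces exactly this much — and converting the feasibility argument into a linear-time assignment (instead of a network-flow computation) is where I expect the real work to lie.
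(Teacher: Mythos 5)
Your reduction is sound and matches the paper's: after crossing-augmentation exactly one edge of each crossing pair must be red, all other edges can be blue, and both color classes are automatically plane, so everything hinges on the selection problem. (Minor point: after augmentation a crossing pair need not form a kite -- it may form an augmented B-configuration with the crossing point outside the quadrilateral -- but this does not affect the reduction.) The genuine gap is the core step: your proposed Hall-type certificate is false as a combinatorial principle. Take three kites on four vertices $\{a,b,c,d\}$ with diagonal pairs $\{a,b\}$ versus $\{c,d\}$, then $\{a,c\}$ versus $\{b,d\}$, then $\{a,d\}$ versus $\{b,c\}$. Every selection places total load $6$ on $4$ vertices, so some vertex has load at least $2$ and $d=1$ is infeasible; yet no set $U$ traps more than $1\cdot|U|$ kites (singletons trap $0$, pairs trap exactly $2$, triples trap $3$, the full set traps $3\le 4$). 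So ``no $U$ traps more than $d|U|$ kites'' does not imply feasibility. This abstract instance is not $1$-plane realizable (it would force all three pairs of independent edges of a $K_4$ to cross, impossible in a simple drawing), but that only shows a correct proof must exploit planarity and $3$-connectivity in an essential way -- and your sketch uses neither. Three-connectivity is in fact indispensable: in the $2$-connected family of Theorem~\ref{th:2conn-lb}, $\Theta(n)$ crossing pairs each have one diagonal incident to $u$ and the other incident to $v$, so in every selection $u$ or $v$ gets red degree $\Omega(n)$. Your linear-time fallback is equally unsupported: ``the diagonal charging its two later endpoints'' is not even well-defined (for a kite ordered $u<w<v<z$ with diagonals $(u,v)$ and $(w,z)$, neither pair of endpoints dominates the other), and no argument bounds the charges by six. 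The paper closes exactly this hole with three structural claims about \emph{cycle edges} (the edges of the $K_4$ quadrilateral around each crossing): no two cycle edges cross (Claim~\ref{cl:cycleedges}); every edge is a cycle edge of at most \emph{two} crossing pairs, which is where $3$-connectivity enters (Claim~\ref{cl:cycleedges-2}); and, in a Schnyder $3$-orientation of a triangulation of the planar skeleton that retains all cycle edges, every crossing pair has one diagonal both of whose endpoints have an \emph{outgoing cycle edge} of that pair (Claim~\ref{cl:opposite}). Coloring that diagonal red and charging each of its endpoints to the corresponding outgoing cycle edge yields at most $3\times 2=6$ red edges per vertex, all computable in linear time.

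For the lower bound, your use of the trapping condition is legitimate in that direction (a trapped kite loads $U$ no matter which diagonal is chosen), but you supply no construction, and ``a small vertex set whose incident kites are mutually trapped'' is the wrong picture. The paper's bound is a global averaging argument: take a maximal plane graph $G_p$ on $n\ge 13$ vertices and insert an augmented T-configuration into each of its $2n-4$ faces; every edge partition then has exactly three red crossing edges per face, each incident to exactly one vertex of $G_p$, so the $n$ base vertices absorb $6n-12$ red edge-endpoints and some vertex has red degree at least $6-\frac{12}{n}$, hence at least $6$. In your terminology this is $U=V(G_p)$ trapping $6n-12>5n$ kites -- a trap of linear, not constant, size; whether a constant-size trap can exist in a $3$-connected $1$-plane graph is not addressed by your sketch.
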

\begin{proof}
We assume that $G$ is crossing-augmented. The proof relies on claims that describe properties of the cycle edges of $G$ which  make it possible to construct the desired partition of the edges of $G$. It is important to observe that, due to 1-planarity, a cycle edge  is not crossed in the subgraph induced by the four end-vertices of its two crossing edges. However, a cycle edge can be crossed in $G$, but, as shown in the next claim, no two cycle edges cross one another in $G$.

\begin{claim}\label{cl:cycleedges}
There are no two cycle edges of $G$ that cross each other.
\end{claim}
\begin{proof}
Refer to Figure~\ref{fi:cycledges}. Let $(u,v)$ and $(w,z)$ be two edges crossing each other, and assume for a contradiction that they are both cycle edges. This implies the existence of two pairs of crossing edges: $(u,x)$ and $(v,y)$, crossing at point $p$; $(w,x')$ and $(z,y')$, crossing at a point $q$. Then either $w$ or $z$ is inside the cycle $C$ composed of the following (parts of) edges: $(u,v)$, $(u,p)$, and $(v,p)$. Without loss of generality suppose that $w$ is inside $C$ and hence $z$ is outside $C$. It is immediate to see that either $(w,x')$ or $(z,y')$ crosses one among $(u,v)$, $(u,x)$ and $(v,y)$, and hence there is at least one edge crossed twice, which contradicts the 1-planarity of $G$.
\end{proof}

\begin{figure}[tb]
    \centering
    \begin{minipage}[b]{.24\textwidth}
    	\centering
    	\includegraphics[scale=1, page=1]{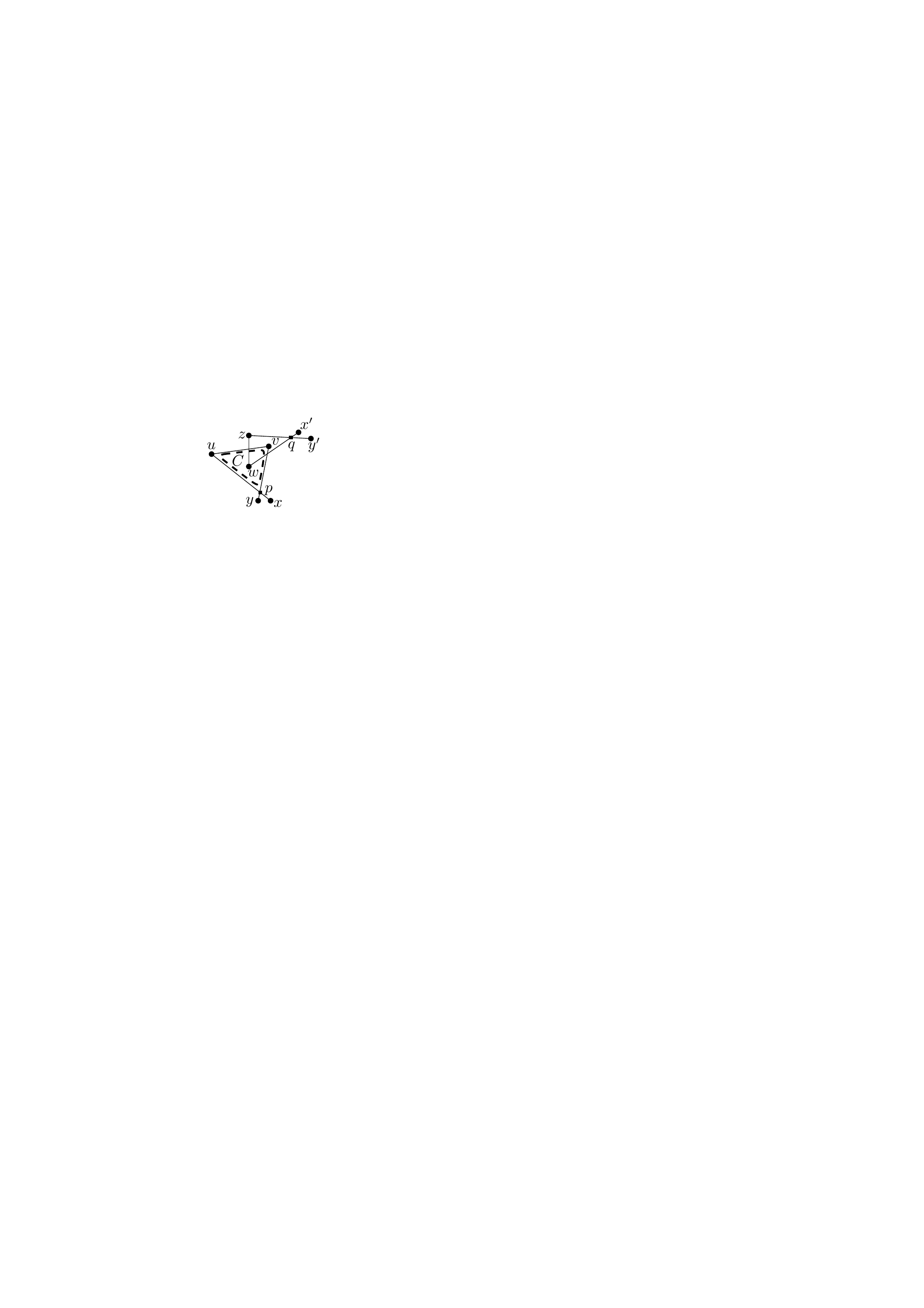}
    	\subcaption{}\label{fi:cycledges}
    \end{minipage}
    \begin{minipage}[b]{.24\textwidth}
    	\centering
    	\includegraphics[scale=1, page=2]{cycleedges}
    	\subcaption{}\label{fi:cycledges-2}
    \end{minipage}
    \begin{minipage}[b]{.24\textwidth}
    	\centering
    	\includegraphics[scale=1,page=1]{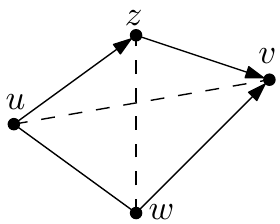}
    	\subcaption{}\label{fi:opposite-a}
    \end{minipage}
    \begin{minipage}[b]{.24\textwidth}
    	\centering
    	\includegraphics[scale=1,page=2]{opposite}
    	\subcaption{}\label{fi:opposite-b}
    	\end{minipage}
    \caption{(a) Illustrations for the proofs of (a) Claim~\ref{cl:cycleedges}; (b) Claim~\ref{cl:cycleedges-2}; (c)-(d) Claim~\ref{cl:opposite}.}
\end{figure}

\begin{claim}\label{cl:cycleedges-2}
Every edge of $G$ is the cycle edge of at most two pairs of crossing edges.
\end{claim}
\begin{proof}
Refer to Figure~\ref{fi:cycledges-2}. Let $(u,v)$ be a cycle edge shared by two pairs of crossing edges. These two pairs of crossing edges define a cycle $C$ (dashed in Figure~\ref{fi:cycledges-2}) such that no vertex inside $C$ can be connected with a vertex outside $C$, except through a path that contains $u$ or $v$. Suppose, for a contradiction, that there is a third pair of crossing edges having $(u,v)$ as a cycle edge. Then, for every two pairs of crossing edges among these three, there is a cycle $C_i$ $(i=1,2,3)$ passing through $u$ and $v$ and with the same property as $C$, that is, any path from a vertex inside $C_i$ to a vertex outside $C_i$ contains $u$ or $v$. Also, in any 1-planar embedding of $G$, one of these three cycles is such that the end-vertices of one of the three pairs of crossing edges are inside this cycle, and the end-vertices of another pair are outside it.  This implies that $u$ and $v$ are a separation pair, a contradiction with the fact that $G$ is 3-connected.
\end{proof}

Let $G_p$ be the plane graph obtained from $G$ by removing an edge for each pair of crossing edges. We can arbitrarily choose what edges to remove, provided that we never remove a cycle edge. Claim~\ref{cl:cycleedges} ensures that this choice is always feasible.  Let $G_p^+$ be a plane graph obtained by edge-augmenting $G_p$ so to become a plane triangulation.

We apply a {\em Schnyder trees decomposition}~\cite{DBLP:conf/soda/Schnyder90} to $G_p^+$. Schnyder~\cite{DBLP:conf/soda/Schnyder90} proved that the internal edges of a plane triangulation can be oriented such that each internal vertex has exactly three outgoing edges and the vertices of the outer face have no outgoing edge. We arbitrarily orient the edges of the outer face of $G_p^+$ and we obtain a {\em 3-orientation} of $G_p^+$, that is an orientation of its edges such that every vertex has at most three outgoing edges. Based on this 3-orientation, the following claim can be proved.

\begin{claim}\label{cl:opposite}
Let $(u,v)$ and $(w,z)$ be a pair of crossing edges of $G$. Then both $\{u,v\}$ or both $\{w,z\}$ have an outgoing edge in $G_p^+$ that is a cycle edge of $(u,v)$ and $(w,z)$.
\end{claim}
\begin{proof}
Consider the 4-cycle in $G_p^+$ formed by the four cycle edges of $(u,v)$ and $(w,z)$. Recall that these four edges are all present in $G_p^+$, since we did not remove any cycle edge. Suppose that $u$ has an outgoing edge, as shown in Figure~\ref{fi:opposite-a}. Then either $v$ has an outgoing edge, or both the edges $(z,v)$ and $(w,v)$ are oriented towards $v$. In both cases the statement holds. Suppose otherwise that both edges of $u$ are incoming. Then both $w$ and $z$ have an outgoing edge towards $u$, as shown in Figure~\ref{fi:opposite-b}.
\end{proof}

We use Claim~\ref{cl:opposite} to partition the edge set of $G$ as follows. For each pair of crossing edges $(u,v)$ and $(w,z)$ of $G$ we color red the edge connecting the pair $\{u,v\}$ or $\{w,z\}$ for which Claim~\ref{cl:opposite} holds. By this choice, each end-vertex of a red edge has one outgoing edge among the cycle edges of $(u,v)$ and $(w,z)$. Since every vertex is incident to at most three outgoing edges in $G_p^+$, and since each edge is the cycle edge of at most two pairs of crossing edges (Claim~\ref{cl:cycleedges-2}), by this procedure at most six edges for each vertex get the red color. 

The linear time complexity is a consequence of the fact that a 1-plane graph has $O(n)$ edges~\cite{Suzuki2010} and that Schnyder trees can be constructed in $O(n)$ time~\cite{DBLP:conf/soda/Schnyder90}.

\begin{figure}[t]
    \centering
    \includegraphics[scale=1,page=1]{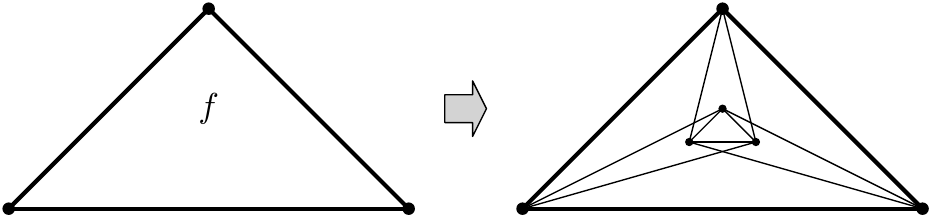}
    \caption{Illustration for the proof of Theorem~\ref{th:edge-decomposition}.}\label{fi:3connlb-1}
\end{figure}

We conclude the proof by showing that there exist 3-connected 1-plane graphs such that the red graph of any edge partition has maximum vertex degree at least $6$.
Let $G_p$ be a maximal plane graph with $n \geq 13$ vertices. Construct the graph $G$ from $G_p$ as follows. For each face $f$ of $G_p$ identify the three vertices of $f$ with the three vertices on the outer face of an augmented T-configuration (see Figure~\ref{fi:augtconf}). An illustration of the insertion of an augmented T-configuration inside a face $f$ is shown in Figure~\ref{fi:3connlb-1}. Graph $G$ is 3-connected and 1-plane by construction. Consider an edge partition of $G$. For every face $f$ of $G_p$, there are exactly three red edges. Each of these three red edges is incident to a vertex of $G_p$. Since $G_p$ has $2n-4$ faces, there are $3(2n-4)=6n-12$ red edges, each incident to a vertex of $G_p$. If the maximum vertex degree of the red graph is $k$, then it must be $kn \geq 6n - 12$, which implies $k \geq  6 - \frac{12}{n}$, and, since $k$ is integer, $k \geq 6$ for $n \geq 13$.
\end{proof}

\section{Vertex Complexity Bounds for 3-connected 1-plane Graphs}\label{se:3conn-bounds}

The edge partition of Theorem~\ref{th:edge-decomposition} can be used to construct an \opvr of a 3-connected 1-plane graph whose vertex complexity does not depend on the input size. We first describe the high-level idea behind this construction (see also Figure~\ref{fi:example3conn1plane} for an illustration) and then give a formal proof (Theorem~\ref{th:decomp-opvr}). 

Let $G$ be a 3-connected 1-plane graph with $n$ vertices, and let $G_B$ and $G_R$ be the plane graphs defined by the edge partition of Theorem~\ref{th:edge-decomposition}; see for example Figure~\ref{fi:example3conn1plane-1}. We first augment $G_B$ to a maximal plane graph (if needed), and then construct a BVR $\gamma_B$; see for example Figure~\ref{fi:example3conn1plane-2}. Assume that two vertices $u$ and $v$ are connected by a red edge and let $\gamma_B(u)$ and $\gamma_B(v)$ be the horizontal bars representing vertices $u$ and $v$ in $\gamma_B$, respectively. We attach a vertical bar to $\gamma_B(u)$ and a vertical bar to $\gamma_B(v)$ such that each vertical bar shares an endpoint with the horizontal bar and the two vertical bars can see each other horizontally. This makes it possible to draw the horizontal red edge $(u,v)$; see for example Figure~\ref{fi:example3conn1plane-3}.  Once all red edges have been added to $\gamma_B$, every vertex $v$ that has some incident red edge is represented as a ``rake''-shaped object consisting of one horizontal bar and at most six vertical bars (we have a vertical bar for each red edge incident to $v$ and there are at most six such edges). This ``rake''-shaped object can then be used as the skeleton of an orthogonal polygon that has two reflex corners per vertical bar; see for example Figure~\ref{fi:example3conn1plane-4}.

\begin{figure}[tb]
    \centering
    \begin{minipage}[b]{.24\textwidth}
    	\centering
    	\includegraphics[width=\textwidth,page=1]{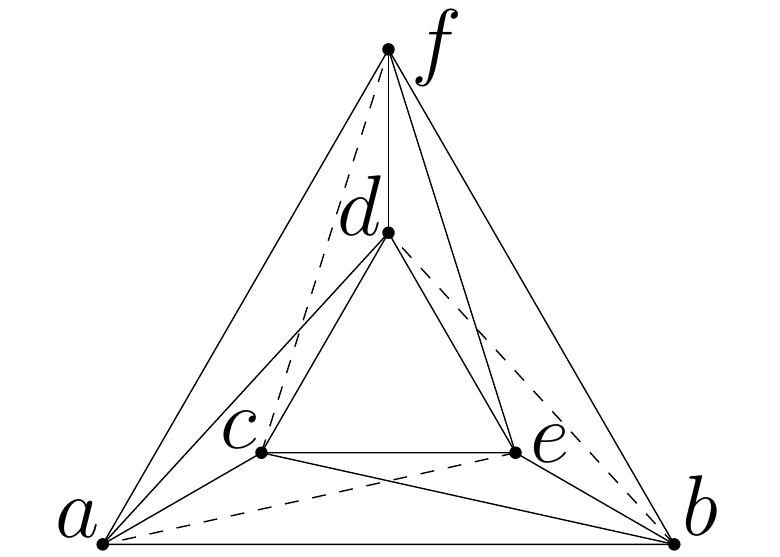}
    	\subcaption{}\label{fi:example3conn1plane-1}
    \end{minipage}
    \begin{minipage}[b]{.24\textwidth}
    	\centering
    	\includegraphics[width=\textwidth,page=2]{example3conn1plane}
    	\subcaption{}\label{fi:example3conn1plane-2}
    \end{minipage}
    \begin{minipage}[b]{.24\textwidth}
    	\centering
    	\includegraphics[width=\textwidth,page=3]{example3conn1plane}
    	\subcaption{}\label{fi:example3conn1plane-3}
    \end{minipage}
    \begin{minipage}[b]{.24\textwidth}
    	\centering
    	\includegraphics[width=\textwidth,page=4]{example3conn1plane}
    	\subcaption{}\label{fi:example3conn1plane-4}
    \end{minipage}
    \caption{(a) An edge partition of a 3-connected 1-plane graph $G$; red (blue) edges are dashed (solid); (b) A BVR $\gamma_B$ of $G_B$; (c) Insertion of the red edges into $\gamma_B$; (d) An \opvr of $G$.\label{fi:example3conn1plane}}
\end{figure}

We start with some additional definitions. A plane acyclic directed graph $G_p$ such that $G_p$ has a single source $s$ and a single sink $t$ that are both on the outer face, is called a \emph{plane $st$-graph} (see, e.g.,~\cite{DBLP:journals/dcg/RosenstiehlT86,TamassiaTollis86}).  For each vertex $v$ of a plane $st$-graph $G_p$, the incoming edges appear consecutively around $v$, as do the outgoing edges. Vertex $s$ only has outgoing edges, while vertex $t$ only has incoming edges (this particular transversal structure is known as a \emph{bipolar orientation}~\cite{DBLP:journals/dcg/RosenstiehlT86,TamassiaTollis86}). Each cycle $C$ of $G_p$ is bounded by two directed paths with a common \emph{origin} and \emph{destination}, called the \emph{left path} and \emph{right path} of $C$. 

Let $G$ be a 3-connected 1-plane graph with $n$ vertices. Assume that $G$ is crossing-augmented. Suppose that there exists a pair of crossing edges in $G$ having a cycle edge $e$ that is crossed. As observed in Section~\ref{sse:background}, a distinct copy $e'$ of $e$ can be added to $G$ so that $e'$ does not cross any edge in $G$. We call $e'$ the \emph{planar copy of $e$}. Note that replacing $e$ with $e'$ changes the embedding of $G$, and thus we do not perform this operation. However, the definition of planar copy will be useful in the following.    

In the next lemma we show how to use a given edge partition of $G$ to compute an embedding-preserving \opvr of $G$ whose vertex complexity is at most twice the maximum vertex degree of the red graph. 

\begin{lemma}\label{le:decomp-opvr}
Let $G$ be an $n$-vertex 3-connected 1-plane graph with a given edge partition such that the maximum vertex degree of the red graph is $d$. There exists an $O(n)$-time algorithm that computes an embedding-preserving \opvr of $G$ with vertex complexity at most $2d$ on an integer grid of size $O(n) \times O(n)$.
\end{lemma}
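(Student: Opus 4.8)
The plan is to realize the construction sketched before the lemma: draw every blue edge as a vertical visibility, every red edge as a horizontal visibility, and turn each vertex into a ``rake''-shaped orthogonal polygon whose tines carry the horizontal red visibilities. I may assume that $G$ is crossing-augmented, since this can be enforced in $O(n)$ time without changing the embedding (Section~\ref{sse:background}). The first observation is that, because both $G_R$ and $G_B$ are plane, every crossing of $G$ is between a red edge and a blue edge; hence a horizontal red visibility and a vertical blue visibility will meet perpendicularly, exactly as a crossing should. A second observation is that deleting all red edges from the planarization of $G$ leaves precisely $G_B$ (each dummy vertex loses its two red stubs and becomes a subdivision point of a blue edge). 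Thus every uncrossed red edge has both endpoints on a common face of $G_B$, and every crossed red edge has its two parts lying in the two faces of $G_B$ that share the crossed blue edge.

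First I would build a bar visibility representation $\gamma_B$ of $G_B$. To do so I augment $G_B$ to a maximal plane graph (the extra edges only guide the construction and are erased afterwards), fix a bipolar orientation, and apply the linear-time strong BVR algorithm for plane $st$-graphs~\cite{DBLP:journals/dcg/RosenstiehlT86,TamassiaTollis86}. This places every vertex on a horizontal bar and every blue edge on a vertical visibility, on an $O(n)\times O(n)$ integer grid whose planar embedding is the one inherited from $G$ restricted to the blue edges. The outgoing (incoming) blue edges of a vertex attach above (below) its bar in the left-to-right order prescribed by the embedding, and, as recalled before the lemma, each inner face of $\gamma_B$ is bounded by a left and a right $y$-monotone path, so at each horizontal level a face is a single horizontal interval.

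Next I would insert the red edges, processing the faces of $G_B$ one at a time. The red edges that must be drawn inside a face $f$ form a set of pairwise non-crossing chords of the boundary of $f$ (because $G_R$ is plane and embedding-preserving), so they can be assigned distinct horizontal levels compatible with their nesting order. For a red edge $(u,v)$ I attach a short vertical tine to $\gamma_B(u)$ and one to $\gamma_B(v)$, each running from the incident bar only as far as the common chosen level $\ell$, and connect their facing vertical sides by a horizontal visibility at height $\ell$; if $(u,v)$ is crossed by a blue edge $e_B$, the two tines lie in the two faces sharing $e_B$, i.e.\ on opposite sides of the vertical visibility of $e_B$, so that the horizontal sightline crosses $e_B$ exactly once and nothing else (the $y$-range of $f$ adjacent to $e_B$ is exactly the span of $e_B$, so $\ell$ falls on the visibility). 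Thickening every rake into an orthogonal polygon then yields the \opvr: a vertex incident to $\delta\le d$ red edges receives $\delta$ tines, and each tine contributes exactly two reflex corners where it meets the handle, so the vertex complexity is at most $2d$. Since there are $O(n)$ red edges the grid remains $O(n)\times O(n)$ after a constant refinement, and the whole procedure runs in $O(n)$ time.

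The main obstacle is to make two interacting requirements hold simultaneously inside every face: (i) every horizontal red channel must be free of any third polygon's bar or tine and of every blue visibility except the one it is meant to cross; and (ii) the cyclic order of the visibilities around each vertex and around each crossing must equal the given embedding. Requirement~(i) is where the difficulty concentrates: keeping the tines short and assigning the levels in nesting order within the $x$-monotone face must be shown to prevent any tine from blocking a sibling's horizontal sightline, which amounts to laying out the outerplane graph formed by the chords of $f$ as nested visibilities. Requirement~(ii) is handled by inserting the tines into the correct gaps of the rotation at each vertex (the blue edges already occupying the top and bottom arcs) and by checking that, since $\gamma_B$ realizes the blue embedding inherited from $G$, each crossed red edge leaves $u$ and enters $v$ on the embedding-prescribed sides of $e_B$, so the perpendicular crossing inherits the correct rotation. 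Verifying these two points in full is the technical heart of the proof; the reflex-corner count, the grid size, and the running time then follow directly.
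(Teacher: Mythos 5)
Your high-level plan coincides with the paper's (blue BVR, then red edges as horizontal visibilities between vertical tines, then thickening into rakes), but the proposal leaves exactly the step that constitutes the actual proof unproven, and it lacks the ingredients the paper uses to prove it. You yourself flag requirement~(i) --- that each horizontal red channel is unobstructed --- as ``the technical heart'' and then only describe what \emph{would} have to be shown. This is not a verifiable argument: once the strong BVR of the augmented blue graph is computed, all bar positions are fixed, so your claim that non-crossing red chords of a face of $G_B$ ``can be assigned distinct horizontal levels compatible with their nesting order'' does not follow from planarity of $G_R$; a bar of a third vertex on the face boundary (or a tine added for a sibling red edge) can lie at every candidate level between the two endpoints, and nothing in your construction rules this out. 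Likewise the parenthetical ``the $y$-range of $f$ adjacent to $e_B$ is exactly the span of $e_B$, so $\ell$ falls on the visibility'' is asserted, not proved.

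The paper closes this gap with three ideas that are absent from your proposal. First, the augmentation of $G_B$ is not arbitrary: before making it maximal, one inserts a \emph{planar copy} of every red cycle edge, so that for each crossing pair $(u,v)$, $(w,z)$ the whole $4$-cycle $\{u,w,v,z\}$ is present in $G^*_B$; this confines each red edge to a small region of the BVR bounded by drawn edges and makes the problem local (one red edge per kite or augmented B-configuration), so no global ``nesting'' scheme is needed. Second, unobstructedness is derived from the \emph{strongness} of the BVR combined with 1-planarity: if some bar $x$ met the vertical segment intended for a tine, $x$ would see both endpoints of a cycle edge, strongness would force a path through $x$ inside the region, and that path would have to cross the already-crossed edge $(w,z)$ --- a contradiction. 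You invoke a strong BVR algorithm but never use strongness anywhere. Third, for augmented B-configurations one needs Claim~\ref{cl:bconf} --- that the $4$-cycle is drawn with a transitive edge --- which is what forces the paper to choose the source and sink $s,t$ on the outer face of $G$ and to exploit 3-connectivity; your proposal never uses 3-connectivity at all, which is a strong sign that the hardest case (B-configurations, where the two crossing endpoints lie on the same side) is not handled. Without these three ingredients the construction as you describe it can fail, so the proposal has a genuine gap rather than being an alternative proof.
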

\begin{proof}
The proof is based on an algorithm that works in three steps. In the first step we augment $G_B$ to a maximal plane graph and we compute a BVR of the resulting graph. In the second step the edges of $G_R$ are inserted in the BVR computed in the first step. In the third step an \opvr of $G$ is computed.

\smallskip\noindent\textbf{Step 1: BVR computation.} We first show how to augment $G_B$ to a maximal plane graph $G^*_B$ such that, for each pair of crossing edges in $G$, the corresponding four cycle edges belong to $G^*_B$. For each cycle edge $e$ of $G$ that is colored red (and thus belongs to $G_R$), we introduce in $G_B$ its planar copy $e'$. Afterwards, we augment the graph (by adding edges) until it is maximal  (which can be done in $O(n)$ time without introducing multiple edges, see, e.g.,~\cite{DBLP:conf/swat/KantB92}). Since $G^*_B$ is maximal, a strong BVR of $G^*_B$ using the $\varepsilon$-visibility model can be computed as follows. We first orient the edges of $G^*_B$ such that the resulting directed graph is acyclic and contains a single source $s$ and a single sink $t$ on its outer face, i.e., it is a plane $st$-graph. This can be done in $O(n)$ time (see, e.g.,~\cite{DBLP:journals/dcg/RosenstiehlT86,TamassiaTollis86}). A strong BVR\footnote{To avoid confusion, it might be worth observing that the terminology used in~\cite{TamassiaTollis86} is slightly different. In particular, a strong BVR using the $\varepsilon$-visibility model (i.e., the model we are referring to in this point) is called an $\varepsilon$-visibility representation in~\cite{TamassiaTollis86}. } $\gamma$ of the plane $st$-graph $G^*_B$ can be computed in $O(n)$ time, such that $s$ and $t$ are represented by the bottommost and the topmost bars of $\gamma$, respectively~\cite{TamassiaTollis86}.
For our purposes we choose as $s$ and $t$ two vertices that belong to the outer face of $G$ (and therefore of $G^*_B$). Observe that, since $G$ is 1-plane, it has at least two vertices on the outer face.  With this choice we can prove the following claim, which will be useful in the remainder of the proof, recall that $G_B^*$ has been constructed such that it contains all cycle edges for each pair of crossing edges in $G$.

\begin{figure}[t]
    \centering
    \includegraphics[scale=0.6,page=3]{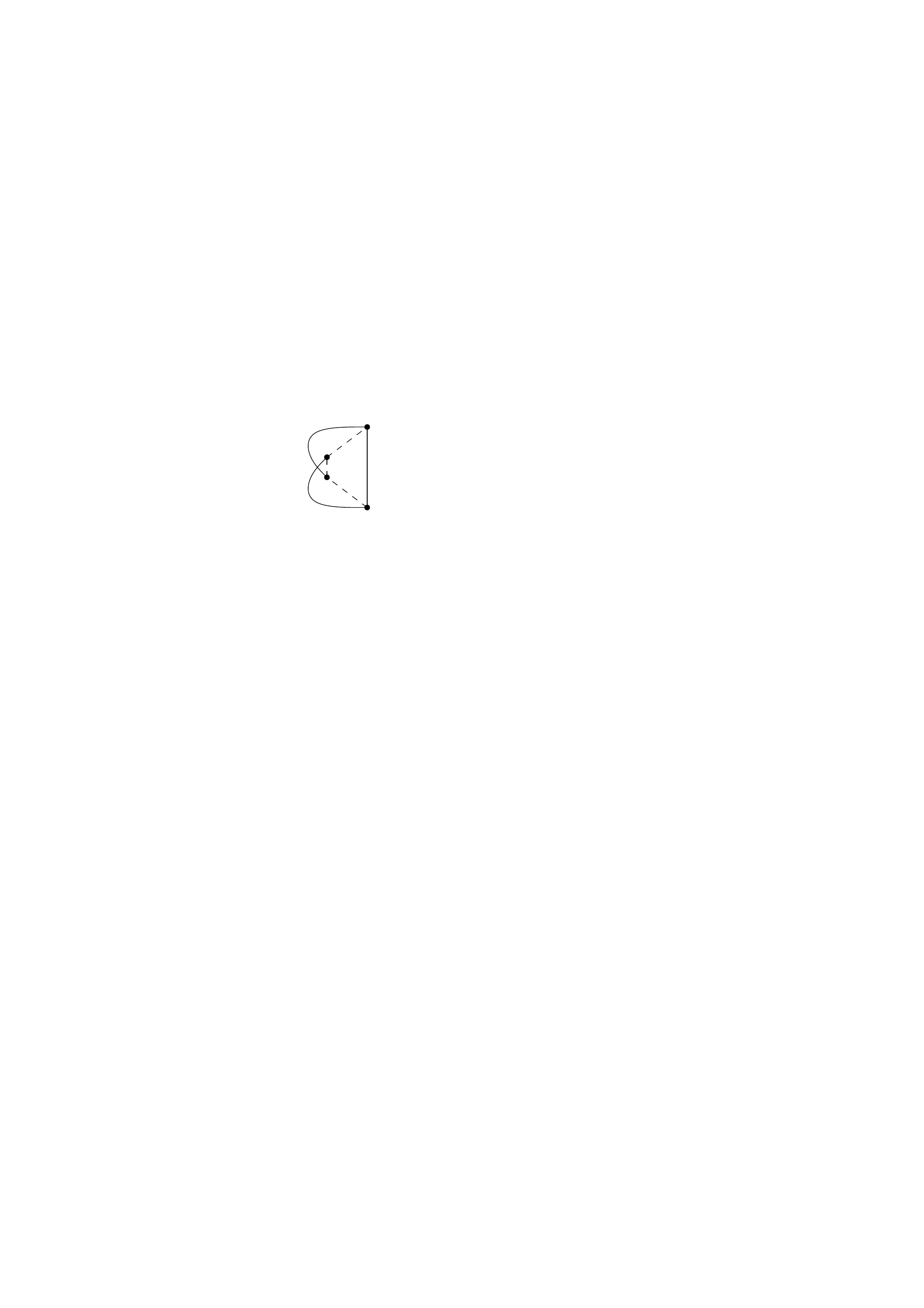}
    \caption{Illustration for the proof of Claim~\ref{cl:bconf}.\label{fi:transitive}}
\end{figure}

\begin{claim}\label{cl:bconf}
For each augmented B-configuration of $G$ formed by two crossing edges $(u,v)$ and $(w,z)$, the four cycle edges of $(u,v)$ and $(w,z)$ are oriented in $G_B^*$ such that one of them is a transitive edge for the cycle  $\{u,w,v,z\}$.
\end{claim}
\begin{proof}
Refer to Figure~\ref{fi:transitive}. Let $(u,v)$ and $(w,z)$ be a pair of crossing edges forming an augmented B-configuration in $G$, such that vertices $w$ and $v$ lie inside the cycle composed of the edge $(u,z)$, the part of the edge $(u,v)$ from $u$ to the crossing point, and the part of the edge $(w,z)$ from the crossing point to $z$. Suppose, for a contradiction, that the cycle $\{u,w,v,z\}$ does not contain a transitive edge in $G^*_B$. In other words, both the left path and the right path of this cycle contain one vertex. Then consider the two ``inner'' vertices of the B-configuration, $v$ and $w$. Since the  cycle contains no transitive edge, either $v$ or $w$, say $v$, is either the origin or the destination of the cycle. Suppose that $v$ is the destination of the cycle (if it is the origin the proof is symmetric). Since $G^*_B$ is 3-connected, there is at least one path from $v$ to $t$ in $G^*_B$. This path can cross neither the edge $(w,z)$ nor the edge $(u,v)$ in $G$ because $(w,z)$ and $(u,v)$ already cross each other. This path cannot cross edge $(u,z)$ as otherwise $G^*_B$ would not be plane (edge $(u,z)$ is a cycle edge and hence either it belongs to $G_B$ and thus to $G^*_B$, or its planar copy has been added to $G^*_B$). It follows that $t$ lies inside the cycle formed by the part of the edge $(u,v)$ from $u$ to the crossing point, the part of the edge $(w,z)$ from the crossing point to $z$, and the edge $(u,z)$. This contradicts the fact that $t$ is on the outer face of $G$.
\end{proof}

\smallskip\noindent\textbf{Step 2: Insertion of the edges of $G_R$.} We now show how to modify $\gamma$ in order to insert the edges of $G_R$. Let $(w,z)$ be an edge of $G_R$, and let $(u,v)$ be the edge of $G^*_B$ crossed by $(w,z)$. Since $G$ is crossing-augmented, if two pairs of crossing edges form a W-configuration, then each of the two pairs forms either a kite or an augmented B-configuration. Similarly, if three pairs of crossing edges form a T-configuration, then each single pair forms either a kite or an augmented B-configuration. Based on this observation, we distinguish whether $(u,v)$ and $(w,z)$ form a kite or an augmented B-configuration in $G$.

\smallskip\noindent{\bf Case A.}  Edges $(u,v)$ and $(w,z)$ form a kite in $G$. Then we further distinguish between the two cases where the cycle $\{u,w,v,z\}$ has a transitive edge in $G^*_B$ or not.

\begin{figure}[t]
    \centering
    \begin{minipage}[b]{.31\textwidth}
    	\centering
    	\includegraphics[scale=0.5,page=2]{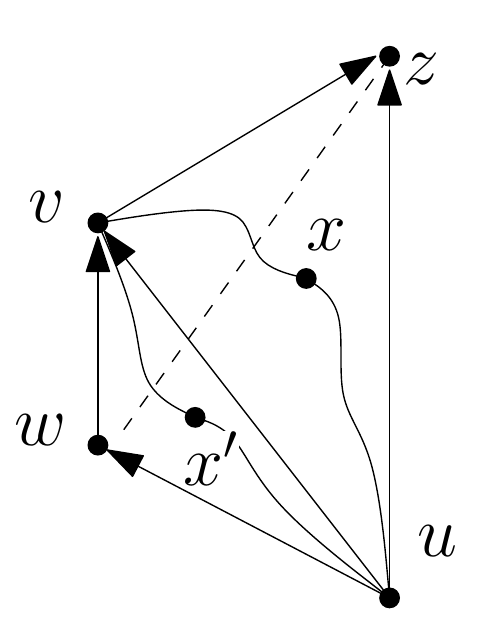}
    	\subcaption{}\label{fi:reinsertion-a1-1}
    \end{minipage}
     \begin{minipage}[b]{.31\textwidth}
    	\centering
    	\includegraphics[scale=0.5,page=4]{reinsertion}
    	\subcaption{}\label{fi:reinsertion-a1-2}
    \end{minipage}
    \begin{minipage}[b]{.31\textwidth}
    	\centering
    	\includegraphics[scale=0.5,page=6]{reinsertion}
    	\subcaption{}\label{fi:reinsertion-a1-5}
    \end{minipage}
    \\
    \begin{minipage}[b]{.31\textwidth}
    	\centering
    	\includegraphics[scale=0.5,page=3]{reinsertion}
    	\subcaption{}\label{fi:reinsertion-a1-3}
    \end{minipage}
     \begin{minipage}[b]{.31\textwidth}
    	\centering
    	\includegraphics[scale=0.5,page=1]{reinsertion}
    	\subcaption{}\label{fi:reinsertion-a1-4}
    \end{minipage}
    \caption{Illustration for {\bf Case A1} of the proof of Theorem~\ref{th:decomp-opvr}.}
\end{figure}

{\bf Case A.1.} Suppose first that the cycle $\{u,w,v,z\}$ has a transitive edge, say $(u,z)$, as shown in Figure~\ref{fi:reinsertion-a1-1}. Let $\sigma$ be the vertical segment that connects $u$ and $z$ and passes through the rightmost point of $v$, as shown in Figure~\ref{fi:reinsertion-a1-2}. We claim that there is no horizontal bar of a vertex that shares an inner point with $\sigma$.  If a vertex $x$ had one such horizontal bar, then it would see both $u$ and $v$ inside the region of $\gamma$ bounded by $(u,v)$, $(v,z)$, and $(u,z)$ (see also Figure~\ref{fi:reinsertion-a1-3}). Since $\gamma$ is a strong BVR, there would exist a path connecting $u$ and $v$ and containing $x$. Such a path would cross the edge $(w,z)$ in $G$, which is impossible because $G$ is 1-planar and $(w,z)$ is already crossed by $(u,v)$ (see also Figure~\ref{fi:reinsertion-a1-4}).

Analogously, let $\sigma'$ be the vertical segment connecting $v$ to the rightmost point of $w$, as shown in Figure~\ref{fi:reinsertion-a1-2}. We claim that there is no horizontal bar that shares an inner point with $\sigma'$. If a vertex $x'$ had one such horizontal bar, then it would see both $u$ and $v$ inside the region of $\gamma$ bounded by $(u,w)$, $(w,v)$, and $(u,v)$ (see also Figure~\ref{fi:reinsertion-a1-3}). Since $\gamma$ is a strong BVR, there would exist a path connecting $u$ and $v$ and containing $x'$. Such a path would cross the edge $(w,z)$ in $G$, which is impossible because $G$ is 1-planar and $(w,z)$ is already crossed by $(u,v)$ (see also Figure~\ref{fi:reinsertion-a1-4}).

It follows that if a horizontal bar intersects $\sigma$ or $\sigma'$, this intersection happens at an endpoint of the bar. Since we are using the $\varepsilon$-visibility model, the bar can be slightly shortened so not to intersect $\sigma$ or $\sigma'$ anymore. Then we can use $\sigma$ and $\sigma'$  to draw one vertical bar attached to the horizontal bar of $w$ and one vertical bar attached to the horizontal bar of $z$, such that they are contained in $\sigma$ and $\sigma'$, respectively, and they see each other through a horizontal visibility that crosses $(u,v)$, see Figure~\ref{fi:reinsertion-a1-5}.

{\bf Case A.2.} Suppose now that the cycle $\{u,w,v,z\}$ has no transitive edge, as shown in Figure~\ref{fi:reinsertion-a2-1}, and hence is drawn as in Figure~\ref{fi:reinsertion-a2-2}. By applying a similar argument as above, we can draw two vertical bars attached to the horizontal bars of $w$ and $z$, respectively, and such that they see each other crossing $(u,v)$, as shown in Figure~\ref{fi:reinsertion-a2-3}.

\begin{figure}[t]
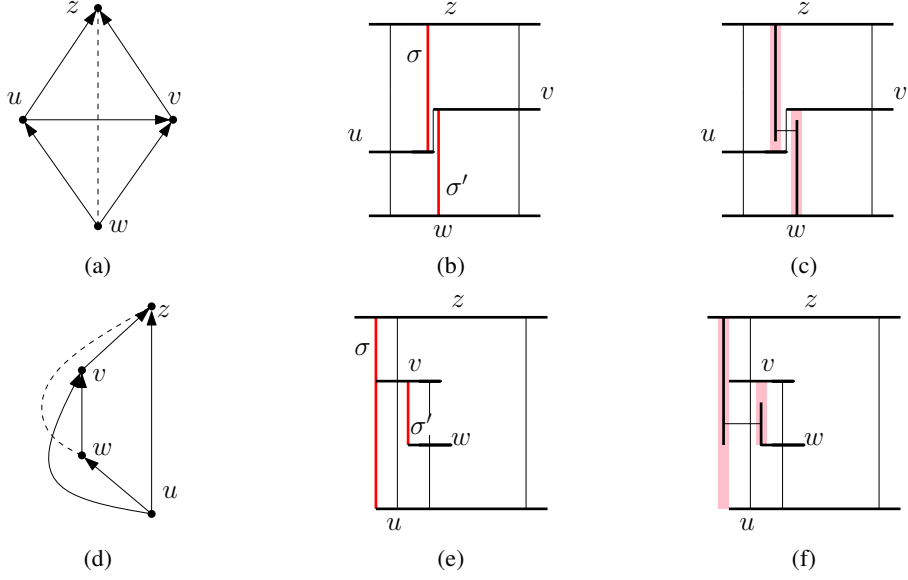

    \centering  
    \begin{minipage}[b]{.31\textwidth}
    	\centering
    	\includegraphics[scale=0.5,page=5]{reinsertion}
    	\subcaption{}\label{fi:reinsertion-a2-1}
    \end{minipage}
    \begin{minipage}[b]{.31\textwidth}
    	\centering
    	\includegraphics[scale=0.5,page=7]{reinsertion}
    	\subcaption{}\label{fi:reinsertion-a2-2}
    \end{minipage}
    \begin{minipage}[b]{.31\textwidth}
    	\centering
    	\includegraphics[scale=0.5,page=8]{reinsertion}
    	\subcaption{}\label{fi:reinsertion-a2-3}
    \end{minipage}
    \begin{minipage}[b]{.31\textwidth}
    	\centering
    	\includegraphics[scale=0.5,page=9]{reinsertion}
    	\subcaption{}\label{fi:reinsertion-b-1}
    \end{minipage}
    \begin{minipage}[b]{.31\textwidth}
    	\centering
    	\includegraphics[scale=0.5,page=10]{reinsertion}
    	\subcaption{}\label{fi:reinsertion-b-2}
    \end{minipage}
    \begin{minipage}[b]{.31\textwidth}
    	\centering
    	\includegraphics[scale=0.5,page=11]{reinsertion}
    	\subcaption{}\label{fi:reinsertion-b-3}
    \end{minipage}
    \caption{Illustration for (a)--(c) {\bf Case A.2} and (d)--(f) {\bf Case B} of the proof of Theorem~\ref{th:decomp-opvr}.}
\end{figure}

\smallskip\noindent{\bf Case B.}  Edges $(u,v)$ and $(w,z)$ form an augmented B-configuration in $G$. By Claim~\ref{cl:bconf}, the cycle $\{u,w,v,z\}$ always has a transitive edge in $G^*_B$, as shown in Figure~\ref{fi:reinsertion-b-1}, and hence is drawn as in Figure~\ref{fi:reinsertion-b-2}. Then it can be handled analogously as in the above cases, as shown in Figure~\ref{fi:reinsertion-b-3}.

Once all red edges have been inserted, we remove the vertical visibilities representing the planar copies of red cycle edges introduced in \textbf{Step 1}, if any. We conclude the description of this step by observing that it can be performed in $O(n)$ time, since each red edge can be reinserted in $O(1)$ time. 

\smallskip\noindent\textbf{Step 3: Computation of the \opvr of $G$.} Denote by $\gamma^*$ the visibility representation of $G$ obtained after {\bf Step 2}. Each edge of $G$ is now represented as vertical or horizontal visibility between two horizontal or two vertical bars, respectively. Since for each vertex we inserted at most $d$  edges, each vertex is represented by a ``rake''-shaped object with one horizontal bar and at most $d$ vertical bars. By slightly thickening these objects, we obtain orthogonal polygons with at most $2d$ reflex angles, and thus an embedding-preserving \opvr $\gamma$ of $G$ with vertex complexity at most $2d$. In order to obtain an \opvr on an integer grid, we can extract an orthogonal representation $H$ from $\gamma$ and then use the compaction step of the TSM approach. Since $G$ has $O(n)$ crossings and $H$ has $O(n)$ bends, the compaction can be executed in $O(n)$ time and the size of the resulting \opvr is $O(n) \times O(n)$.
\end{proof}

Combining Theorem~\ref{th:edge-decomposition} and Lemma~\ref{le:decomp-opvr}, we obtain the following theorem.

\begin{theorem}\label{th:decomp-opvr}
Let $G$ be a 3-connected 1-plane graph with $n$ vertices. There exists an $O(n)$-time algorithm that computes an embedding-preserving \opvr of $G$ with vertex complexity at most 12 on an integer grid of size $O(n) \times O(n)$.
\end{theorem}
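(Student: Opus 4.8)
The plan is to obtain the statement as a direct composition of the two main results of this section, since all the substantive work has already been carried out in Theorem~\ref{th:edge-decomposition} and Lemma~\ref{le:decomp-opvr}. First I would invoke Theorem~\ref{th:edge-decomposition} on the input 3-connected 1-plane graph $G$: this produces, in $O(n)$ time, an edge partition into a red graph $G_R$ and a blue graph $G_B$ (each plane, with embedding induced by that of $G$) such that the maximum vertex degree of $G_R$ is at most six. The 3-connectivity hypothesis is exactly what that theorem requires, so the partition is guaranteed to exist and to be computable within the stated time bound.

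Next I would feed this partition directly into Lemma~\ref{le:decomp-opvr}, instantiated with $d=6$. That lemma takes as input an $n$-vertex 3-connected 1-plane graph together with an edge partition whose red graph has maximum degree $d$, and it outputs, in $O(n)$ time, an embedding-preserving \opvr of vertex complexity at most $2d$ laid out on an integer grid of size $O(n)\times O(n)$. Substituting $d=6$ yields vertex complexity at most $2\cdot 6 = 12$, which is the bound claimed in the theorem. The grid-size and running-time guarantees are inherited verbatim from the lemma.

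Finally I would add up the costs to confirm the overall $O(n)$ running time: computing the partition takes $O(n)$ by Theorem~\ref{th:edge-decomposition}, and converting it into the \opvr takes $O(n)$ by Lemma~\ref{le:decomp-opvr}, so the total is $O(n)$. I do not anticipate any genuine obstacle here, because the proof is purely a matter of assembling the two ingredients; the one point meriting a sentence of care is simply that the constant degree bound $d=6$ from the edge partition matches the $2d$ dependence in the lemma, giving the clean constant $12$, and that both building blocks operate on the \emph{same} embedded 3-connected 1-plane graph so that the embedding-preserving property is maintained throughout.
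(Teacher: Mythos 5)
Your proposal is correct and matches the paper's own proof exactly: the paper also combines Theorem~\ref{th:edge-decomposition} (edge partition with red degree at most six, computable in $O(n)$ time) with Lemma~\ref{le:decomp-opvr} instantiated at $d=6$ to obtain vertex complexity $2d=12$ on an $O(n)\times O(n)$ grid in $O(n)$ total time. Nothing is missing.
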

\begin{proof}
By Theorem~\ref{th:edge-decomposition}, every $n$-vertex 3-connected 1-plane graph has an edge partition such that the red graph has maximum vertex degree six, which can be computed in $O(n)$ time. We can exploit this edge partition and apply the algorithm of Lemma~\ref{le:decomp-opvr} to compute an embedding-preserving \opvr of $G$ with vertex complexity at most $12$ on an integer grid of size $O(n)\times O(n)$ in $O(n)$ time.
\end{proof}

Based on Theorem~\ref{th:decomp-opvr}, we can significantly improve the time complexity to compute an optimal \opvr of 3-connected 1-plane graphs.

\begin{theorem}\label{th:3conn-ub}
Let $G$ be a 3-connected 1-plane graph with $n$ vertices. There exists an  $O(n^{\frac{7}{4}}\sqrt{\log n})$-time algorithm that computes an embedding-preserving optimal \opvr $\gamma$ of $G$, on an integer grid of size $O(n) \times O(n)$. Also, $\gamma$ has the minimum number of reflex corners among all the embedding-preserving optimal {\opvr}s of $G$.
\end{theorem}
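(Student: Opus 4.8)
The plan is to reuse the min-cost flow machinery from the proof of Lemma~\ref{le:opt}, but to exploit the constant upper bound on vertex complexity certified by Theorem~\ref{th:decomp-opvr} to sharpen the running-time analysis. The crucial observation I would start from is that, for a 3-connected 1-plane graph $G$, Theorem~\ref{th:decomp-opvr} guarantees an embedding-preserving \opvr with vertex complexity at most $12$; hence the optimal vertex complexity $k$ satisfies $k \le 12$. This single fact improves the analysis of Lemma~\ref{le:opt} in two independent ways.

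First, the binary search over the vertex-complexity bound $h$ need no longer scan the range $[0,4n]$: it suffices to test the values $h \in \{0,1,\dots,12\}$, a constant number of tests rather than $O(\log n)$. Second, and more importantly, each individual test becomes cheaper. Recall that a single test builds the flow network $N''$ with $n''=O(n)$ nodes and $m''=O(n)$ arcs and runs the Garg--Tamassia min-cost flow algorithm in time $O(\chi^{3/4} m'' \sqrt{\log n''})$, where $\chi$ is the cost of the computed flow, i.e., the total number of reflex corners of the resulting \opvr. In the general setting $\chi$ is only bounded by $O(n^2)$, but when we test with a bound $h\le 12$ every expansion cycle has at most $h$ reflex corners and there are exactly $n$ expansion cycles, so the total number of reflex corners---and therefore $\chi$---is at most $12n=O(n)$. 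Substituting $\chi=O(n)$ and $m''=O(n)$ into the Garg--Tamassia bound yields $O(n^{3/4}\cdot n\cdot\sqrt{\log n})=O(n^{7/4}\sqrt{\log n})$ per test.

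Since there are only $O(1)$ tests, the total time to determine the optimal value $k$ and to produce a corresponding orthogonal representation $H$ is $O(n^{7/4}\sqrt{\log n})$. The optimality of $\gamma$, and the fact that it minimizes the total number of reflex corners among all optimal {\opvr}s, I would inherit verbatim from Lemma~\ref{le:opt}, since the flow computed for $h=k$ is precisely a min-cost feasible flow of $N''$. Finally, because $k\le 12$, the orthogonal representation $H$ has $O(n)$ bends; thus the compaction step of the TSM framework runs in $O(n)$ time and produces a drawing on an integer grid of size $O(n)\times O(n)$.

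The one point that will require care is the bound $\chi=O(n)$: I must argue that the \emph{cost} of the min-cost flow---not merely its value---drops to linear. This follows because the cost equals the total number of reflex corners, which is at most (vertex complexity)$\times$(number of expansion cycles)$=12n$. Everything else is a direct re-parametrization of the analysis already carried out in Lemma~\ref{le:opt}, now specialized to the constant-complexity regime certified by Theorem~\ref{th:decomp-opvr}.
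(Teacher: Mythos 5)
Your proposal is correct and follows essentially the same route as the paper: both invoke Theorem~\ref{th:decomp-opvr} to cap the optimal vertex complexity at $12$, so that only a constant number of feasibility tests on $N''$ are needed and the flow cost drops to $\chi=O(n)$, giving $O\left(\chi^{\frac{3}{4}}n\sqrt{\log n}\right)=O\left(n^{\frac{7}{4}}\sqrt{\log n}\right)$ via the Garg--Tamassia algorithm, after which the TSM compaction step yields the $O(n)\times O(n)$ grid. One minor imprecision: the flow cost counts \emph{all} bends (both reflex and convex corners of the polygons), not only the reflex corners as you state; but since each polygon has exactly four more convex corners than reflex ones, the bound $\chi=O(n)$ holds regardless and your argument is unaffected.
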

\begin{proof}
We use the same terminology and notation as in the proof of Lemma~\ref{le:opt}. Let $\gamma$ be an optimal \opvr of $G$ and let $H$ be the corresponding orthogonal representation. By Theorem~\ref{th:decomp-opvr}, $\gamma$ has vertex complexity at most $12$. This implies that $H$ can be computed by executing at most $12$ tests for the existence of a feasible flow for the network $N''$. Also, since the vertex complexity is at most $12$, $H$ has $O(n)$ bends and thus the cost of the flow is $\chi = O(n)$. It follows that $H$ can be computed in time $O(\chi^{\frac{3}{4}}n\sqrt{\log n})=O(n^{\frac{7}{4}}\sqrt{\log n})$.

A drawing of $H$ is obtained by applying the compaction step of the TSM framework. Since the number of bends of $H$ is $O(n)$ and since the number of crossings of a 1-plane graph is $O(n)$ (see, e.g.,~\cite{Suzuki2010}), this step is executed in $O(n)$ time and produces a drawing on an integer grid of  size $O(n) \times O(n)$.
\end{proof}

It is known that there are 3-connected 1-plane graphs such that any  embedding-preserving \opvr has vertex complexity at least one~\cite{SoCG}. To improve this lower bound we use the same graph family as the one used for the tightness of the vertex degree bound in the proof of Theorem~\ref{th:edge-decomposition}.

\begin{theorem}\label{th:3conn-lb}
There exists an infinite family $\mathcal G$ of 3-connected 1-plane graphs such that for any graph $G$ of $\mathcal G$, any embedding-preserving \opvr of $G$ has vertex complexity at least two.
\end{theorem}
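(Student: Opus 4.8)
The plan is to use the same infinite family $\mathcal G$ employed in the proof of Theorem~\ref{th:edge-decomposition}: start from a maximal plane graph $G_p$ with $n\ge 13$ vertices and insert an augmented T-configuration (Figure~\ref{fi:augtconf}) into every face of $G_p$, identifying the three outer vertices of the configuration with the three vertices of the face. As already argued there, each such $G$ is $3$-connected and $1$-plane. I would argue by contradiction: assume $G$ has an embedding-preserving \opvr $\gamma$ in which every polygon has at most one reflex corner, and work with the orthogonal representation $H$ of the planarized expansion $\bG$ corresponding to $\gamma$ (Lemma~\ref{le:opvr-od}). By Property~{\bf P2} every real edge is a straight axis-parallel visibility, so each real edge has a well-defined direction (horizontal or vertical); at every dummy vertex the two crossing edges are mutually perpendicular, hence their four half-edges point in the four compass directions. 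Reading the boundary of a polygon, its reflex corners are exactly the clockwise quarter-turns, and between two consecutive attachment points the direction of the boundary changes by $(\#\mathsf{convex}-\#\reflex)\cdot\frac{\pi}{2}$ along the intervening arc.

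First I would isolate a single augmented T-configuration, with inner triangle $v,y,z$ and crossings $p,q,t$, and express, for each inner vertex, the directions of its four incident real edges through these arc-wise turning numbers. Summing Property~\ref{pr:face} over the face bounded by the inner triangle shows that the three polygon arcs bordering that face carry a net of exactly two convex corners; the three straight triangle edges $vy,yz,zv$ then impose a winding relation among these arcs (three $\frac{\pi}{2}$-turns cannot close a triangle, so a $\frac{\pi}{2}$ deficit must be paid). Feeding in the perpendicularity at $p,q,t$, whose signs are fixed by the rotation system of $\bG$, turns this deficit into a transfer of convexity along the six crossing edges from the inner triangle to the outer vertices $u,w,x$: each quadrilateral face bounded by one outer vertex, two crossings and one inner vertex satisfies a relation of the form $A^{\mathrm{inner}}+A^{\mathrm{outer}}=2$ between the two arc-turning numbers, so the more rectangular the inner polygons are, the more reflex corners are pushed onto $P(u),P(w),P(x)$.

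Finally I would globalize. Since $G_p$ is a triangulation, the three edges of every face form a $3$-cycle of straight visibilities, and the same winding computation applied to the region it encloses shows that the three arcs of $P(u),P(w),P(x)$ facing that region again carry a net of two convex corners. Writing $A^u_f$ for the net turning of $P(u)$ on the arc facing face $f$, these relations read $\sum_{u\in f}A^u_f=2$ for each internal face $f$ and $\sum_{f\ni u}A^u_f=4$ for each vertex $u$; together with the crossing constraints of the previous paragraph they couple all faces of $G_p$ into one rigid system. A discharging argument over the $2n-4$ faces, exploiting the $3$-connectivity of $G$, would show that this system has no solution in which every polygon turns clockwise at most once, so some polygon must have two reflex corners and $\gamma$ has vertex complexity at least two. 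As $\mathcal G$ is infinite, the theorem follows.

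The hard part is precisely this last step. A single configuration does \emph{not} force two reflex corners on one vertex: its three inner polygons can all be rectangles provided the surrounding vertices absorb the deficit, so the bound is genuinely global, and the delicate point is to prove that the accumulated winding deficits of the $2n-4$ triangular faces cannot be spread out at a rate of at most one reflex corner per polygon. Pinning down the signs of the perpendicularity constraints from the embedding, and arranging the discharging so that the surplus provably concentrates on a single polygon, is where the main obstacle lies.
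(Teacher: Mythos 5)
Your family, your reduction to an orthogonal representation of $\bG$ via Lemma~\ref{le:opvr-od}, and your local angle relations (net two convex corners on the arcs bounding the inner triangle face, and $A^{\mathrm{inner}}+A^{\mathrm{outer}}=2$ on each quadrilateral face) all match the paper and are correct. The genuine gap is exactly the step you flag at the end: the infeasibility argument for vertex complexity at most one is never carried out, and the globalized system you propose to discharge over cannot yield it. The relations $\sum_{u\in f}A^u_f=2$ (per face of $G_p$) and $\sum_{f\ni u}A^u_f=4$ (per vertex of $G_p$), with all $A^u_f\geq 0$ as forced when every polygon is a rectangle, are \emph{satisfiable}: they form a transportation problem on the vertex--face incidences of a triangulation, and Hall's condition $2|S|\leq 4|N(S)|$ holds for every face set $S$ because $|S|\leq 2|N(S)|-4$ in a plane graph. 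So no discharging over these equations alone can exclude vertex complexity $0$, let alone $1$; collapsing everything inside a face of $G_p$ into a single net turning number destroys precisely the information that carries the lower bound.

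What the paper uses instead is a set of faces your analysis never touches: inside each face $f=\{u,w,x\}$ of $G_p$ there are three faces of $\bG$ wedged between the T-configuration and the boundary of $f$, e.g.\ the face bounded by the real edge $(u,x)$, the two half-edges meeting at the crossing $q$, one expansion edge of $C(u)$ and one of $C(x)$. Such a face has exactly five $\ph$ vertex-angles (four expansion vertices, one dummy) and bend-free real edges, so Property~\ref{pr:face} forces at least one $\pt$ bend inside it, i.e.\ one convex corner of $P(u)$ or $P(x)$. Crucially these faces are incident \emph{only} to polygons of $G_p$-vertices, so a demand of $3(2n-4)=6n-12$ convex corners falls on the $n$ polygons of $G_p$-vertices, each of which has at most $k+4$ convex corners when it has at most $k$ reflex ones; hence $(4+k)n\geq 6n-12$ and $k\geq 2$ for $n\geq 13$ --- a one-line pigeonhole, with no winding system, no 3-connectivity, no discharging. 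It is worth noting that the relations you did derive would also have sufficed, had you summed them per configuration rather than per face of $G_p$: the seven faces incident to inner polygons demand net $2+3\cdot 1+3\cdot 2=11$ convex corners while the three inner polygons supply net $12$, so the outer arcs facing the three quadrilaterals carry net turning exactly $-1$; this forces at least one reflex corner of a $G_p$-vertex polygon inside every face of $G_p$, giving $kn\geq 2n-5$ and again $k\geq 2$. But that combination is exactly the step your proposal defers to an unspecified discharging argument, so as written the proof is incomplete.
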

\begin{proof}

\begin{figure}[t]
    \centering
    \includegraphics[scale=0.4,page=2]{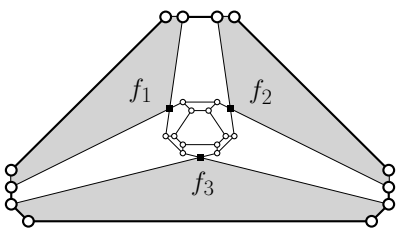}
    \caption{ Illustration for the proof of Theorem~\ref{th:3conn-lb}.}\label{fi:3connlb-2}
\end{figure}

Consider the same family of graphs used to prove the tightness of Theorem~\ref{th:edge-decomposition}, and recall that any $n$-vertex graph $G$ in this family is 3-connected and 1-plane by construction. Since $G$ is 1-plane it admits an \opvr by Theorem~\ref{th:1-planar}.

Consider an \opvr of $G$ and the corresponding orthogonal drawing $\Gamma$. For each face $f$ of $G_p$, there are three faces $f_i$ ($i=1,2,3$) in $G$, such that each of these three faces contains four expansion vertices and one dummy vertex. In Figure~\ref{fi:3connlb-2} the three faces $f_i$ ($i=1,2,3$) for the face $f$ in Figure~\ref{fi:3connlb-1} are highlighted. For each face $f_i$, the dummy vertex forms a $\ph$ angle inside $f_i$. Also, each expansion vertex forms one $\ph$ angle. In total there are exactly five $\ph$ vertex-angles inside $f_i$.  Then, since the real edges of $f_i$ do not have bends, by Property~\ref{pr:face} one of the two expansion edges must form (at least) one bend-angle of $\pt$ inside $f_i$, and therefore a bend-angle of $\ph$ inside the corresponding expansion cycle. Since there are $2n-4$ faces in $G_p$, there are $6n-12$ faces of $\Gamma$ each requiring at least one $\pt$ angle from an expansion edge. If every vertex of $G$ is represented by a polygon with vertex complexity at most $k$, then the edges of each expansion cycle form at most $4+k$ angles of $\pt$ inside their incident faces (that are not expansion cycles). Hence it must be that $(4+k)n \geq 6n-12$, that is $k \geq 2 - \frac{12}{n}$. Since $k$ is an integer, it follows that $k \geq 2$ for any $n \geq 13$.
\end{proof}

\section{Vertex Complexity Bounds for 2-connected 1-plane Graphs}\label{se:2conn-bounds}

In this section, we show that if an $n$-vertex 1-plane graph $G$ is 2-connected and it can be augmented to become 3-connected only at the expense of losing its 1-planarity, then the vertex complexity of any \opvr of $G$ may be $\Omega(n)$. Also, for these graphs we show that a 1-planar embedding that guarantees constant vertex complexity can be computed in $O(n)$ time under the assumption that they do not have a certain type of crossing configuration.

\begin{theorem}\label{th:2conn-lb}
For every positive integer $n$, there exists a 2-connected 1-planar graph $G$ with $O(n)$ vertices such that, for every 1-planar embedding of $G$, any embedding-preserving \opvr of $G$ has vertex complexity $\Omega(n)$.
\end{theorem}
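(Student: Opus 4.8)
The plan is to construct an explicit infinite family of 2-connected 1-planar graphs on $\Theta(n)$ vertices and then invoke the characterization machinery of Section~\ref{se:characterization} to force large vertex complexity. The guiding intuition is that in a 3-connected 1-plane graph, every crossing sits inside a $K_4$ (kite or B-configuration), which bounds the ``angle deficit'' a face can create; the separating pairs available in a 2-connected graph let us build a gadget that behaves like the T-configuration obstruction of Figure~\ref{fi:thickness2}, but stops just short of being nonrepresentable, so that representability survives (by Theorem~\ref{th:1-planar}) while every \opvr is forced to spend $\Omega(n)$ reflex corners on a single vertex.

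Concretely, I would build a gadget in which a single vertex $v$ bounds a face whose boundary contains many crossings but few vertices. Recall from Property~\ref{pr:face} that inside a face $f$ one has $\nh(f)-\nt(f)-2\nf(f)=4$ (internal), so a face carrying $k$ angles of $\ph$ from crossings and real vertices, with its real edges bend-free, must be balanced by roughly $k$ reflex ($\pt$) angles, each of which is a $\ph$ angle charged to the interior of an incident expansion cycle --- i.e.\ a reflex corner of some vertex polygon. The construction would chain $\Theta(n)$ crossings so that they all drain their reflex demand into the expansion cycle of one designated vertex $v$ (or a bounded set of vertices), exactly as in the counting arguments of Theorems~\ref{th:edge-decomposition} and~\ref{th:3conn-lb}. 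The key structural feature I would exploit is a separation pair: because $G$ is only 2-connected, I can force the ``inner'' structure of each crossing to be inaccessible except through two vertices, which prevents the crossings from being enclosed in $K_4$'s and thereby defeats the bounded-degree edge partition of Theorem~\ref{th:edge-decomposition}. This must hold for \emph{every} 1-planar embedding, so I would design the gadget to have an essentially unique 1-planar embedding (up to the irrelevant symmetries), or argue that each embedding still traps the same number of crossings inside a single face.

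The main steps, in order, are: (i) define the gadget $H_k$ with $\Theta(k)$ crossings sharing boundary with a common face $f$ that has only $O(1)$ vertices on it, verify $H_k$ is 2-connected and 1-planar, and check via Theorem~\ref{th:1-planar} (or directly via Lemma~\ref{le:characterization}) that $H_k$ \emph{does} admit an \opvr; (ii) show that the crossings cannot be embedded in kites or augmented B-configurations, so there is no way to ``absorb'' their $\ph$ angles cheaply; (iii) apply Property~\ref{pr:face} to $f$ to conclude that the $\Theta(k)$ crossing-induced $\ph$ angles force $\Theta(k)$ reflex corners distributed over the $O(1)$ expansion cycles on $\partial f$, hence one polygon must carry $\Omega(k)=\Omega(n)$ reflex corners; (iv) argue embedding-independence by showing every 1-planar embedding retains a face with this high ratio of crossings to vertices. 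Setting $k=\Theta(n)$ and letting $n$ range over the positive integers yields the family.

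The hard part will be step~(iv): guaranteeing the $\Omega(n)$ lower bound holds \emph{over all} 1-planar embeddings, not just the one we draw. A single gadget could potentially be re-embedded so that its crossings are distributed among many faces or flipped into configurations that admit low-complexity polygons, so I expect the delicate work to lie in proving that the separation-pair structure rigidly confines the crossings regardless of embedding --- essentially showing that the gadget has no alternative 1-planar embedding that breaks up the bad face. A secondary obstacle is calibrating the construction to stay representable: the gadget must approach the T-configuration obstruction of Figure~\ref{fi:thickness2} closely enough to force high complexity, yet keep the face-degree balance of Lemma~\ref{le:characterization} satisfiable, so I would verify Equation~\ref{eq:char} on $\bG^*$ for the constructed graph before asserting that an \opvr exists at all.
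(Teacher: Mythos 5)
Your high-level intuition---drain the reflex demand of many crossings into a bounded set of vertices, count angles via Property~\ref{pr:face}, and exploit a separation pair that prevents the crossings from being absorbed into kites or augmented B-configurations---is exactly the spirit of the paper's proof. However, your concrete instantiation in step~(i) contains a genuine gap: a face $f$ whose boundary carries $\Theta(k)$ crossings but only $O(1)$ vertices \emph{cannot exist in a 1-plane graph}. If two crossing points were consecutive along a face boundary, the edge segment joining them would belong to an edge crossed twice, contradicting 1-planarity (this is precisely the observation used in the proof of Theorem~\ref{th:1-planar} to show $\bG^*$ is connected, and it is why the obstruction of Figure~\ref{fi:thickness2} needs two crossings per edge). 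Since the planarization of a 2-connected 1-plane graph is 2-connected, every face is a simple cycle in which no two dummy vertices are adjacent, so a face with $\Theta(k)$ crossings has $\Omega(k)$ distinct real vertices on its boundary. Consequently the $\Theta(k)$ angle deficit you extract from Property~\ref{pr:face} in step~(iii) is spread over $\Omega(k)$ expansion cycles, and pigeonholing yields only $O(1)$ reflex corners per polygon---the lower bound collapses.

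The paper's construction repairs this by inverting the quantifiers: instead of one face with many crossings, it uses $\Theta(n)$ faces, each with only two crossings and four expansion-vertex occurrences, but arranged so that \emph{all} of these faces are incident exclusively to the expansion cycles of the same two vertices $u$ and $v$. Concretely, $n+1$ copies of a fixed eight-vertex gadget $K$ are glued in parallel at the separation pair $\{u,v\}$; the face $f_i$ between consecutive copies $K_i$ and $K_{i+1}$ has six $\ph$ angles and bend-free real edges, so by Property~\ref{pr:face} its two expansion edges (which belong to $C(u)$ and $C(v)$ only) must supply at least two $\pt$ angles. Summing over the $n$ such faces forces $2n$ reflex contributions onto just two expansion cycles, whence one polygon has more than $n$ reflex corners. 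Your step~(iv), which you correctly flag as the hard part but leave open, is handled in the paper by citing that $K$ plus the edge $(u,v)$ has a unique 1-planar embedding and that distinct copies of $K$ cannot interleave without violating 1-planarity, so every 1-planar embedding of $G$ is the intended one up to renaming copies and relocating the single edge $(u,v)$, which only splits one $f_i$ and does not affect the count.
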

\begin{proof}
We first prove the claim for a fixed 1-planar embedding. Consider the 1-plane graph $K$ in Figure~\ref{fi:2connlb-1}. It has 2 vertices on its outer face, $u$ and $v$, plus 6 inner vertices. We now construct a graph $G$ as follows. Attach $n+1$ copies $K_1, \dots, K_{n+1}$ of $K$ such that they all share $u$ and $v$. The copies are attached in parallel without introducing any further edge crossing, as shown in Figure~\ref{fi:2connlb-2}. Also connect $u$ and $v$ with an edge on the outer face.  The resulting graph $G$ has $8(n+1)-2n=6n+8=O(n)$ vertices. Also, $G$ is 2-connected and 1-plane by construction. Since it is 1-plane, it admits an \opvr by Theorem~\ref{th:1-planar}. Consider now an embedding-preserving \opvr of $G$ and the corresponding orthogonal drawing $\Gamma$. Between any two consecutive copies $K_i$ and $K_{i+1}$ ($i=1,\dots,n$), there is a face $f_i$ of $G$ having two expansion vertices of $C(u)$ (the expansion cycle of $u$) and two expansion vertices of $C(v)$ on its boundary, together with two dummy vertices; see  Figure~\ref{fi:2connlb-3}. Each dummy vertex forms one $\ph$ angle inside $f_i$. Also, each expansion vertex forms one $\ph$ angle inside $f_i$. Hence, there are at least six $\ph$ angles inside $f_i$. Also, since the real edges of $f_i$ have no bends, by Property~\ref{pr:face} the two expansion edges of $f_i$ must form (at least) two $\pt$ angles inside $f_i$. In $\Gamma$ there are $n$ such faces requiring two angles of $\pt$ each from an expansion edge. If every vertex of $G$ is represented by a polygon with vertex complexity at most $k$, then the edges of each expansion cycle form at most $4+k$ angles of $\pt$ inside their incident faces (that are not expansion cycles). At least nine of these angles are inside the outer face of $\Gamma$ (by Property~\ref{pr:face}), and hence it must be that $(4+k)2-9 \geq 2n$, that is $k > n$. 

\begin{figure}[t]
    \centering
    \begin{minipage}[b]{.31\textwidth}
    	\centering
    	\includegraphics[scale=1,page=1]{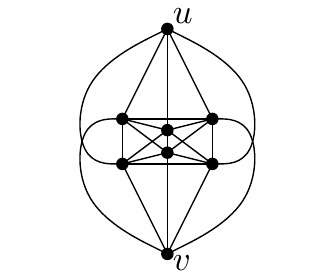}
    	\subcaption{$K$}\label{fi:2connlb-1}
    \end{minipage}
    \begin{minipage}[b]{.31\textwidth}
    	\centering
    	\includegraphics[scale=1,page=2]{2conn1plane-lowerbound}
    	\subcaption{$G$}\label{fi:2connlb-2}
    \end{minipage}
    \begin{minipage}[b]{.31\textwidth}
    	\centering
    	\includegraphics[scale=1,page=3]{2conn1plane-lowerbound}
    	\subcaption{$f_i$}\label{fi:2connlb-3}
    \end{minipage}
    \caption{ Illustration for the proof of Theorem~\ref{th:2conn-lb}.}
\end{figure}

It remains to extend the argument of the proof to any 1-planar embedding of $G$. To this aim, we observe that graph $K$ together with the edge $(u,v)$  has a unique 1-planar embedding~\cite{Suzuki2010}. This, together with the fact that no two copies of $K$ can intersect one another without violating 1-planarity, implies that $G$ has a unique embedding up to a renaming of the $n+1$ copies of $K$, except for the edge $(u,v)$. Such an edge can be indeed placed between any two consecutive copies of $K$. Nonetheless, this does not change the argument above, as there will be a face $f_i$ split in two faces, each requiring at least one $\ph$ angle from either $C(u)$ or $C(v)$.
\end{proof}

The graphs used to prove Theorem~\ref{th:2conn-lb} contain several W-configurations. By contrast, the next theorem shows that the absence of W-configurations suffices to find a 1-planar embedding which admits an \opvr with constant vertex complexity. 

\begin{theorem}\label{th:2conn1plane-ub}
Let $G$ be a 2-connected 1-plane graph with $n$ vertices and no W-confi-gurations. There exists an $O(n)$-time algorithm that computes a 1-planar \opvr of $G$ with vertex complexity at most 22 on an integer grid of size $O(n) \times O(n)$.
\end{theorem}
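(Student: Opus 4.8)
The plan is to exploit the freedom to re-embed $G$ (the statement asks only for a $1$-planar \opvr, not an embedding-preserving one) and to reduce the problem to the $3$-connected case settled by Theorem~\ref{th:decomp-opvr}. Since $G$ is $2$-connected, I would first crossing-augment it in $O(n)$ time so that each crossing, together with its four cycle edges, forms a $K_4$, exactly as in Section~\ref{se:3conn-bounds}. A key observation is then that, because a $K_4$ is $3$-connected, it cannot be split by any separation pair, so every crossing is confined to a single rigid (R) node of the SPQR-tree $T$ of $G$; consequently the series (S) and parallel (P) skeletons carry no crossings and are genuinely planar. I would compute $T$ in $O(n)$ time, root it at an edge on the outer face, and read the choice of a $1$-planar embedding off $T$ as the cyclic order of the children at each P-node together with the flips of the R- and S-skeletons. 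The drawing is then built bottom-up along $T$: each pertinent component is placed in an axis-aligned box with its two poles (the separation pair) reaching the top and bottom sides, S-nodes are realized by stacking children vertically, P-nodes by placing children side by side, and each R-node is drawn by the edge-partition construction of Theorem~\ref{th:edge-decomposition} and Lemma~\ref{le:decomp-opvr}, which gives vertex complexity at most $12$ for the vertices internal to that skeleton.

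The heart of the argument is to control the number of reflex corners that accumulate on the \emph{poles} shared by the children of a P-node, since this is precisely where the $\Omega(n)$ lower bound of Theorem~\ref{th:2conn-lb} is produced. I would establish a structural claim stating that, in a crossing-augmented $2$-connected $1$-plane graph without W-configurations, only a constant number of the children of a given P-node with poles $\{s,t\}$ can contain a crossing that forces a $\ph$ vertex-angle of an expansion cycle $C(s)$ or $C(t)$ inside a face separating two consecutive children. The intuition is that two such pole-incident crossings, sitting in two consecutive parallel children, would nest around $\{s,t\}$ and reproduce exactly a W-configuration; hence the no-W hypothesis caps their number. This is the decisive difference from the lower-bound instance, where each of the $n{+}1$ parallel copies of $K$ (which do contain W-configurations) forces a fresh corner on $C(u)$ or $C(v)$ by the angle count of Property~\ref{pr:face}.

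Granting this claim, the assembly follows the recipe of Lemma~\ref{le:decomp-opvr}: inside each R-skeleton I compute an edge partition, build a strong bar visibility representation of the blue graph, and insert each red edge as a pair of attached vertical bars, turning every vertex into a ``rake''-shaped polygon. The vertices internal to a skeleton acquire at most $12$ reflex corners, while the poles shared across components acquire, by the structural claim, only a further constant number; summing the two contributions yields the stated bound of $22$. Linearity then follows because the crossing-augmentation, the SPQR-tree, the per-component edge partitions (via a Schnyder $3$-orientation, $O(n)$ overall by Theorem~\ref{th:edge-decomposition}), the bar visibility representations, and the final compaction step of the TSM framework are each $O(n)$; since the vertex complexity is bounded by $22$ the orthogonal representation has $O(n)$ bends, so the compaction produces an integer grid of size $O(n)\times O(n)$, and the existence of \emph{some} $1$-planar \opvr is guaranteed throughout by Theorem~\ref{th:1-planar}.

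The main obstacle I anticipate is twofold. First, proving the structural claim that the absence of W-configurations limits to $O(1)$ the pole-incident crossings across the children of a P-node, which requires a careful local analysis of how the two crossing configurations permitted at a pole (a kite or an augmented B-configuration) can coexist around a separation pair without creating a W-configuration. Second, making the bottom-up gluing rigorous: I must check that the chosen order of children at each P-node realizes a valid $1$-planar embedding, that the poles can be routed monotonically from top to bottom across all children without spurious crossings, and that flipping R-skeletons to expose the correct pole sides never introduces the very W- (or augmented T-) configurations assumed absent. Distinguishing the kite and augmented B-configuration cases at pole-adjacent crossings, exactly as in Cases A and B of Lemma~\ref{le:decomp-opvr}, and verifying that an augmented T-configuration never straddles a separation pair, will be the technically delicate part of the proof.
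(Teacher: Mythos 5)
Your overall architecture (crossing-augmentation, SPQR-tree, crossings confined to R-nodes, bottom-up assembly with R-nodes handled via the edge partition of Theorem~\ref{th:edge-decomposition}) coincides with the paper's, but the step you yourself call the heart of the argument contains a genuine gap, and it is not the mechanism the paper uses. Your structural claim asserts that the absence of W-configurations bounds by a constant the number of children of a P-node with poles $\{s,t\}$ that contain a crossing incident to both poles, because two such crossings in consecutive children ``would nest around $\{s,t\}$ and reproduce exactly a W-configuration.'' This is false: whether two pole-incident crossings form a W-configuration depends on their \emph{orientation}, not on their number. A W-configuration requires the four inner endpoints to lie \emph{inside} the cycle formed by the four edge parts; if the crossings in all children are oriented coherently (all inner endpoints on, say, the same rotational side of their crossing), no two of them ever satisfy this, yet there can be $\Theta(n)$ such children. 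Worse, the configuration that actually forces reflex corners on $C(s)$ and $C(t)$ in the lower bound of Theorem~\ref{th:2conn-lb} --- two pole-incident crossings bounding an empty quadrilateral face, with their inner endpoints on the far sides --- is \emph{not} a W-configuration at all; the W-configurations of that graph live inside the individual copies of $K$, not between consecutive copies. So the no-W hypothesis constrains the orientation pattern of pole-incident crossings rather than their count, your claim fails as stated, and even a patched version would not bound the number of bars your construction attaches to the poles, since you add bars per pole-crossing child. Relatedly, your accounting ``12 for skeleton-internal vertices plus a constant for poles equals 22'' is unjustified; in the paper the bound 22 arises as $2\cdot(12-1)$ reflex corners obtained by thickening a tree-like union of at most 12 bars.

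The paper's actual mechanism never lets the poles accumulate anything. It maintains, through the bottom-up traversal, the invariants that in the representation of every pertinent graph the pole $s_\mu$ is a \emph{single} horizontal bar (bottommost), $t_\mu$ is a \emph{single} vertical bar (leftmost), and every other vertex uses at most 12 bars (Invariants {\bf I1--I3}). The no-W hypothesis is applied \emph{inside each R-node}, not at P-nodes: if both outer paths of $sk(\mu)\setminus\{(s_\mu,t_\mu)\}$ degenerated to a single pole-pole crossing, the rest of the skeleton would lie inside the resulting cycle, which is precisely a W-configuration; hence at most one such crossing exists on the skeleton's outer face, and the construction of Lemma~\ref{le:rnode} absorbs it without leaving an extra bar on either pole (the transformation that reroutes red edges across the blue edges incident to $t_\mu$ removes the temporary vertical bar of $s_\mu$). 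Because the invariants hold for every child, the P-node merge of Lemma~\ref{le:pnodes} is just a collinear extension of single bars and costs the poles nothing, no matter how many children carry pole-incident crossings. To repair your proof you would need either to establish such an inductive invariant or to replace your structural claim with a correct statement about the orientation pattern that the no-W hypothesis forces around a pole pair; as written, the central claim is false and the constant-complexity conclusion does not follow.
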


The proof of Theorem~\ref{th:2conn1plane-ub} is based on a drawing algorithm described in the following. We first construct an $SPQR$-tree~\cite{DBLP:journals/siamcomp/BattistaT96} of the input graph $G$ (see also Appendix~\ref{ap:spqrtree}), and then traverse  the tree bottom-up in order to compute the desired representation, while maintaining a set of geometric invariants. $R$-nodes, which are the most challenging components, are handled with a sophisticated variant of the technique used to prove Theorem~\ref{th:decomp-opvr}. The embedding of the computed \opvr may be different from the one of $G$, but it is still 1-planar, i.e., it has at most one crossing per edge. More precisely, we may need to use flip and swap operations on the $SPQR$-tree of $G$.  Analogously to  the proof of Theorem~\ref{th:decomp-opvr}, we describe how to construct a visibility representation where vertices of $G$ are connected geometric features composed of horizontal and vertical bars (possibly not ``rake''-shaped in this case, but still arranged in a tree-like structure). These objects are then used as ``skeletons'' for the orthogonal polygons.

Let $T$ be the $SPQR$-tree of $G$ rooted at a $Q$-node $\rho$. We assume that $G$ is crossing-augmented. This implies the following property:

\begin{property}\label{pr:crossings-in-rnodes}
Let $(u,v)$ and $(w,z)$ be two edges that cross each other in $G$. Then the two corresponding $Q$-nodes are both children of the same $R$-node.
\end{property}

As a consequence of Property~\ref{pr:crossings-in-rnodes}, we also have:

\begin{property}\label{pr:virtualedges}
Let $\mu$ be an $R$-node, and let $e_1$ and $e_2$ be two virtual edges of its skeleton $sk(\mu)$ such that they cross each other. Then $e_1$ and $e_2$ correspond to two $Q$-nodes.
\end{property}

For the sake of description, we also apply the following transformation to $G$ and to its $SPQR$-tree $T$. Let $\mu$ be a $P$-node of $T$ with a $Q$-node $\nu$ as a child. If the parent of $\mu$ is not an $R$-node we subdivide the edge $e$ corresponding to $\nu$ with a subdivision vertex. This corresponds to replacing $\nu$ with an $S$-node having two $Q$-nodes as children. Also, note that since the parent of $\mu$ is not an $R$-node, it can only be an $S$-node (this observation will be useful in the following).

Denote by $G'$ the graph obtained from $G$ by applying the above operation to all $P$-nodes of $T$. Let $T'$ be the resulting $SPQR$-tree.  The algorithm performs a bottom-up visit of $T'$ and computes for each visited node $\mu$ a visibility representation of the pertinent graph $G_{\mu}$ of $\mu$. The leaves of $T'$ (i.e.,  the $Q$-nodes) are ignored, since the corresponding edges are drawn as visibilities in the visibility representation of their parent nodes. Let $\mu$ be a node of $T'$ different from the root $\rho$, and let $G_\mu$ be the corresponding pertinent graph whose poles are $s_\mu$ and $t_\mu$. The algorithm computes a visibility representation $\gamma_\mu$ of $G_\mu$ that satisfies the following invariants.

\begin{description}
\item[\bf I1.] Vertex $s_\mu$ is represented by one horizontal bar that is the bottommost bar of $\gamma_\mu$.

\item[\bf I2.] Vertex $t_\mu$ is represented by one vertical bar that is the leftmost bar of $\gamma_\mu$.

\item[\bf I3.] Every vertex different from $s_\mu$ and $t_\mu$ is represented by a set of at most 12 bars.
\end{description}

We now show how to construct $\gamma_\mu$ based on whether $\mu$ is an $R$-node, a $P$-node, or an $S$-node. The root $\rho$ and its child $\xi$ are handled in a special way.

\begin{figure}[t]
    \centering
    \begin{minipage}[b]{.31\textwidth}
    	\centering
    	\includegraphics[scale=0.3,page=1]{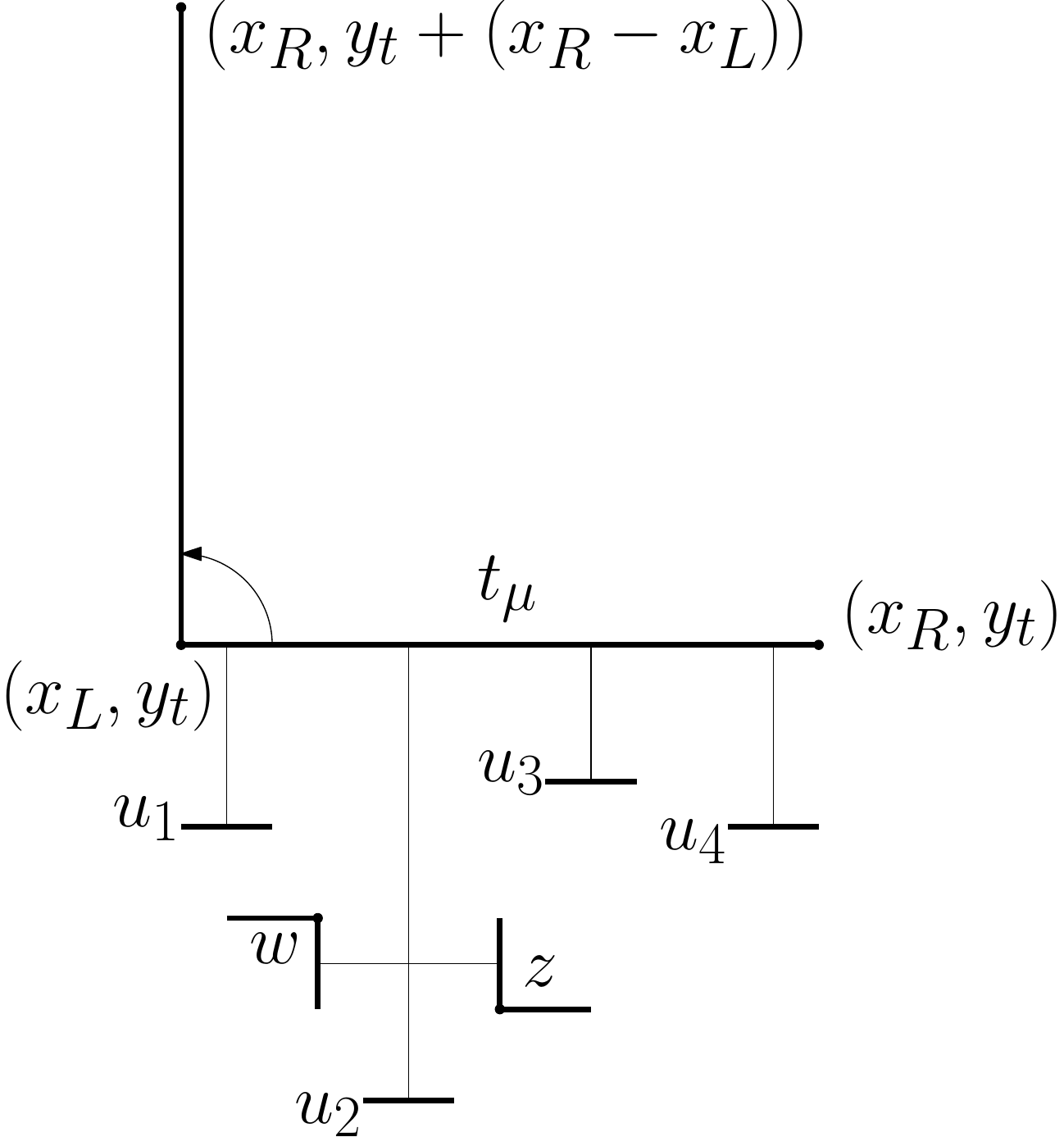}
    	\subcaption{}\label{fi:rnodes-1}
    \end{minipage}
    \begin{minipage}[b]{.31\textwidth}
    	\centering
    	\includegraphics[scale=0.3,page=2]{rnodes}
    	\subcaption{}\label{fi:rnodes-2}
    \end{minipage}
    \begin{minipage}[b]{.31\textwidth}
    	\centering
    	\includegraphics[scale=0.3,page=3]{rnodes}
    	\subcaption{}\label{fi:rnodes-3}
    \end{minipage}
    \begin{minipage}[b]{.31\textwidth}
    	\centering
    	\includegraphics[scale=0.3,page=4]{rnodes}
    	\subcaption{}\label{fi:rnodes-4}
    \end{minipage}
    \begin{minipage}[b]{.31\textwidth}
    	\centering
    	\includegraphics[scale=0.3,page=5]{rnodes}
    	\subcaption{}\label{fi:rnodes-5}
    \end{minipage}
    \begin{minipage}[b]{.31\textwidth}
    	\centering
    	\includegraphics[scale=0.3,page=6]{rnodes}
    	\subcaption{}\label{fi:rnodes-6}
    \end{minipage}
    \caption{ Illustration for the proof of Lemma~\ref{le:rnode}.}
\end{figure}

\begin{lemma}\label{le:rnode}
Let $\mu$ be an $R$-node of $T'$. Let $\gamma_{\nu_1}$, $\gamma_{\nu_2}$, $\dots$, $\gamma_{\nu_k}$ be the $k \geq 0$ visibility representations of the $k \geq 0$ children of $\mu$ (excluding $Q$-nodes) for which Invariants {\bf I1--I3} hold. Then $G_\mu$ admits a visibility representation $\gamma_\mu$ that respects Invariants {\bf I1--I3}.
\end{lemma}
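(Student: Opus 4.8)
The plan is to treat the $R$-node as a $3$-connected $1$-plane graph and reuse the machinery of Theorem~\ref{th:decomp-opvr}, with the crucial new ingredient that the virtual edges of $sk(\mu)$ that do not come from $Q$-nodes must be replaced by the already-computed drawings $\gamma_{\nu_1},\dots,\gamma_{\nu_k}$ rather than by single visibilities. Since $G$ is crossing-augmented, Property~\ref{pr:virtualedges} tells us that any two crossing virtual edges of $sk(\mu)$ are both $Q$-nodes; hence every crossing of the skeleton is a crossing of two real edges, and each non-$Q$-node child sits on an uncrossed virtual edge. This is what makes the composition possible: the children occupy planar ``slots'', and only the real edges are involved in the crossing gadgets.

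First I would apply the edge partition of Theorem~\ref{th:edge-decomposition} to $sk(\mu)$, obtaining a red graph of maximum degree six whose edges are all real (by the observation above), while all non-$Q$-node children lie on blue, uncrossed virtual edges. Following Step~1 of Lemma~\ref{le:decomp-opvr}, I augment the blue planar skeleton to a maximal plane graph, add the planar copies of the red cycle edges, orient the result as a plane $st$-graph with poles $s_\mu$ and $t_\mu$, and compute a strong $\varepsilon$-visibility BVR. In this BVR every blue virtual edge is a vertical visibility between the bars of its two poles, and I would insert $\gamma_{\nu_i}$ into the corresponding slot: Invariants~{\bf I1} and~{\bf I2} guarantee that $s_{\nu_i}$ is a bottommost horizontal bar and $t_{\nu_i}$ a leftmost vertical bar of $\gamma_{\nu_i}$, so after aligning the $st$-orientation of the virtual edge with the poles of $\gamma_{\nu_i}$ the child fits in place, its $t$-pole being attached to its host bar by a single connector bar exactly as in the rake construction.

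The red (crossing) edges are then reinserted by the case analysis of Lemma~\ref{le:decomp-opvr}. Here the hypothesis that $G$ has no W-configuration is essential: together with crossing-augmentation it forces every crossing pair to form a kite or an augmented B-configuration, so Claim~\ref{cl:bconf} always provides the transitive cycle edge used to route the red edge through its region without creating a second crossing. Finally I would re-establish Invariants~{\bf I1} and~{\bf I2} by adjusting the representations of the poles $s_\mu$ and $t_\mu$ (placing $s_\mu$ as the bottommost horizontal bar and reshaping $t_\mu$ into the leftmost vertical bar) and verify Invariant~{\bf I3} for the internal vertices.

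The main obstacle is Invariant~{\bf I3}. A vertex of $sk(\mu)$ may simultaneously be an endpoint of several red edges (each contributing a connector bar), the $s$-pole of some children (which want a horizontal bar with the children stacked above it) and the $t$-pole of others (which want a vertical bar with the children to the side); I must show that all of these merge into a connected skeleton of at most $12$ bars, and that the bottom-horizontal / left-vertical pole convention can be threaded consistently through the recursion so that a vertex playing both pole roles is still realizable by a bounded $L$-shaped structure. Establishing this constant bound, and checking that the child insertions and the red-edge gadgets never overlap geometrically, is where essentially all the work lies; once it is in place, the correctness of the embedding and its $1$-planarity follow as in Theorem~\ref{th:decomp-opvr}.
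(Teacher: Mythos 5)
Your high-level skeleton coincides with the paper's: partition the edges of $sk(\mu)$ via Theorem~\ref{th:edge-decomposition} (and your observation that, by Property~\ref{pr:virtualedges}, all red edges are real while non-$Q$-node children sit on uncrossed blue virtual edges is correct and is indeed what makes the splicing possible), build a strong BVR of the blue graph with $s_\mu$ and $t_\mu$ as source and sink, reinsert red edges by the case analysis of Lemma~\ref{le:decomp-opvr}, replace blue virtual edges by the child representations $\gamma_{\nu_i}$, and charge the extra bars through a Schnyder 3-orientation. However, there are two problems. First, you misplace the role of the no-W-configuration hypothesis. Crossing-augmentation alone already guarantees that every crossing pair forms a kite or an augmented B-configuration---this is exactly the observation opening Step~2 of Lemma~\ref{le:decomp-opvr}, which is proved for 3-connected 1-plane graphs that may well contain W-configurations. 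What the absence of W-configurations actually buys is localized at the poles: it ensures that at most one pair of crossing edges $(s_\mu,u)$, $(t_\mu,v)$ crosses on the outer face of $sk(\mu)\setminus\{(s_\mu,t_\mu)\}$, so that at most one of the two poles is forced by the rake construction to carry a vertical bar; the paper then elects $t_\mu$ to be a plain horizontal bar and lets $s_\mu$ absorb the vertical bar, which is removed later.

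Second, and this is the genuine gap, you defer precisely the content of the lemma. Invariant {\bf I2} requires $t_\mu$ to be a single \emph{vertical} bar, leftmost in $\gamma_\mu$, whereas the rake construction produces $t_\mu$ as a \emph{horizontal} bar with all of its blue edges arriving as vertical visibilities. ``Reshaping $t_\mu$ into the leftmost vertical bar'' is not a local adjustment: the paper rotates the bar of $t_\mu$ by $\ph$ and translates it to the far left, converts every blue edge $(u_i,t_\mu)$ into a horizontal visibility by adding one new vertical bar to each neighbor $u_i$, and then converts every red edge that crosses such a blue edge into a vertical visibility, which costs one further horizontal bar at one of its endpoints. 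Bounding that last cost needs a nontrivial claim---each vertex has at most one incoming red edge crossing blue edges incident to $t_\mu$---whose proof is a separation-pair argument relying on the 3-connectivity of $sk(\mu)$. Only after this rerouting does the counting close: one horizontal bar, at most six vertical bars from red edges, one bar from the blue rerouting, one from the red rerouting, and at most three charged by the Schnyder orientation when inserting the children, for a total of $12$, with the orientation chosen so that no charge lands on $s_\mu$ or $t_\mu$. Your proposal contains neither this rerouting scheme nor the counting, and explicitly concedes that establishing {\bf I3} and the consistency of the pole conventions is ``where essentially all the work lies''; as written, it is a plan for a proof rather than a proof.
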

\begin{proof}
The idea is to first compute a visibility representation of the skeleton $sk(\mu)$ of $G_\mu$ (ignoring the reference edge $(s_\mu,t_\mu)$), and then replace each virtual edge $e_{\nu_k}$ associated with node ${\nu_k}$ with the corresponding visibility representation $\gamma_{\nu_k}$. In particular, recall that $sk(\mu)$ is a 3-connected 1-plane graph (see Appendix~\ref{ap:spqrtree}), and hence we can compute an edge partition of $sk(\mu)$ such that the red graph has maximum vertex degree $6$ (Theorem~\ref{th:edge-decomposition}). Denote by $sk_R(\mu)$ and by $sk_B(\mu)$ the red and blue graph, respectively. Applying {\bf Step 1} of the proof of Theorem~\ref{th:decomp-opvr}, we obtain a strong BVR of $sk_B(\mu)$.  Recall that {\bf Step 1} of Theorem~\ref{th:decomp-opvr} requires the choice of two vertices, $s$ and $t$ on the outer face of $sk(\mu)$, such that $sk_B(\mu)$ will have $s$ and $t$ as unique source and sink, respectively. In our case, we choose $s=s_\mu$ and $t=t_\mu$. Afterwards, by applying {\bf Step 2} of the proof of Theorem~\ref{th:decomp-opvr}, we obtain a visibility representation $\gamma^*$ of  $sk(\mu)$ such that each vertex is represented by a horizontal bar plus at most six vertical bars. Since $G$ contains no W-configurations, there is at most one pair of edges  $(s_\mu,u)$ and $(t_\mu,v)$ that cross at a point $p$ that belongs to the outer face of $sk(\mu) \setminus \{(s_\mu,t_\mu)\}$. If such a pair does not exist, then both $s_\mu$ and $t_\mu$ are represented by a single horizontal bar. Else, according to the technique used in {\bf Step 2} of Theorem~\ref{th:decomp-opvr},  we can choose one vertex between $s_\mu$ and $t_\mu$ to be represented by a single horizontal bar, while the other one is represented by a horizontal bar plus a vertical bar. We choose $t_\mu$ to be represented by a single horizontal bar.

Let $\Pi_1$ and $\Pi_2$ be the two paths between $s_\mu$ and $t_\mu$ that belong to the outer face of $sk(\mu) \setminus \{(s_\mu,t_\mu)\}$ (which is a cycle since $sk(\mu)$ is 3-connected). As observed above, at most one of them can be composed of exactly one crossing point $p$ connected to both $s_\mu$ and $t_\mu$.  Without loss of generality, we can assume that such a path (if any) is $\Pi_1$, as otherwise we can flip $G_\mu$ around $s_\mu$ and $t_\mu$ such that this is the case. We now show how to transform $\gamma^*$ into a visibility representation $\gamma'$ that satisfies Invariants {\bf I1--I3}. Let $(x_L,y_t)$ and $(x_R,y_t)$ be the leftmost and rightmost points of the horizontal bar representing $t_\mu$ in $\gamma^*$. We transform this horizontal bar into the vertical bar having $(x_L,y_t)$ and $(x_L,y_t+(x_R-x_L))$  as bottommost and topmost points, respectively; see also Figure~\ref{fi:rnodes-1} for an illustration. In other words, we rotate the bar by $\ph$ counter-clockwise around point $(x_L,y_t)$. Moreover, if $x_L$ is not the leftmost point of $\gamma^*$, we can translate the vertical bar further to the left, until its $x$-coordinate is the leftmost one. This will ensure {\bf I2}. We now describe how to transform all the blue edges incident to $t_\mu$ (drawn as vertical visibilities in $\gamma^*$) into horizontal visibilities in $\gamma'$, and how to transform all the red edges (drawn as horizontal visibilities in $\gamma^*$) crossing the blue edges incident to $t_\mu$ into vertical visibilities in $\gamma'$. Let $(u_1,t_\mu)$, $\dots$, $(u_h,t_\mu)$, be the $h > 0$ blue edges incident to $t_\mu$ ordered according to the left-to-right total order defined by the corresponding vertical visibilities in $\gamma^*$. In other words, for any two blue edges $(u_i,t_\mu)$ and $(u_j,t_\mu)$, $i < j$ if and only if the vertical visibility representing  $(u_i,t_\mu)$ in $\gamma^*$ is to the left of the vertical visibility representing $(u_j,t_\mu)$ in $\gamma^*$.  Let $(x_i,y_i)$ be the bottommost point of the vertical visibility representing $(u_i,t_\mu)$ in $\gamma^*$. We replace this visibility with a vertical bar in $\gamma'$ having $(x_i,y_i)$ and $(x_i,y_t+i)$ as bottommost and topmost points, respectively; see also Figure~\ref{fi:rnodes-2} for an illustration. This adds one vertical bar to the representation of $u_i$. After applying this operation for all edges  $(u_i,t_\mu)$, $i=1,\dots,h$, we have that every vertex $u_i$ can see $t_\mu$ through a horizontal visibility. Also, the circular order of the edges around $t_\mu$ is preserved from $\gamma^*$ to $\gamma'$. We remark that, at this point, every object representing a vertex consists of at most 8 connected bars.

For each red edge $(w,z)$, consider the two vertical bars in $\gamma^*$ such that $(w,z)$ is a horizontal visibility between them, and let $(x_w,y_w)$ and $(x_z,y_z)$ be the coordinates of the bottommost point of the vertical bar of $w$ and $z$, respectively. We now show how to replace each red edge $(w,z)$ crossing a blue edge $(u_i,t_\mu)$ with a vertical visibility crossing the horizontal visibility that represents $(u_i,t_\mu)$ in $\gamma'$. Without loss of generality, we can assume that $x_w < x_z$. Then we extend the vertical bar of $z$ such that its topmost point is now $(x_z,y_t+i+\frac{1}{4})$, and then attach a horizontal bar whose rightmost point is $(x_z,y_t+i+\frac{1}{4})$ and its leftmost point is $(x_w-\varepsilon,y_t+i+\frac{1}{4})$ (for an arbitrarily small value of $\varepsilon$ in order to ensure the $\varepsilon$-visibility model). Also, we remove the vertical bar of $w$. With this construction  $w$ and $z$ see each other through a vertical visibility that crosses $(u_i,t_\mu)$; see also Figure~\ref{fi:rnodes-3} for an illustration. Notice that if the pair of crossing edges  $(s_\mu,u)$ and $(t_\mu,v)$ exists, then $(s_\mu,u)$ is red and $x_{s_\mu} < x_u$. Hence this transformation removes the vertical bar of $s_\mu$, which is now represented by a single horizontal bar and therefore {\bf I1} holds.

\begin{figure}[t]
    \centering
    \includegraphics[scale=0.7]{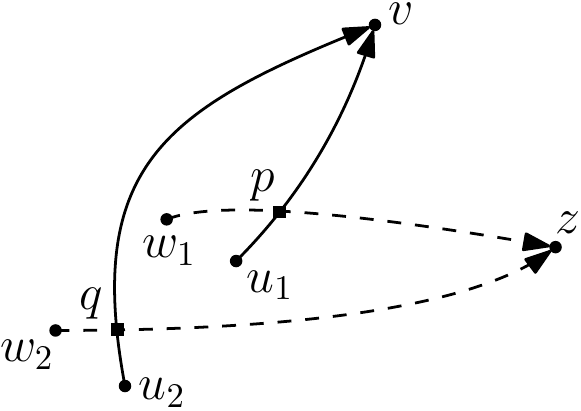}
    \caption{Illustration for the proof of Lemma~\ref{le:rnode}; red (blue) edges are dashed (solid).\label{fi:rnode-orientation}}
\end{figure}

Observe that for each red edge $(w,z)$ to which the above transformation is applied, an additional horizontal bar is added to its right end-vertex ($z$ in our description). We show that, for each vertex $z$, we have at most one such edge, and therefore at most one additional bar.  As a consequence each vertex  is represented by a set of bars of size at most $9$. More precisely, we claim that each vertex $z$ is incident to at most one red edge $(w,z)$ that crosses a blue edge incident to $t_\mu$, and such that $x_w < x_z$. To prove this, we orient each red edge $(w,z)$ from $w$ to $z$ if $x_w < x_z$, and each blue edge $(u,v)$ from $u$ to $v$ if the horizontal bar of $u$ is below the horizontal bar of $v$. Then the red edges always cross the blue edges from left to right. Equivalent to our claim, we show that for each vertex $z$ there are no two incoming red edges that cross two blue edges incident to the same vertex. Suppose, for a contradiction, that two red edges $(w_1,z)$ and $(w_2,z)$ cross two blue edges $(u_1,v)$ and $(u_2,v)$ at points $p$ and $q$; see also Figure~\ref{fi:rnode-orientation} for an illustration.  Then due to the orientation of these four edges and due to 1-planarity, at least two vertices are inside the cycle $\{z,p,v,q\}$, and at least two vertices are outside this cycle. Since every path connecting a vertex inside the cycle to a vertex outside the cycle passes through $v$ or $z$, $v$ and $z$ are a separation pair of $sk(\mu)$ --  a contradiction since $sk(\mu)$ is 3-connected.

It remains to replace each virtual edge $e_{\nu_i}=(u,v)$ in $\gamma'$ with the corresponding visibility representation $\gamma_{\nu_i}$; see also Figure~\ref{fi:rnodes-4}. Note first that if the edge $(u,v)$ exists in the pertinent graph $G_{\nu_i}$ of $\nu_i$, then it is already drawn in $\gamma'$ as the visibility representing $e_{\nu_i}$ (although using such a drawing may imply a swap operation between $(u,v)$ and the rest of $G_{\nu_i}$). Since {\bf I1-I2} hold for $\gamma_{\nu_i}$, we can assume that $u$ is the leftmost vertical bar of $\gamma_{\nu_i}$, and $v$ is the bottommost horizontal bar of $\gamma_{\nu_i}$. The idea is to merge $\gamma_{\nu_i}$ in $\gamma'$, possibly scaling and/or stretching\footnote{By Invariants {\bf I1} and {\bf I2}, $u$ has only horizontal visibilities and $v$ has only vertical visibilities, and hence we can translate $u$ horizontally and $v$ vertically and then scale $\gamma_{\nu_i}$ so to fit it in a prescribed region of the plane.} $\gamma_{\nu_i}$. This operation increases  the number of bars of either $u$ or $v$ by one unit. If we merge $\gamma_{\nu_i}$ in $\gamma'$ without rotating it, then the number of bars of $v$ is increased by one; see also Figure~\ref{fi:rnodes-5} for an illustration. Else, if we rotate $\gamma_{\nu_i}$ clockwise by $\ph$, then $u$ is the vertex receiving one more bar; see also Figure~\ref{fi:rnodes-6} for an illustration.

With an idea similar to the one used in the proof of Theorem~\ref{th:edge-decomposition}, we can exploit Schnyder trees to obtain a 3-orientation of $sk_B(\mu)$, and, for each virtual edge $e_{\nu_i}=(u,v)$ oriented from $u$ to $v$, ``charge'' the additional bar on $u$. This ensures that each vertex is charged with at most 3 further bars, which leads to at most $12$ bars per vertex, and thus {\bf I3} holds. Note that edge $(s_\mu,t_\mu)$ is the reference edge of $\mu$ and hence it is not replaced with a visibility representation, regardless of its orientation. Also, in a 3-orientation obtained via Schnyder trees, all other edges of $sk_B(\mu)$ incident to $s_\mu$ and $t_\mu$ can be oriented incoming with respect to both $s_\mu$ and $t_\mu$, which guarantees that a bar is added to neither $s_\mu$ nor $t_\mu$ and thus {\bf I1-I2} hold. More precisely, all edges of $sk_B(\mu)$ that are not part of the outer face are oriented incoming with respect to both $s_\mu$ and $t_\mu$ by construction. Furthermore, the two edges of the outer face distinct from $(s_\mu,t_\mu)$ and incident one to $s_\mu$ and one to $t_\mu$ can be freely oriented, and therefore they can be oriented incoming with respect to $s_\mu$ and $t_\mu$.
\end{proof}

\begin{figure}[t]
\centering
\includegraphics[scale=0.4,page=1]{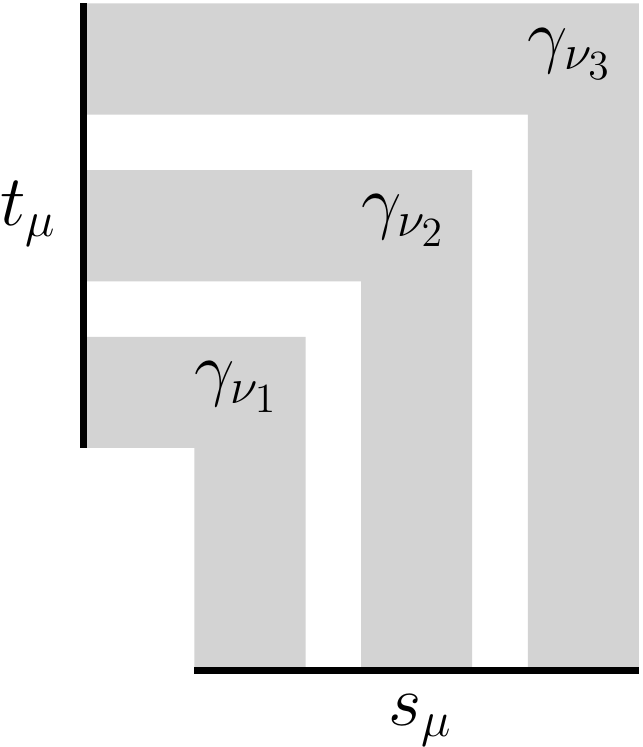}
\caption{ Illustration for the proof of Lemma~\ref{le:pnodes}.}\label{fi:pnodes}
\end{figure}

\begin{lemma}\label{le:pnodes}
Let $\mu$ be a $P$-node of $T'$. Let $\gamma_{\nu_1}$, $\gamma_{\nu_2}$, $\dots$, $\gamma_{\nu_k}$ be the $k > 0$ visibility representations of the $k \geq 0$ children of $\mu$ (excluding $Q$-nodes) for which Invariants {\bf I1--I3} hold. Then $G_\mu$ admits a visibility representation $\gamma_\mu$ that respects Invariants {\bf I1--I3}.
\end{lemma}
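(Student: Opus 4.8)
The plan is to realize the parallel composition geometrically by placing the children's representations along the main diagonal, and then to glue their copies of the two poles into a single bar each. Two structural facts make this clean. First, since $\mu$ is a $P$-node, every vertex of $G_\mu$ other than $s_\mu$ and $t_\mu$ belongs to exactly one pertinent graph $G_{\nu_i}$. Second, by Property~\ref{pr:crossings-in-rnodes} no two crossing edges can belong to distinct children of a $P$-node (their $Q$-nodes would have to descend from a common $R$-node, impossible across different parallel branches). Hence the only interaction among the $\gamma_{\nu_i}$ that I must control is the one forced by the shared poles.

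First I would fix a linear order $\gamma_{\nu_1},\dots,\gamma_{\nu_k}$ consistent with the cyclic order of the parallel components in the embedding, possibly after a flip/swap on $T'$ (admissible, since distinct components never cross, so any order remains $1$-planar). I place them in pairwise disjoint bounding boxes $B_1,\dots,B_k$ along the diagonal, so that both the $x$-range and the $y$-range of $B_{i+1}$ lie entirely to the right of and entirely above those of $B_i$. By Invariants {\bf I1} and {\bf I2}, inside each $\gamma_{\nu_i}$ the pole $s_\mu$ is the unique bottommost horizontal bar and $t_\mu$ the unique leftmost vertical bar. I discard these $2k$ pole bars, introduce a single horizontal bar for $s_\mu$ below all boxes and a single vertical bar for $t_\mu$ to the left of all boxes, and then reconnect: every visibility that inside $\gamma_{\nu_i}$ reached the discarded copy of $s_\mu$ is prolonged straight downward to the new $s_\mu$, and every visibility that reached the discarded copy of $t_\mu$ is prolonged straight leftward to the new $t_\mu$.

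The staircase placement makes these prolongations crossing-free: a downward visibility emanating from $B_i$ has an abscissa strictly larger than all of $B_1,\dots,B_{i-1}$ and strictly smaller than all of $B_{i+1},\dots,B_k$, so on its way to the global bottom it meets neither another box nor a leftward visibility (which lives at a strictly lower ordinate); the symmetric statement holds for the leftward visibilities. Consequently Invariants {\bf I1} and {\bf I2} hold by construction, $s_\mu$ being now the single bottommost horizontal bar and $t_\mu$ the single leftmost vertical bar, and the extracted drawing remains $1$-planar. Invariant {\bf I3} is simply inherited: the merge adds no bar to any non-pole vertex, and each such vertex lives in a single child, where it already had at most $12$ bars. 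Processing $\mu$ costs time linear in $\sum_i |G_{\nu_i}|$, which telescopes to $O(n)$ over the whole tree.

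The one case needing extra care, which I expect to be the main obstacle, is a $Q$-node child of $\mu$, i.e.\ a real edge $(s_\mu,t_\mu)$ parallel to the other components. With the two poles realized as a single horizontal and a single vertical bar meeting in an $L$, no non-degenerate horizontal or vertical visibility can join them, so this edge cannot be drawn locally without violating {\bf I1} or {\bf I2}. By the preprocessing defining $T'$ such a child survives only when the parent of $\mu$ is an $R$-node; I would therefore defer the edge to that parent, drawing it in the skeleton of the parent $R$-node alongside the virtual edge of $\mu$, where the crossing structure is already handled by the machinery of Lemma~\ref{le:rnode}. Checking that this deferred edge can always be placed there in a way compatible with the chosen $1$-planar embedding is the delicate step that completes the proof.
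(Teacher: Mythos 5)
Your proposal is correct and takes essentially the same approach as the paper: the paper also merges the (suitably scaled) children's representations in a diagonal/staircase arrangement, with a single extended horizontal bar for $s_\mu$ at the bottom and a single extended vertical bar for $t_\mu$ at the left (Figure~\ref{fi:pnodes}), and it disposes of $Q$-node children exactly as you do, via the subdivision preprocessing when the parent is not an $R$-node and by deferring the edge to the parent $R$-node otherwise. The ``delicate step'' you leave open is in fact already discharged inside the proof of Lemma~\ref{le:rnode}, which observes that when the real edge $(s_\mu,t_\mu)$ exists it is simply drawn as the visibility representing the virtual edge of $\mu$ in the parent's skeleton, possibly after a swap of $G_\mu$ with respect to that edge.
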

\begin{proof}
If $\mu$ is not the child of an $R$-node, then it does not have any $Q$-nodes among its children (if it had one in $G$, then it was subdivided). Else, if the parent of $\mu$ is an $R$-node $\nu$ and $\mu$ has a $Q$-node as a child, then, as explained in the proof of Lemma~\ref{le:rnode}, the edge corresponding to this $Q$-node is drawn directly in the visibility representation $\gamma_\nu$. Thus, we can assume that $\mu$ does not have any $Q$-node among its children.

Since the visibility representations $\gamma_{\nu_1}$, $\gamma_{\nu_2}$, $\dots$, $\gamma_{\nu_k}$ satisfy Invariants {\bf I1--I3}, we can suitably scale them and extend the bar representing $s_\mu =s_{\nu_1}=\dots=s_{\nu_k}$ and the bar representing $t_\mu =t_{\nu_1}=\dots=t_{\nu_k}$ so to merge all the drawings as shown in Figure~\ref{fi:pnodes}. This construction satisfies Invariants {\bf I1--I3}.
\end{proof}

\begin{figure}[t]
    \centering
    \begin{minipage}[b]{.32\textwidth}
    	\centering
    	\includegraphics[scale=0.5,page=1]{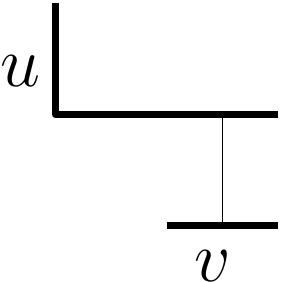}
    	\subcaption{}\label{fi:snodes-1}
    	\vspace{1cm}
    	\includegraphics[scale=0.5,page=2]{snodes}
    	\subcaption{}\label{fi:snodes-2}
    \end{minipage}
    \hfill
    \begin{minipage}[b]{.32\textwidth}
    	\centering
    	\includegraphics[scale=0.5,page=3]{snodes}
    	\subcaption{}\label{fi:snodes-3}
    \end{minipage}
    \begin{minipage}[b]{.32\textwidth}
    	\centering
    	\includegraphics[scale=0.5,page=1]{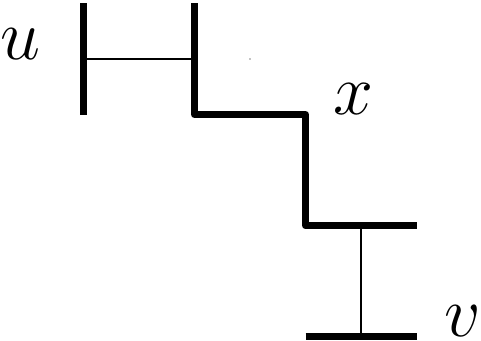}
    	\subcaption{}\label{fi:subdivision-1}
    	\vspace{1cm}
    	\includegraphics[scale=0.5,page=2]{subdivision}
    	\subcaption{}\label{fi:subdivision-2}
    \end{minipage}
    \caption{ (a)--(c) Illustration for the proof of Lemma~\ref{le:snodes}. (d)-(e) Removing subdivision vertices.}
\end{figure}

\begin{lemma}\label{le:snodes}
Let $\mu$ be an $S$-node of $T'$. Let $\gamma_{\nu_1}$, $\gamma_{\nu_2}$, $\dots$, $\gamma_{\nu_k}$ be the $k \geq 0$ visibility representations of the $k \geq 0$ children of $\mu$ (excluding $Q$-nodes) for which Invariants {\bf I1--I3} hold. Then $G_\mu$ admits a visibility representation $\gamma_\mu$ that respects Invariants {\bf I1--I3}, and such that each vertex $s_{\nu_i}=t_{\nu_{i-1}}$ ($i=2,\dots,k$) is  represented by at most four bars.
\end{lemma}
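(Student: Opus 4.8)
The plan is to use the fact that the pertinent graph $G_\mu$ of an $S$-node is a \emph{series composition}. Reading the skeleton cycle of $\mu$ from one pole to the other and ignoring the reference edge $(s_\mu,t_\mu)$, the vertices $s_\mu=x_0,x_1,\dots,x_m=t_\mu$ are encountered in order, and each step $x_{j-1}\to x_j$ is realized either by a single edge (a $Q$-node child) or by an entire child component $\nu_i$ with poles $s_{\nu_i}=x_{j-1}$ and $t_{\nu_i}=x_j$; in particular, the shared vertices of the statement are exactly the internal vertices $s_{\nu_i}=t_{\nu_{i-1}}$ that are common poles of two consecutive non-$Q$ children. First I would lay out the representations $\gamma_{\nu_1},\dots,\gamma_{\nu_k}$ in a monotone staircase that climbs from the bottom-right to the top-left of the plane: $\gamma_{\nu_1}$ is placed lowest and rightmost, each subsequent $\gamma_{\nu_i}$ is placed strictly above and to the left of $\gamma_{\nu_{i-1}}$ inside its own axis-aligned box, and every single $Q$-node edge is drawn as a short visibility bridging two consecutive boxes (see Figures~\ref{fi:snodes-1}--\ref{fi:snodes-3}). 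Since the boxes occupy pairwise disjoint staircase regions, the only interaction between two consecutive children occurs at their common pole, so no spurious visibility is introduced.

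The heart of the argument is the merge at each shared vertex $v=t_{\nu_{i-1}}=s_{\nu_i}$. By {\bf I2} applied to $\gamma_{\nu_{i-1}}$, the vertex $v$ is a single vertical bar that is leftmost in $\gamma_{\nu_{i-1}}$; by {\bf I1} applied to $\gamma_{\nu_i}$, the same vertex is a single horizontal bar that is bottommost in $\gamma_{\nu_i}$. These two bars carry opposite orientation conventions, so to glue the two drawings I would keep both bars and join them into a single connected orthogonal object by routing, through the empty staircase corner between the two boxes, at most two short connector bars from the top of the vertical bar to the left end of the horizontal bar. This produces a connected skeleton for $v$ made of at most four bars, which is exactly the bound claimed; a vertex incident to a $Q$-node edge is reconciled in the same way using the bar manipulations of Lemma~\ref{le:rnode} and needs no more bars. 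Every vertex internal to a single $\gamma_{\nu_i}$ keeps its bars unchanged and hence still satisfies {\bf I3}; combined with the four-bar bound for the shared vertices, this establishes {\bf I3} for $\gamma_\mu$.

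Invariants {\bf I1} and {\bf I2} then follow from the geometry of the staircase. Because $\gamma_{\nu_1}$ sits at the bottom and $s_{\nu_1}=s_\mu$ is bottommost within it, $s_\mu$ remains the bottommost (single, horizontal) bar of $\gamma_\mu$; symmetrically $t_{\nu_k}=t_\mu$ stays the leftmost (single, vertical) bar, since $\gamma_{\nu_k}$ sits at the top-left while every connector is placed to its lower-right. I would dispatch the degenerate cases $k\le 1$ by the same placement rule, including the case in which all children are $Q$-nodes (so $G_\mu$ is just a path $s_\mu\to w\to t_\mu$ through a subdivision vertex, drawn with $w$ as a small corner object); the subsequent removal of the subdivision vertices introduced in passing from $G$ to $G'$ is a separate global step, illustrated in Figures~\ref{fi:subdivision-1}--\ref{fi:subdivision-2}.

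The step I expect to be the main obstacle is precisely this junction merge: reconciling, for one and the same vertex, the vertical-bar representation inherited from the child below with the horizontal-bar representation inherited from the child above, while simultaneously (i) keeping the number of bars bounded by the constant four, (ii) preserving the strict staircase monotonicity that underpins {\bf I1}--{\bf I2}, and (iii) maintaining the $1$-planar embedding and all visibilities without creating unwanted sightlines between boxes. Everything else reduces to routine scaling-and-translation bookkeeping of the kind already used for $P$-nodes in Lemma~\ref{le:pnodes}.
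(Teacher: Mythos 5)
Your proposal follows essentially the same route as the paper's proof: the paper likewise arranges the child representations $\gamma_{\nu_1},\dots,\gamma_{\nu_k}$ in a staircase, draws $Q$-node edges as visibilities, merges at each shared pole the vertical bar of $t_{\nu_{i-1}}$ (from Invariant \textbf{I2}) with the horizontal bar of $s_{\nu_i}$ (from Invariant \textbf{I1}) into a connected ``staircase'' object of at most four bars, deduces \textbf{I1}--\textbf{I3} from the placement, and defers the removal of subdivision vertices to a later global step. The one detail the paper makes explicit that your plan only gestures at is the special drawing of a $Q$-node edge incident to $t_\mu$ (Figure~\ref{fi:snodes-2}), chosen so that no horizontal bar is added to $t_\mu$; this is exactly the case your remark that vertices on $Q$-node edges ``need no more bars'' must cover for \textbf{I2} to survive.
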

\begin{proof}
If $\mu$ has some $Q$-nodes among its children, we draw them as shown in Figure~\ref{fi:snodes-1}. The only exception is when one of them is incident to $t_\mu$, in which case it is drawn as in Figure~\ref{fi:snodes-2} so to avoid the addition of a horizontal bar to $t_\mu$. This ensures Invariants {\bf I1--I2}.

Then we merge the horizontal bar of $s_{\nu_i}=t_{\nu_{i-1}}$ in $\gamma_{\nu_i}$ with the vertical bar of the same vertex in $\gamma_{\nu_{i-1}}$, for $i=2,\dots,k$, as shown in Figure~\ref{fi:snodes-3}. This construction satisfies Invariant {\bf I3}. We remark that each vertex $s_{\nu_i}=t_{\nu_{i-1}}$ is now represented by at most four bars  (this fact will be used in the following), as desired.
\end{proof}

We now describe how to draw the edge $(s_\rho,t_\rho)$ represented by the root $\rho$ of $T'$. We observe that $s_\rho$ and $t_\rho$ are the poles of $\xi$, and hence they are represented in $\gamma_\xi$ by a single bar each. We represent $(s_\rho,t_\rho)$ with a horizontal visibility. To this aim, we add a vertical bar to $s_\rho$ whose bottommost point coincides with the leftmost point of $s_\rho$ and whose topmost point is slightly above the bottommost point of $t_\rho$ (so to ensure the $\varepsilon$-visibility model). The edge $(s_\rho,t_\rho)$ is now represented as a horizontal visibility between this vertical bar and the one representing $t_\rho$.

As a final step, we show how to remove the subdivision vertices used to subdivide some of the edges. Let $(u,v)$ be a subdivided edge corresponding to a $Q$-node of $T$ whose parent is a $P$-node $\mu$. Let $x$ be the subdivision vertex added to split $(u,v)$. As described in the proof on Lemma~\ref{le:snodes}, vertex $x$ is represented by a ``staircase'' of 4 bars. By turning either the visibility $(u,x)$, or the visibility $(v,x)$ into a bar, we get rid of $x$ and realize the edge $(u,v)$ as shown in Figures~\ref{fi:subdivision-1} and~\ref{fi:subdivision-2}. This operation increases  the number of bars used to represent $u$ or $v$ by one. Since the parent of $\mu$ is not an $R$-node, it can only be an $S$-node, and hence both $u$ and $v$ are represented by a set of bars of size at most 4 (Lemma~\ref{le:snodes}). However, both $u$ and $v$ can be incident to many subdivided edges. We now show how to ``charge'' at most two subdivided edges per vertex, meaning that the charged vertex will be the one taking the additional bar. As a consequence, each of these vertices is represented by at most six bars.

The idea is as follows. Let $\mu$ be an $R$-node of $T$ such that it does not have any $R$-node as descendant. Let $\nu_1,\dots,\nu_k$ be its children that are not $Q$-nodes. Since the subtree $T_{\nu_i}$ rooted at $\nu_i$ ($i=1,\dots,k$) does not contain any $R$-node, the pertinent graph $G_{\nu_i}$ is a partial 2-tree. We now show that every partial 2-tree admits an orientation of its edges such that every vertex has at most two outgoing edges. Since 2-trees are 2-degenerate, they can be made empty by iteratively removing a vertex $v$ with degree at most two. Orienting outwards the at most two  edges incident to $v$ leads to the desired orientation. Observe that the penultimate vertex only has one outgoing edge $e$. It is not difficult to see that the order of the removed vertices can be chosen so that the last edge $e$ is a predefined one. By applying this procedure on $G_{\nu_i}$ and choosing $e=(s_{\nu_i},t_{\nu_i})$ as predefined edge, we have that all vertices of $G_{\nu_i}$ have at most two outgoing edges, except for $s_{\nu_i}$ and $t_{\nu_i}$. Hence, the number of bars used to represent $s_{\nu_i}$ and $t_{\nu_i}$ does not increase (recall that $e=(s_{\nu_i},t_{\nu_i})$ is already drawn in $\gamma_\mu$), while the number of bars used to represent any other vertex in $G_{\nu_i}$ is at most 6. Apply the above procedure for all $R$-nodes that do not have any $R$-node as descendant. Afterwards, prune down the subtrees of $T$ rooted at such $R$-nodes and iterate the procedure until there are no more $R$-nodes. Repeat the procedure on the remaining tree.

This algorithm can be implemented to run in $O(n)$ time by avoiding the direct computation of an \opvr, and by storing instead the information required to compute an orthogonal representation $H$ of $G$. This, in particular, removes the scaling operations that only affect the length of the edges. By applying the compaction step of the TSM framework, we finally obtain an \opvr of $G$ in $O(n)$ time on an integer grid of size $O(n) \times O(n)$. Since every vertex is composed of at most $12$ bars in the visibility representation, in the final \opvr it is drawn as an orthogonal polygon with at most $(12-1)\cdot 2$ reflex corners.  This proves Theorem~\ref{th:2conn1plane-ub}.

\section{Implementation and Experiments}\label{se:experiments}
We implemented the optimization algorithm of Theorem~\ref{th:test-opt} in C++, using the GDToolkit library~\cite{gdt-13}. To evaluate the performance of the algorithm in practice, we tested it on a large set of 1-plane graphs, which always admit an \opvr (Theorem~\ref{th:1-planar}). Other than evaluating the running time of the algorithm, we have the two following objectives:

\begin{description}
\item[\textbf{Obj-1.}]  Measuring the vertex complexity of the computed {\opvr}s. In particular, for 3-connected 1-plane graphs the gap between the upper bound of 12 and the lower bound of 2 is intriguing. We expect that in practice the vertex complexity is closer to the lower bound, since the algorithm behind our upper bound imposes strong restrictions on the computed {\opvr}s. For instance, it assumes that crossing-free edges are always drawn as vertical bars, which might not be the case for an optimal solution. 

\item[\textbf{Obj-2.}] Establishing ``how much'' the computed drawings look like rectangle visibility representations, independently of their vertex complexity. To this aim, for every computed \opvr with vertex complexity $k$, we measure the percentage of vertices whose corresponding polygons have $i$ reflex corners, for any integer $i \in [0, \dots, k]$. We recall that our optimization algorithm not only minimizes the vertex complexity, but within all the optimal solutions it computes one having the minimum number of reflex corners (see Theorem~\ref{th:test-opt}). Thus, we always expect a high number of vertices represented with low vertex complexity (ideally as rectangles).
\end{description} 

\begin{figure}[tb]
 \centering
 \includegraphics[scale=0.3]{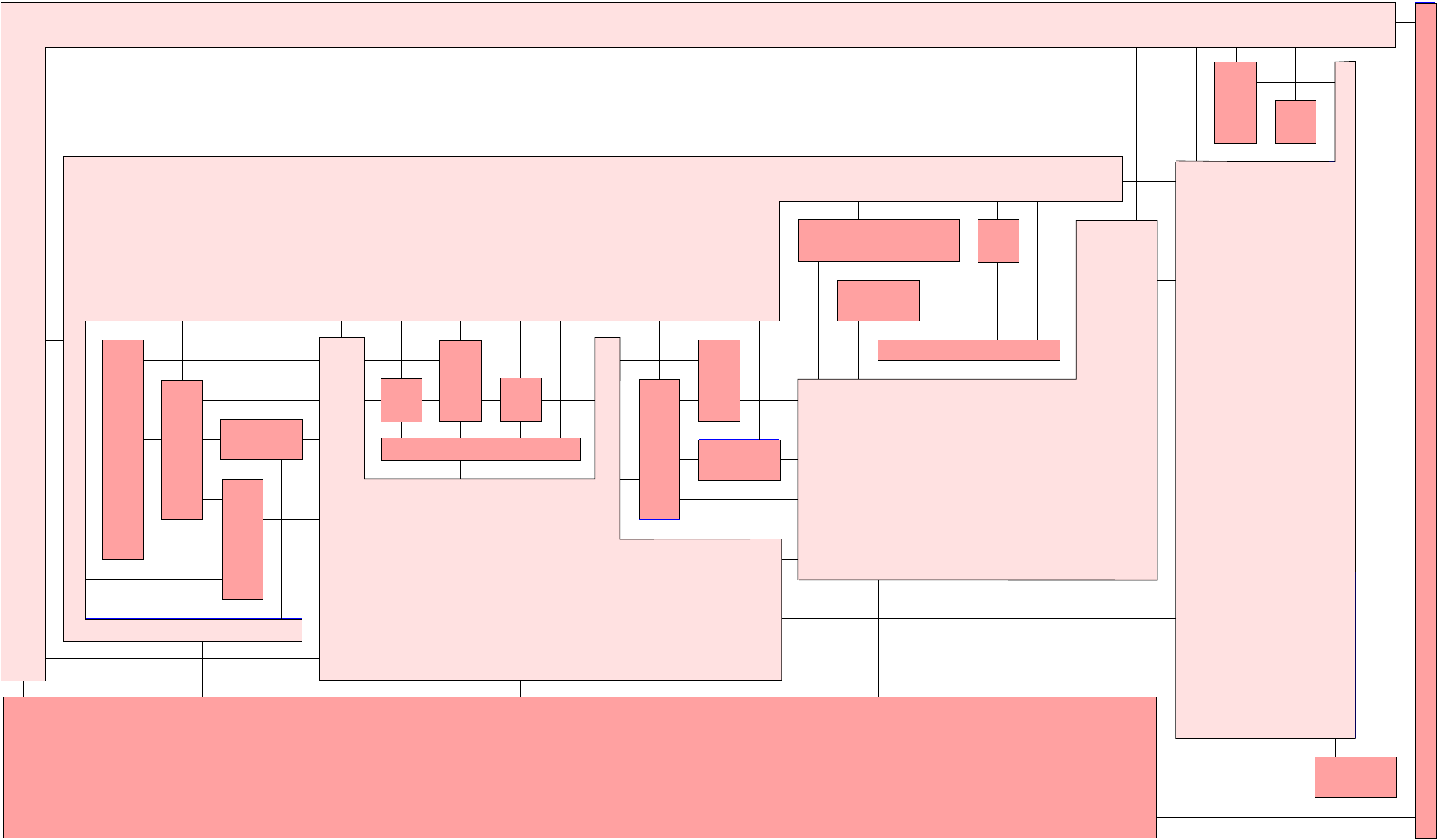}
 \caption{An \opvr with 25 vertices and vertex complexity 3. Rectangular vertices have a darker color.}\label{fi:drawing-example}
\end{figure}

\paragraph{Test suite.}
We generated three different subsets of (simple) 1-plane graphs, which we call {\tt GEN}, {\tt BIC}, and {\tt TRIC}, respectively. Each subset consists of 170 graphs (thus 510 instances in total). The number of vertices of each graph ranges from 20 to 100. The graphs in {\tt GEN} are general 1-plane graphs, while those in {\tt BIC} and in {\tt TRIC} are always 2-connected and 3-connected, respectively. All graphs are maximal, which means that no further edges can be added in their embedding while preserving 1-planarity. Clearly, augmenting a 1-plane graph to a maximal one cannot lower the vertex complexity of its {\opvr}s, whereas it increases the running time required to compute a solution due to the increased number of edges. 

 \begin{figure}[t]
    \centering
    \begin{minipage}[b]{.49\textwidth}
    	\centering
    	\includegraphics[scale=0.34]{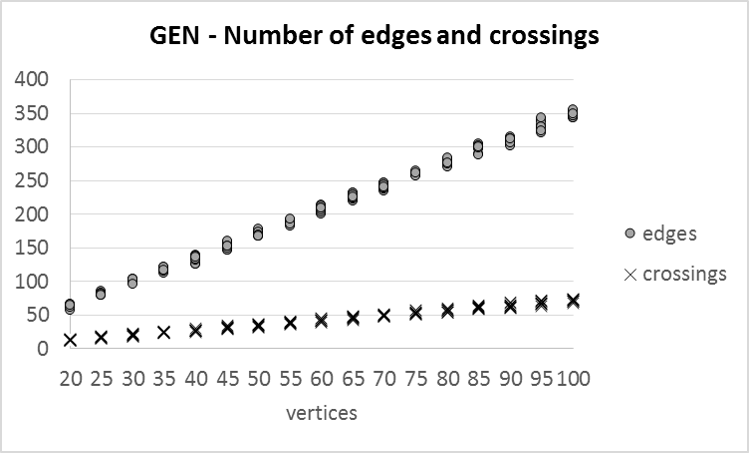}
    	\subcaption{{\tt GEN} - Number of edges and crossings.}\label{ch:chart-GEN-EdgesCrossings}
    \end{minipage}
    \begin{minipage}[b]{.49\textwidth}
    	\centering
    	\includegraphics[scale=0.34]{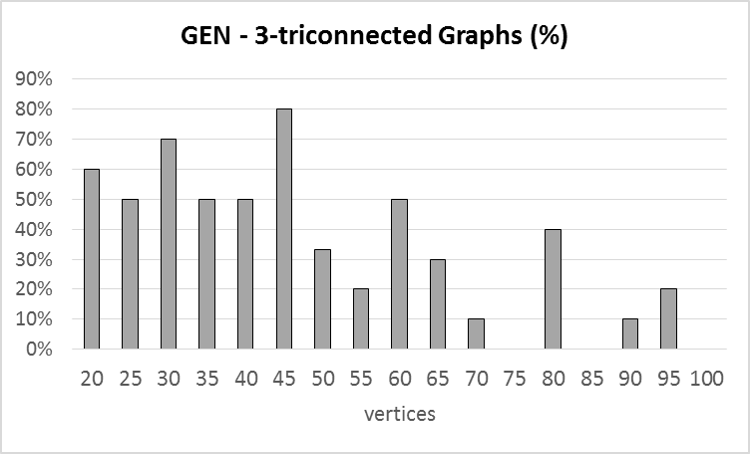}
    	\subcaption{{\tt GEN} - Average \% of 3-connected graphs.}\label{ch:chart-GEN-TricGraphs}
    \end{minipage}
    \begin{minipage}[b]{.49\textwidth}
    	\centering
    	\includegraphics[scale=0.34]{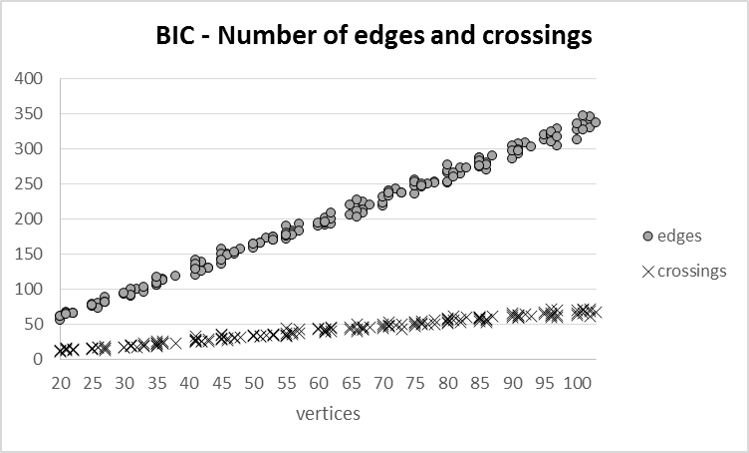}
    	\subcaption{{\tt BIC} - Number of edges and crossings.}\label{ch:chart-BIC-EdgesCrossings}
    \end{minipage}   
    \begin{minipage}[b]{.49\textwidth}
    	\centering
    	\includegraphics[scale=0.34]{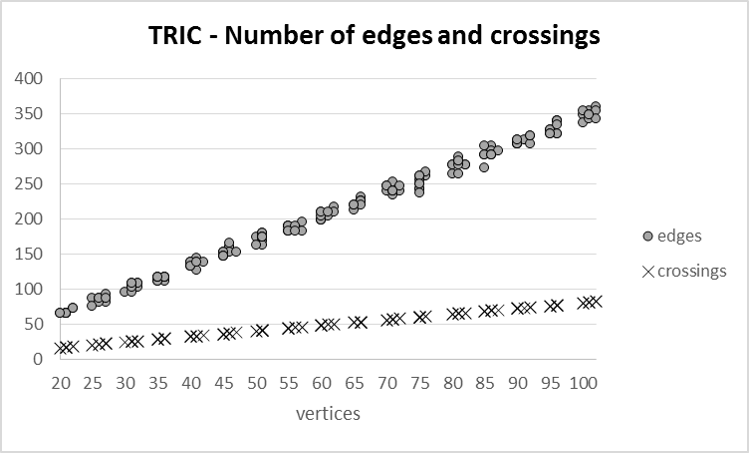}
    	\subcaption{{\tt TRIC} - Number of edges and crossings.}\label{ch:chart-TRIC-EdgesCrossings}
    \end{minipage}  
    \caption{Charts for the experimental data. The measured values are reported with dependence on the number of vertices ($x$-axis).}\label{ch:additional}
\end{figure} 

Each graph with $n$ vertices in {\tt GEN} is obtained as follows. We first randomly generate an $n$-vertex 2-connected plane graph with the algorithm described in~\cite{DBLP:journals/tc/BertolazziBD00}. We then add as many edges as possible such that each new edge crosses a previously uncrossed edge of the graph and no multiple edge is introduced. We finally add a random sequence of uncrossed edges to get maximality (that is, no further edge can be added without either violating 1-planarity or introducing a multiple edge). Although in principle every maximal 1-plane graph can be generated with this approach, we observed that in practice all instances in {\tt GEN} admitted an \opvr with vertex complexity at most one (see the results below). Hence, we generated the sets {\tt BIC} and {\tt TRIC}, which contain more difficult instances, obtained by explicitly adding the 1-plane configurations used to prove our lower bounds. For a given positive integer $n$, a graph in {\tt BIC} is generated as follows: $(i)$ start from a randomly generated $k$-vertex 2-connected plane graph, where $k$ is a fraction of $n$ (we chose $k = 0.2$, as we observed that larger values give rise to graphs whose {\opvr}s have smaller vertex complexity); $(ii)$ perform a random sequence of operations, where each operation adds an augmented B-, or W-, or T-configuration, or a new crossing edge, or a new pair of crossing edges to the graph, until the number of vertices reaches or exceeds $n$ (multiple edges are not added); $(iii)$ add a final random sequence of uncrossed edges to get maximality. With this approach the resulting graph might have a number of vertices slightly larger than $n$ (at most $n+3$). The graphs in {\tt TRIC} are generated analogously, but with the following two variants, which are needed to keep the graphs 3-connected: $(i)$ the initial 2-connected graph is randomly triangulated before adding 1-plane configurations; $(ii)$ no W-configuration is added, and each augmented B-configuration is added only if it is possible to connect one of its internal vertices to the rest of the graph using an additional crossing edge. 

The average density of the {\tt GEN} graphs is 3.4 and, on average, 41.2\% of their edges are crossing edges: the variance for these two parameters is very low. About 33.7\% of these graphs are 3-connected (see Figures~\ref{ch:chart-GEN-EdgesCrossings} and~\ref{ch:chart-GEN-TricGraphs}). The {\tt BIC} and {\tt TRIC} graphs have an average density similar to that of the {\tt GEN} graphs: 3.2 for {\tt BIC} and 3.4 for {\tt TRIC} (Figures~\ref{ch:chart-BIC-EdgesCrossings} and~\ref{ch:chart-TRIC-EdgesCrossings}). The percentage of crossing edges in the {\tt BIC} graphs is very close to that of the {\tt GEN} graphs, while for the {\tt TRIC} graphs it is slightly higher (47.8\% on  average).

\begin{figure}[t!]
    \centering
    \begin{minipage}[b]{.49\textwidth}
    	\centering
    	\includegraphics[width=0.9\textwidth]{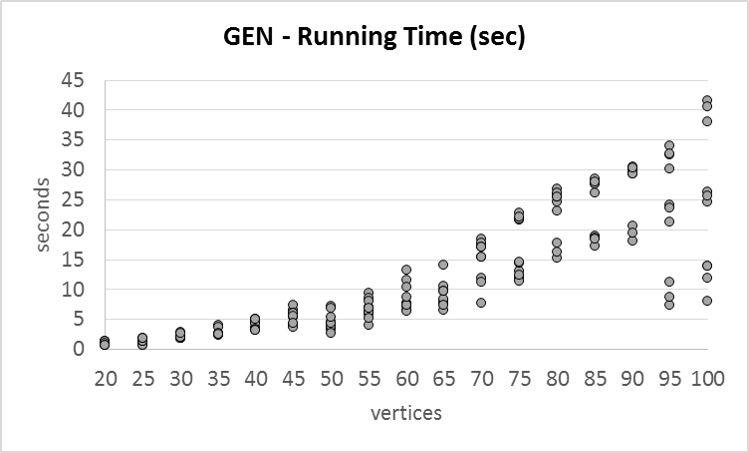}
    	\subcaption{Running time.}\label{ch:chart-GEN-Runtime}
    \end{minipage}
    \begin{minipage}[b]{.49\textwidth}
    	\centering
    	\includegraphics[width=0.9\textwidth]{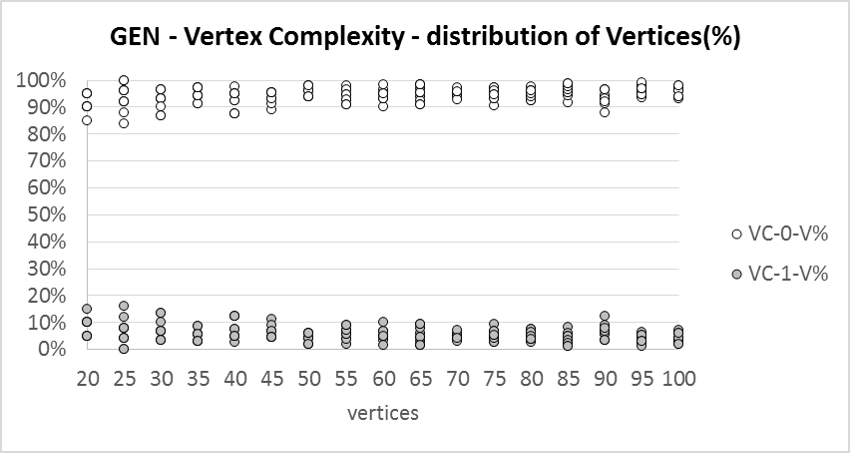}
    	\subcaption{\% of vertices with complexity $i$ (VC-$i$-V\%).}\label{ch:chart-GEN-VertexComplexityDistrV}
    \end{minipage}    
    \begin{minipage}[b]{.49\textwidth}
    	\centering
    	\includegraphics[width=0.9\textwidth]{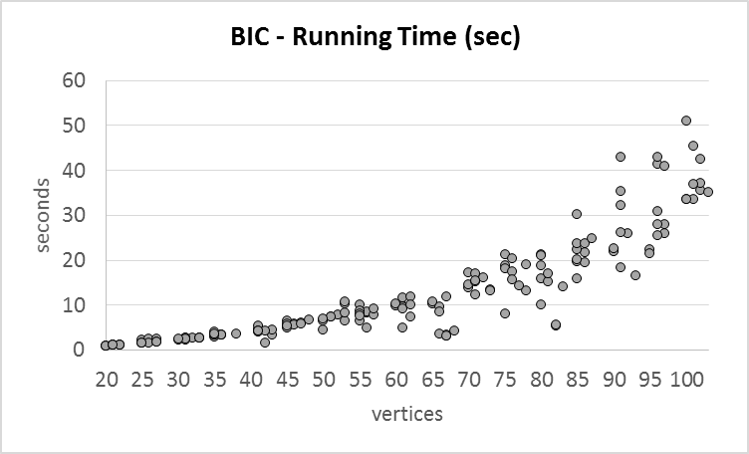}
    	\subcaption{Running time.}\label{ch:chart-BIC-Runtime}
    \end{minipage}
   \begin{minipage}[b]{.49\textwidth}
    	\includegraphics[width=0.9\textwidth]{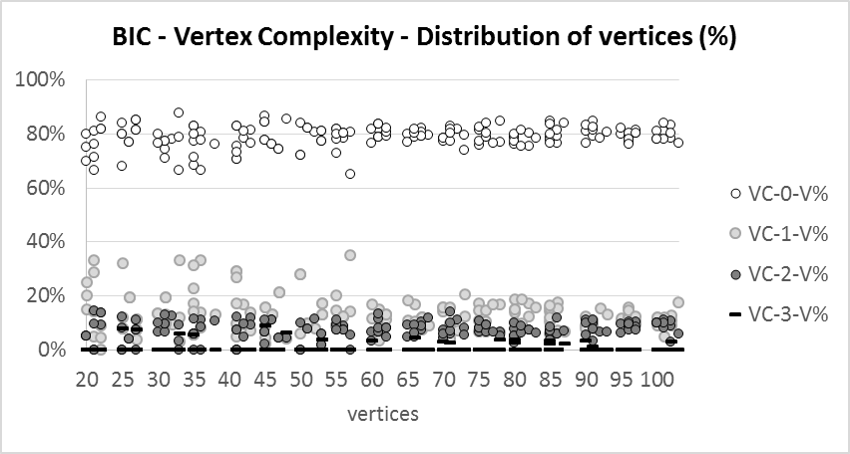}
    	\subcaption{\% of vertices with complexity $i$ (VC-$i$-V\%).}\label{ch:chart-BIC-VertexComplexityDistrV}
    \end{minipage}     
    \begin{minipage}[b]{.49\textwidth}
    	\centering
    	\includegraphics[width=0.9\textwidth]{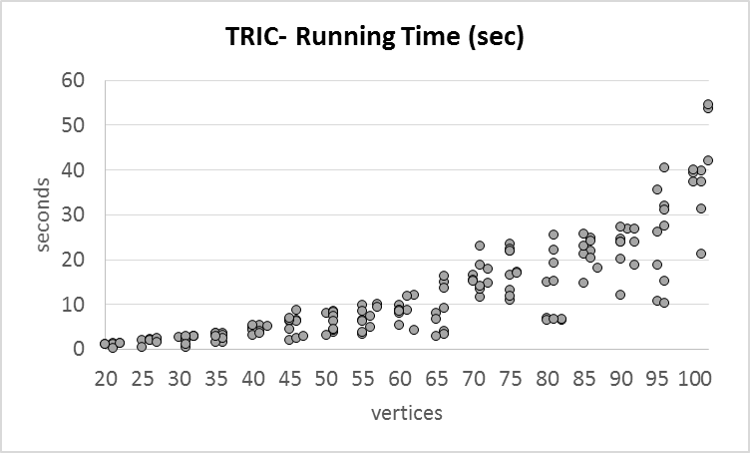}
    	\subcaption{Running time.}\label{ch:chart-TRIC-Runtime}
    \end{minipage}
    \begin{minipage}[b]{.49\textwidth}
    	\includegraphics[width=0.9\textwidth]{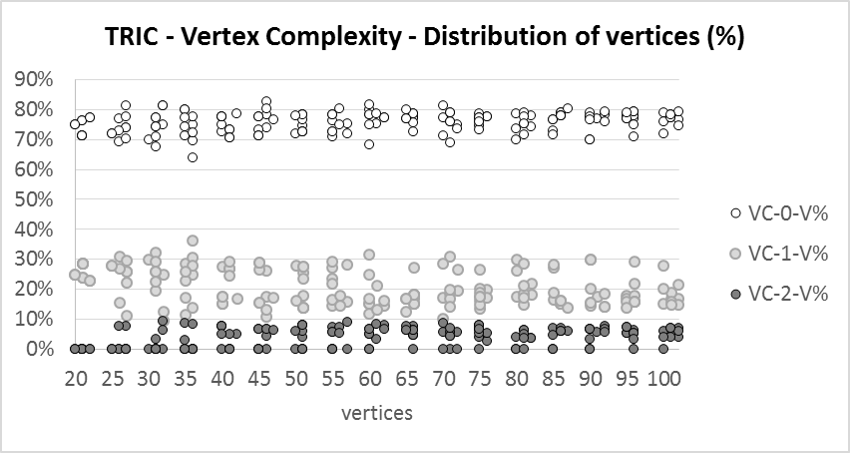}
    	\subcaption{\% of vertices with complexity $i$ (VC-$i$-V\%).}\label{ch:chart-TRIC-VertexComplexityDistrV}
    \end{minipage}
     \begin{minipage}[b]{.49\textwidth}
    	\centering
    	\includegraphics[width=0.9\textwidth]{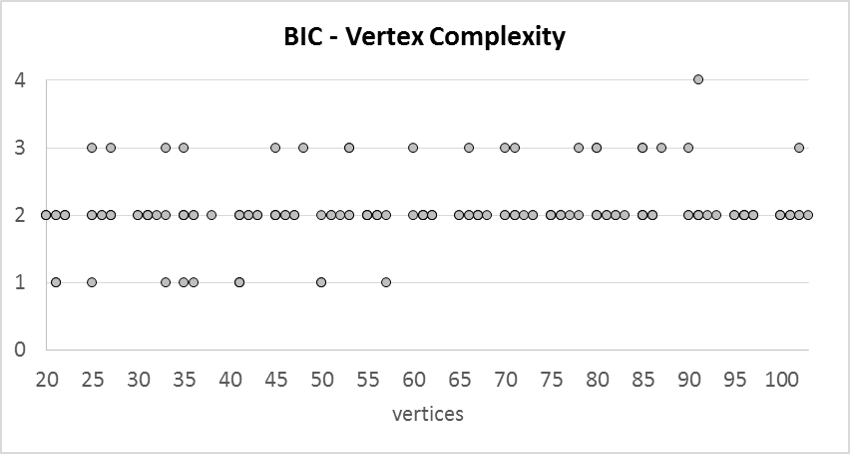}
    	\subcaption{{\tt} BIC - Vertex complexity.}\label{ch:chart-BIC-VertexComplexity}
    \end{minipage}
    \begin{minipage}[b]{.49\textwidth}
    	\centering
    	\includegraphics[width=0.9\textwidth]{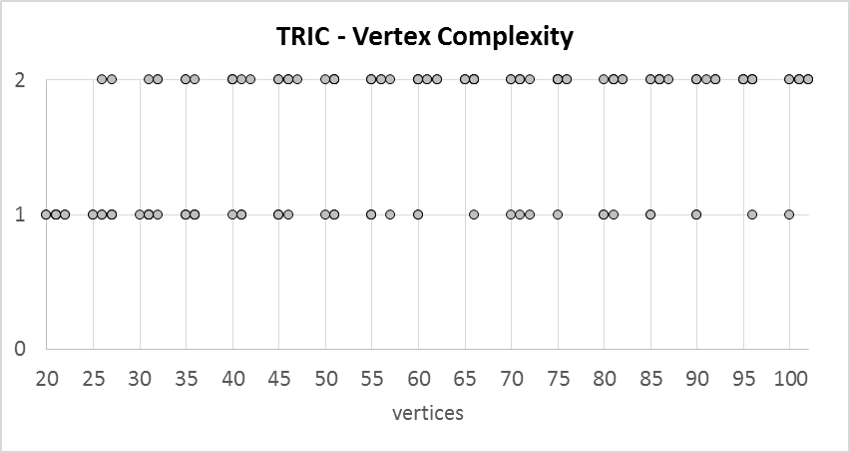}
    	\subcaption{{\tt} TRIC - Vertex complexity.}\label{ch:chart-TRIC-VertexComplexity}
    \end{minipage}        
    \caption{Charts (scatter plots) that summarize some experimental data. The measured values are reported with dependence on the number of vertices ($x$-axis).}\label{ch:all}
\end{figure}

\paragraph{Results.}
The computations have been executed on a common laptop, equipped with an Intel I7 processor and 8 GB of RAM. The software ran in the Oracle VirtualBox environment, under the Linux Ubuntu OS. For the {\tt GEN} graphs, the optimization drawing algorithm took less than 15 seconds for all instances up to 60 vertices, and about 41 seconds on the largest instance, having 100 vertices and 355 edges (Figure~\ref{ch:chart-GEN-Runtime}). Concerning the vertex complexity, the optimal solutions of all {\tt GEN} graphs required only vertex complexity 1, except two of them that had an \opvr with vertex complexity 0. Figure~\ref{ch:chart-GEN-VertexComplexityDistrV} shows, for each instance, the percentage of vertices with 0 (i.e., rectangular vertices) and with one reflex corner: the percentage of vertices drawn as rectangles is around 90\%, and more than 80\% for every instance. Hence, a big portion of each drawing looks like a rectangle visibility representation.

The running times for {\tt BIC} and {\tt TRIC} reflect the behavior observed for the {\tt GEN} graphs. However, the largest instances of {\tt BIC} and {\tt TRIC} often appear to be computationally more expensive (Figures~\ref{ch:chart-BIC-Runtime} and~\ref{ch:chart-TRIC-Runtime}). The vertex complexity required by the different instances is shown in Figures~\ref{ch:chart-BIC-VertexComplexity} and~\ref{ch:chart-TRIC-VertexComplexity}. Every instance of {\tt TRIC} admitted a drawing with vertex complexity either 1 (37.65\% of the instances) or 2 (62.35\%), while the {\tt BIC} graphs also required vertex complexity 3 (11.76\% of the instances) and, in one case, vertex complexity 4; however, the majority of the instances (80.59\%) required vertex complexity 2.
For each instance, the distribution of the number of vertices drawn with $i$ reflex corners, where $i$ ranges from 0 to the vertex complexity required by that instance, is depicted in Figures~\ref{ch:chart-BIC-VertexComplexityDistrV} and~\ref{ch:chart-TRIC-VertexComplexityDistrV}. To avoid visual clutter,  we did not report the data about the unique drawing with vertex complexity 4 in the chart of Figure~\ref{ch:chart-BIC-VertexComplexityDistrV}; this drawing has only two vertices with 4 reflex corners (from a total of 91 vertices). From the charts, one can see that the percentage of vertices drawn as rectangles is still very high (around 80\% for {\tt BIC} and around 75\% for {\tt TRIC}). 

Overall, the experimental results confirmed our expectations about {\bf Obj-1} and {\bf Obj-2}. 
An example of an \opvr computed with our algorithm is depicted in Fig~\ref{fi:drawing-example}.

\section{Conclusions and Open Problems}\label{se:conclusions}

In this paper we have introduced the notion of ortho-polygon visibility representations (\opvr{s}), a generalization of rectangle visibility representations where vertices can be represented as orthogonal polygons instead of rectangles. We have provided a quadratic-time algorithm that tests embedded graphs for representability and, if the test is affirmative, it computes an embedding-preserving \opvr with minimum vertex complexity, i.e., with the minimum number of reflex corners per vertex. Motivated by recent results on rectangle visibility representations~\cite{SoCG}, we have studied \opvr{s} of 1-planar graphs. We have shown that for 3-connected 1-plane graphs an \opvr with vertex complexity at most 12 can be computed in linear-time. We also showed that the vertex complexity of an \opvr of a 3-connected 1-plane graph is at least 2 for some instances. For 2-connected 1-plane graphs, the vertex complexity is $\Omega(n)$ for some instances, but if the graphs do not have W-configurations, a 1-plane embedding that guarantees constant vertex complexity can be constructed in $O(n)$ time. Finally, we ran an experimental study to estimate the vertex complexity of \opvr{s} of 1-plane graphs in practice.

The results in this paper naturally raise interesting open problems. Among them are: 

\begin{enumerate}
\item Close the gap between the upper bound and the lower bound on the vertex complexity of {\opvr}s of 3-connected 1-plane graphs (see Theorems~\ref{th:3conn-ub} and \ref{th:3conn-lb}). 
\item As shown in Section~\ref{se:experiments}, many vertices in an optimal \opvr are rectangles in practice. We find it interesting to study the problem of computing {\opvr}s that maximize the number of rectangular vertices, even at the expense of sub-optimal vertex complexity. 
\item Theorem~\ref{th:2conn1plane-ub} constructs 1-planar embeddings that guarantee constant vertex complexity if the input does not have W-configurations. What 2-connected 1-plane graphs admit a 1-planar {\opvr} with constant vertex complexity? 
\end{enumerate}

{\small \bibliography{paper}}
\bibliographystyle{abbrv}

\appendix

\clearpage

\renewcommand\thesection{\Alph{section}}

\section{Orthogonal Representations and Network Flow Model}\label{ap:orthorep-flow}

In this section, we recall basic definitions and main results related to the problem of computing orthogonal representations exploiting the network flow model by Tamassia. We refer the reader to~\cite{dett-gdavg-99,Garg1997} for further details.

Let $G$ be a plane graph (possibly with multiple edges and self-loops) whose maximum vertex degree is four. Let $e=(u,v)$ be an edge of $G$. The two possible orientations $(u,v)$ and $(v,u)$ of $e$ are called \emph{darts}. A dart is said to be \emph{counterclockwise} with respect to face $f$ if $f$ is on the left hand side when walking along the dart according to its orientation. Denote by $D(u)$ the set of darts exiting from $u$ and by $D(f)$ the set of counterclockwise darts with respect to $f$. 

An \emph{orthogonal representation} of $G$ is an assignment to each dart $(u,v)$ of two values $\alpha(u,v) \in \{1,2,3,4\}$ and $\beta(u,v) \in \mathbb{N}$  that satisfies the following conditions.

\begin{description}
\item[\bf C1.] $1 \leq \alpha(u,v) \leq 4$;
\item[\bf C2.] $\beta(u,v) \geq 0$;
\item[\bf C3.] $\sum_{(u,v) \in D(u)}\alpha(u,v)=4$;
\item[\bf C4.] For each internal face $f$: $\sum_{(u,v) \in D(f)}(\alpha(u,v)+\beta(v,u)-\beta(u,v))=2\deg(f)-4$;
\item[\bf C5.] For the outer face $f_{ext}$: $\sum_{(u,v) \in D(f_{ext})}(\alpha(u,v)+\beta(v,u)-\beta(u,v))=2\deg(f)+4$.
\end{description}

The value $\alpha(u,v)\cdot \ph$ is the angle that dart $(u,v)$ forms with the dart following it in the circular counterclockwise order around $u$, while the value $\beta(u,v)$ is the number of bends of $\ph$ along the dart $(u,v)$. Condition {\bf C1} expresses the fact that the sum of angles around each vertex is $2\p$, while  {\bf C2} (respectively {\bf C3}) expresses the fact that the sum of the angles at the vertices and bends of an internal face (respectively outer) is equal to $\pi (p-2)$ (respectively $\pi (p+2)$), where $p$ is the number of such angles.

\medskip

An orthogonal representation of $G$ with the minimum number of bends can be computed by means of a flow network $N$. In the flow network $N$, each unit of flow corresponds to a $\ph$ angle, each vertex supplies 4 units of flow, and each face consumes an amount of flow proportional to its degree. Bends along edges correspond to unit of flows transferred across adjacent faces, and each bend has a unit cost in the network. The flow network $N$ is constructed as follows. The nodes of network $N$ are the vertices and faces of $G$.  Each \emph{vertex-node} $v$ supplies $\sigma(v)=4$ units of flow, and each \emph{face-node} $f$ consumes $\tau(f)$ units of flow, where
\[
\tau(f) =   \left\{ \begin{array}{l l}
        2 \deg(f) - 4  & \mbox{if $f$ is an internal face}\\
        2 \deg(f) + 4  & \mbox{if $f$ is the outer face.}
        \end{array} \right.
\]
By Euler's formula, $\sum_v \sigma(v) = \sum_f \tau(f)$, i.e., the total supply is equal to the total consumption.

\noindent For each dart $(u,v)$ of $G$, with faces $f$ and $g$ on its left and right, respectively, $N$ has two arcs:
\begin{itemize}
    \item an arc $(u,f)$ with lower bound $\lambda(v,f)=1$, capacity $\mu(v,f)=4$, and cost $\chi(v,f)=0$;
    \item an arc $(f,g)$ with lower bound $\lambda(v,f)=0$, capacity $\mu(v,f)=+\infty$, and cost $\chi(v,f)=1$;
\end{itemize}

The conservation of flow at the vertices expresses the fact that the sum of the angles around a vertex is equal to $2 \pi$.  The conservation of flow at the faces expresses the fact that the sum of the angles at the vertices and bends of an internal face is equal to $\pi (p-2)$, where $p$ is the number of such angles.  For the outer face, the above sum is equal to $\pi(p+2)$.

It can be shown that every feasible flow $\phi$ in network $N$ corresponds to an admissible orthogonal representation for graph $G$, whose number of bends is equal to the cost of flow $\phi$.  Namely, let $\Phi$ be a flow of $N$ with cost $b$. Then, for each dart $(u,v)$  whose associated arcs of $N$ are $(u,f)$ and $(f,g)$, we set $\alpha(u,v)=\Phi(u,f)$ and $\beta(u,v) = \Phi(f,g)$. On the other hand, by just setting $\Phi(u,f)=\alpha(u,v)$ and $\Phi(f,g)=\beta(u,v)$, an orthogonal representation $H$ with at most $b$ bends is transformed into a feasible flow $\Phi$ of $N$ with cost $b$. Hence, the following theorem summarizes the above discussion.

\begin{theorem}[see e.g.~\cite{dett-gdavg-99}]
Let $G$ be a plane graph with $n$ vertices and maximum vertex degree four. An orthogonal representation $H$ of $G$ with the minimum number of bends can be computed in $O(T(n))$ time, where $T(n)$ is the time for computing a min-cost flow of the flow network $N$ associated with $G$.
\end{theorem}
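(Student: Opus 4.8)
The plan is to prove the theorem by exhibiting a cost-preserving bijection between the feasible flows of $N$ and the orthogonal representations of $G$, and then reducing bend-minimization to a single min-cost flow computation. Given a feasible flow $\Phi$ of $N$, I would define an orthogonal representation by setting, for each dart $(u,v)$ with incident faces $f$ (on its left) and $g$ (on its right), $\alpha(u,v)=\Phi(u,f)$ (the flow on the vertex-to-face arc) and $\beta(u,v)=\Phi(f,g)$ (the flow on the face-to-face arc); reading the same formulas backwards sends an orthogonal representation to an arc assignment of $N$. The substance of the proof is to show that feasibility of the flow is equivalent to conditions \textbf{C1}--\textbf{C5}, and that the cost of $\Phi$ equals the number of bends of the corresponding representation $H$.

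First I would verify the forward direction. Conditions \textbf{C1} and \textbf{C2} are immediate from the arc bounds: the vertex-to-face arcs have lower bound $1$ and capacity $4$, forcing $1\le\alpha(u,v)\le4$, while the face-to-face arcs have lower bound $0$, forcing $\beta(u,v)\ge0$. Condition \textbf{C3} is exactly conservation of flow at the vertex-node $u$: its supply $\sigma(u)=4$ is routed entirely along the arcs to face-nodes, one per dart of $D(u)$, so $\sum_{(u,v)\in D(u)}\alpha(u,v)=4$. The delicate step, which I expect to be the main obstacle, is deriving \textbf{C4} and \textbf{C5} from conservation at a face-node $f$. Here one must track, for each dart $(u,v)\in D(f)$, that the outgoing bend-arc carries $\Phi(f,g)=\beta(u,v)$ while the reverse dart $(v,u)$ contributes an incoming bend-arc carrying $\Phi(g,f)=\beta(v,u)$, so the net flow $f$ exchanges with its neighbours across that edge is $\beta(v,u)-\beta(u,v)$. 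Writing the conservation equation at $f$ (inflow minus outflow equals the demand $\tau(f)$) and substituting $\tau(f)=2\deg(f)-4$ for an internal face, respectively $\tau(f)=2\deg(f)+4$ for the outer face, reproduces exactly \textbf{C4}, respectively \textbf{C5}. Keeping the sign conventions of the $\ph$ angles straight—a bend convex toward $f$ is a $\ph$ angle inside $f$ but a $\pt$ angle inside $g$—is the part that demands care.

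For the reverse direction I would check that a representation satisfying \textbf{C1}--\textbf{C5} induces a nonnegative arc assignment obeying all lower bounds and capacities, with conservation holding at every node: \textbf{C3} gives conservation at vertex-nodes and \textbf{C4}--\textbf{C5} at face-nodes, reversing the computation above, while the global balance $\sum_v\sigma(v)=\sum_f\tau(f)$ is Euler's formula, already noted. To finish, I would observe that each face-to-face arc has unit cost, so the cost of $\Phi$ equals $\sum_{\mathrm{darts}}\beta(u,v)$; since every physical bend of an edge forms a $\ph$ angle on exactly one of its two sides, it is counted by exactly one of $\beta(u,v),\beta(v,u)$, and this sum is precisely the total number of bends of $H$. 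Hence a minimum-cost feasible flow of $N$ corresponds to a minimum-bend orthogonal representation and vice versa. Computing such a flow takes $O(T(n))$ time by hypothesis; since $N$ has $O(n)$ nodes and arcs, building $N$ and translating the optimal flow back into $H$ via the formulas above costs only $O(n)$, so the overall running time is $O(T(n))$.
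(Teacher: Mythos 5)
Your proposal is correct and follows essentially the same route as the paper: the paper's Appendix~A establishes exactly the correspondence $\alpha(u,v)=\Phi(u,f)$, $\beta(u,v)=\Phi(f,g)$ that you use, with conservation at vertex-nodes and face-nodes yielding conditions \textbf{C1}--\textbf{C5} and the unit-cost face-to-face arcs making the flow cost equal the total bend count, so a min-cost flow gives a bend-minimum representation. Your additional care with the sign convention (each physical bend is a $\frac{\pi}{2}$ angle on exactly one side, hence counted in exactly one of $\beta(u,v),\beta(v,u)$) and the $O(n)$ construction overhead match the paper's intended argument.
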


\section{The $SPQR$-tree Decomposition}\label{ap:spqrtree}

The following definitions and observations are useful for the proof of Theorem~\ref{th:2conn1plane-ub}.

Let $G$ be a 2-connected graph.
A \emph{separation pair} is a pair of vertices whose removal disconnects $G$.
A \emph{split pair} is either a separation pair or a pair of adjacent vertices.
A \emph{split component} of a split pair $\{u,v\}$ is either an edge $(u,v)$ or a maximal subgraph $G_{uv} \subset G$ such that $\{u,v\}$ is not a split pair of $G_{uv}$.
Vertices $\{u,v\}$ are the \emph{poles} of $G_{uv}$.
The \emph{$SPQR$-tree} $T$ of $G$ with respect to an edge $e$ is a rooted tree that describes a recursive decomposition of $G$ induced by its split pairs~\cite{DBLP:journals/siamcomp/BattistaT96}.
In what follows, we call \emph{nodes} the vertices of $T$, to distinguish them from the vertices of $G$.
The nodes of $T$ are of four types $S$,$P$,$Q$, or $R$.
Each node $\mu$ of $T$ has an associated 2-connected multigraph called the \emph{skeleton of $\mu$} and denoted by $sk(\mu)$.
At each step, given the current split component $G^*$, its split pair $\{s,t\}$, and a node $\nu$ in $T$, the node $\mu$ of the tree corresponding to $G^*$ is introduced and attached to its parent vertex $\nu$, while the decomposition possibly recurs on some split component of $G^*$.
At the beginning of the decomposition the parent of $\mu$ is a $Q$-node corresponding to $e=(u,v)$, $G^* = G \setminus e$, and $\{s,t\} = \{u,v\}$.

\begin{figure}
\centering
\begin{minipage}[b]{.3\textwidth}
    	\centering
    	\includegraphics[scale=0.4, page=1]{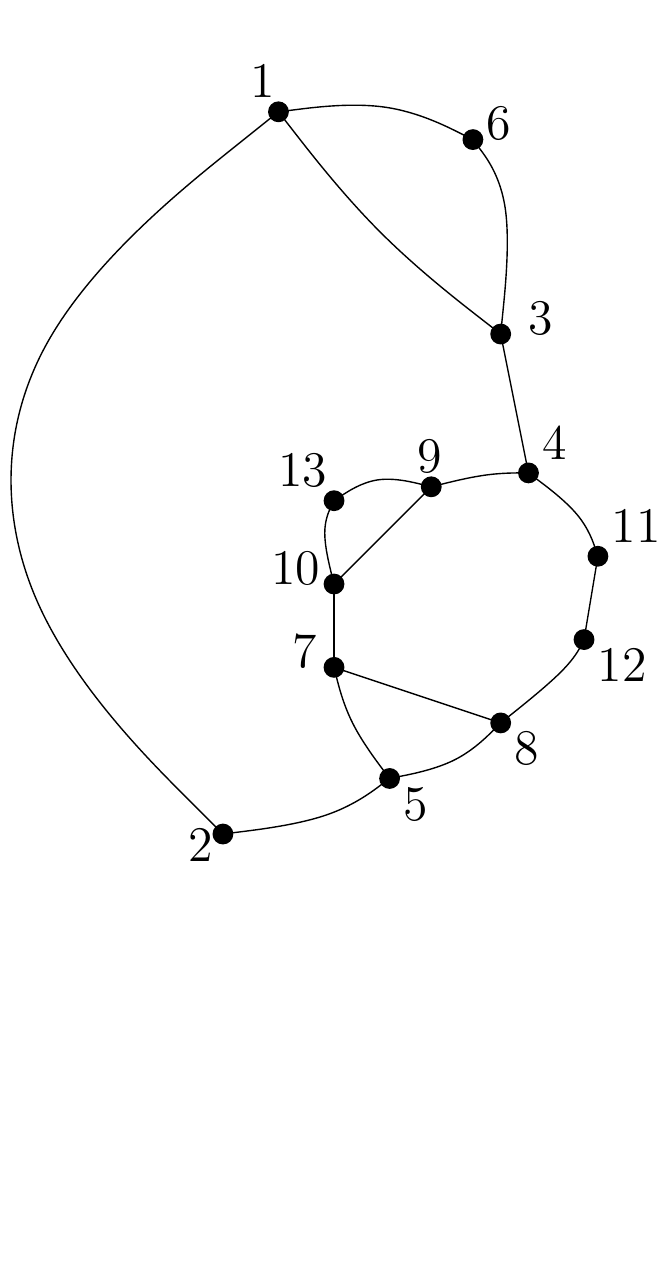}
    	\subcaption{}\label{fi:SPQR-example.1}
    \end{minipage}
    \begin{minipage}[b]{.65\textwidth}
    	\centering
    	\includegraphics[scale=0.5, page=2]{figures/SPQR-example}
    	\subcaption{}\label{fi:SPQR-example.2}
    \end{minipage}
\caption{\label{fi:SPQR-example}(a) A graph $G$; (b) The $SPQR$-tree $T$ of $G$. For each node that is not a $Q$-tree the skeleton is depicted in the gray balloons; for $Q$-nodes the corresponding edge is shown.}
\end{figure}

\textbf{Base case}: $G^*$ consists of a single edge between $s$ and $t$. Then, $\mu$ is a $Q$-node whose skeleton is $G^*$ itself plus the reference edge between $s$ and $t$.

\textbf{Parallel case}: The split pair $\{s,t\}$ has $G_1,\dots,G_k$ ($k \geq 2$) split components.
Then, $\mu$ is a $P$-node whose skeleton is a set of $k+1$ parallel edges between $s$ and $t$, one for each split component $G_i$ plus the reference edge between $s$ and $t$.
The decomposition recurs on $G_1,\dots,G_k$ with $\mu$ as parent node.

\textbf{Series case}: $G^*$ is not 2-connected and it has at least one cut vertex (a vertex whose removal disconnects $G^*$).
Then, $\mu$ is an $S$-node whose skeleton is defined as follows.
Let $v_1,\dots,v_{k-1}$, where $k \geq 2$, be the cut vertices of $G^*$.
The skeleton of $\mu$ is a path $e_1,\dots,e_k$, where $e_i= (v_{i-1},v_i)$, $v_0=s$ and $v_k=t$, plus the reference edge between $s$ and $t$ which makes the path a cycle.
The decomposition recurs on the split components corresponding to each $e_1,\dots,e_k$ with $\mu$ as parent node.

\textbf{Rigid case}: None of the other cases is applicable.
A split pair $\{s',t'\}$ is maximal with respect to $\{s,t\}$, if for every other split pair $\{s^*,t^*\}$, there is a split component that includes the vertices $s',t',s,t$.
Let $\{s_1,t_1\},\dots,\{s_k,t_k\}$ be the maximal split pairs of $G^*$ with respect to $\{s,t\}$ ($k \geq 1$), and, for $i=1,\dots,k$, let $G_i$ be the union of all the split components of $\{s_i,t_i\}$.
Then $\mu$ is an $R$-node whose skeleton is obtained from $G^*$ by replacing each component $G_i$ with an edge between $s_i$ and $t_i$, plus the reference edge $(s,t)$.
The decomposition recurs on each $G_i$ with $\mu$ as parent node.

\medskip

Figure~\ref{fi:SPQR-example} shows a graph and its $SPQR$-tree. For each node that is not a $Q$-tree the skeleton is depicted; for $Q$-nodes the corresponding edge is shown. The $SPQR$-tree $T$ of a graph $G$ with $n$ vertices and $m$ edges has $m$ $Q$-nodes and $O(n)$ $S$-, $P$-, and $R$-nodes. Also, the total number of vertices of the skeletons stored at the nodes of $T$ is $O(n)$.

\medskip

If $G$ is an embedded graph, then each pertinent graph $G_\mu$ of a node $\mu$ of $T$ is also an embedded graph.  Furthermore, the skeleton $sk(\mu)$ of $\mu$ inherits an embedding from $G_\mu$. For our purposes, we observe that if $G$ is a 1-plane graph, then the skeleton of an $R$-node is also a 1-plane graph. Moreover, we remark that the skeleton of an $R$-node is $3$-connected by definition.

The $SPQR$-tree can also be exploited to modify the embedding of $G$. A  split component can be \emph{flipped} around its poles, hence reversing the order of the edges of the split component around its poles.  A \emph{swap} operation consists of exchanging the position of two split components of the same split pair. If $G$ is 1-plane, both these operations modify the embedding of $G$ without introducing additional crossings, and thus preserve 1-planarity.

\end{document}